\documentclass[11pt,a4paper]{article}
\usepackage[utf8]{inputenc}
\usepackage[T1]{fontenc}
\usepackage{amsmath,amssymb,amsthm}
\usepackage[margin=2.5cm]{geometry}
\usepackage{xcolor}
\usepackage{enumitem}
\usepackage[colorlinks=true,allcolors=black]{hyperref}

\newtheorem{theorem}{Theorem}[section]
\newtheorem{proposition}[theorem]{Proposition}

\newtheorem{lemma}[theorem]{Lemma}
\theoremstyle{definition}
\newtheorem{definition}[theorem]{Definition}
\theoremstyle{remark}

\newcommand{\J}{\mathbf J}
\newcommand{\rr}{\mathbf r}
\newcommand{\pp}{\boldsymbol\pi}
\newcommand{\Tr}{\operatorname{tr}}
\newcommand{\Fi}{\Phi}
\newcommand{\Cc}{\mathcal C}
\newcommand{\Aa}{\mathcal A}
\newcommand{\Alg}{\mathfrak A}
\newcommand{\Pc}{\mathcal P}
\newcommand{\so}{\mathfrak{so}}
\newcommand{\su}{\mathfrak{su}}

\newcommand{\sixj}[6]{\begin{Bmatrix}#1&#2&#3\\#4&#5&#6\end{Bmatrix}}

\title{Symmetry algebra of a charged harmonic oscillator\\ in the field of the Dirac magnetic monopole}
\author{Ian Marquette\thanks{Department of Mathematical and Physical Sciences, La Trobe University, Bendigo, VIC 3552, Australia.\texttt{i.marquette@latrobe.edu.au}}\and Luc Vinet\thanks{Centre de recherches math\'ematiques, Universit\'e de Montr\'eal, Montr\'eal (Qu\'ebec) H3C 3J7, Canada. \texttt{luc.vinet@umontreal.ca}}}
\date{}

\begin{document}
\maketitle

\begin{abstract}

A charged particle bound by an isotropic harmonic potential to a Dirac magnetic monopole, with the centrifugal term that makes the orbits close, has degeneracies beyond rotational invariance: for integer $q$, the product of the electric and magnetic charges, the levels are those of the oscillator with the states of angular momentum $l<|q|$ removed. The conserved rank-2 tensor $T$ built from the ladder operators that accounts for them was identified thirty-five years ago, but the algebra it generates with the angular momentum $\mathbf{J}$ was not obtained. We obtain its relations in covariant form and identify it. It is polynomial but not quadratic, and the shift $T'=T+2(H-1)J^{(2)}$ by the quadrupole of $\mathbf{J}$, and only this one, makes all commutators polynomials in $\mathbf{J}$. In each sector of fixed energy and $J_z$, $(\mathbf{J}^2,T'_{zz})$ generates a cubic algebra of Heun--Racah type whose parameters are symmetric, up to a scalar shift, in the magnetic charge and the magnetic quantum number. At $q=0$, $T$ is the algebraic Heun operator of the pair (Casimir of $\mathfrak{so}(3)$, quadrupole of $\mathfrak{su}(3)$): the symmetry algebra is the $\mathfrak{so}(3)$-covariant extension of a Heun--Racah algebra. We define it abstractly and classify its multiplicity-free finite-dimensional representations at the energies of the system: besides the levels, they include, for nonzero integer $q$, a second family carried by the states the monopole removes, the kernel of the Newman--Penrose operator. All relations are derived from the dynamical algebra of the ladder operators and hold for integer and half-integer charges alike.

\end{abstract}

\section{Introduction}

A charged particle in the field of a Dirac magnetic monopole \cite{Dirac} is the simplest quantum system in which the wave function is not a function but a section of a line bundle \cite{WuYang}. The angular momentum acquires the component $-q\hat{\rr}$ along the line joining the particle to the monopole, and the orbital quantum number $l$ is bounded below by $|q|$, where $q$ is the product of the electric and magnetic charges. McIntosh and Cisneros \cite{MC} observed that when a central potential is added, the accidental degeneracies of the Coulomb and harmonic problems survive the monopole provided the repulsive centrifugal term $q^2/2r^2$ is added as well. For the Coulomb potential, the resulting system is the one studied by Zwanziger \cite{Zwanziger}. Its degeneracies are accounted for by an $O(4)$ symmetry, the Runge--Lenz vector being conserved, and its spectrum-generating algebra $SO(4,2)$ can be derived from the four-dimensional oscillator by dimensional reduction \cite{DHV_NP}. For the harmonic potential and integer $q$, the spectrum is that of the three-dimensional isotropic oscillator, but the level with energy $k+\tfrac32$ contains only the angular momenta $l=k,k-2,\dots$ that are at least $|q|$. The degeneracy is that of $SU(3)$ with the states of angular momentum $l<|q|$ removed from each level, and McIntosh and Cisneros noted that this degeneracy ``cannot be extracted from a symmetry group even though a calculus of ladder operators exists accounting for'' it \cite{MC}. The symmetries of this harmonic system were studied by Labelle, Mayrand and Vinet \cite{LMV}. They showed that a symmetric tensor $R_{ij}$ of rank~2, quartic in the velocities and built from the raising and lowering operators of the energy, commutes with the Hamiltonian, that no symmetry of order two or three in the velocities exists besides the angular momentum, and that the commutator $[R_{ij},R_{kl}]$ has a leading term proportional to $\epsilon_{ikm}HL_mR_{jl}$ followed by terms of lower degree, so that the commutators do not close linearly on $L$ and $R$. The relations of the algebra generated by $L$ and $R$ were not obtained, and each level was described instead by an $SU(2)\otimes SU(2)$ whose generators are the ladder operators of the quantum numbers, suitably normalised. Related classical and quantum hidden symmetries have also been studied by Inzunza, Plyushchay and Wipf \cite{IPW2020}.

The purpose of the present paper, thirty-five years after \cite{LMV}, is to obtain the relations of the algebra generated by $\J$ and $R$ and to identify it. The point of view is that of the polynomial symmetry algebras of superintegrable systems \cite{GVZ_LMP,Zhedanov_AW,GZL,KMP}. For systems of second order, these algebras are quadratic and their representations are governed by the orthogonal polynomials of the Askey scheme: the Racah algebra for the generic model on the two-sphere \cite{GVZ_LMP,KMP}, the Hahn algebra for the singular oscillators, which by a redefinition of generators is the cubic Higgs algebra of the oscillator on the sphere \cite{Higgs,Zhedanov_Higgs,FGVVZ}. Polynomial algebras of systems with monopoles have been obtained by Marquette for the generalised MICZ--Kepler system and its four-dimensional dual \cite{Marquette2010}, and by Hoque, Marquette and Zhang for Kepler-type systems with a $U(1)$ monopole in Taub--NUT space \cite{HMZ2016} and with the Yang--Coulomb $SU(2)$ monopole in five dimensions \cite{HMZ2017}. See also \cite{Meng} for the MICZ--Kepler problems in all dimensions. Closest to the present work is \cite{HMZ2018}, where the recurrence approach is applied to the Hamiltonian studied here and to a harmonic monopole system in a generalised Taub--NUT space. There, in each sector of fixed energy and magnetic quantum number, integrals of motion are built from the ladder operators of the radial quantum number and the shift operators of the angular momentum, normalised with $\sqrt{\mathbf L^2+\tfrac14}$, and they generate a polynomial algebra in deformed-oscillator form whose structure function yields the spectrum. These integrals are, up to normalisation, the parts of the tensor of \cite{LMV} that raise and lower $l$ by two units within a sector. Here we work with the tensor itself, whose components are polynomial in the velocities, the coordinates and the angular momentum, and with the full algebra it generates with $\J$, covariant under rotations. The sector algebra appears inside the centraliser of $J_z$, and we identify it. Section~\ref{sec:hmzcompare} makes the correspondence with the recurrence generators of \cite{HMZ2018} explicit. For nonzero monopole charge there is no separation of variables in Cartesian coordinates and the symmetry is of fourth order: what replaces the quadratic algebras is the object we describe.

Our results are the following. The polynomial algebra of the symmetries built from the ladder operators is generated by $\J$ and by the traceless symmetric tensor $T$ (the tensor $R$ of \cite{LMV} up to normalisation), and the other bilinear constants of motion, built from two raising operators and the radial lowering operator, reduce to $\J$ and $T$. The relations are written in covariant form: the rank-3 part of the commutator $[T,T]$ is $8(H-1)[J\otimes T]^{(3)}+8\Fi J^{(3)}$, its rank-1 part is a polynomial in $\J^2$ and $H$ times $\J$, and there is a symmetric relation of rank~2 which plays the role that Elliott's relation plays for $\su(3)$. Here $\Fi=2\J^2-2H+3-2q^2$. The algebra is not quadratic. The situation is best understood by the shift $T'=T+2(H-1)J^{(2)}$, where $J^{(2)}$ is the quadrupole of $\J$: the commutators of $T'$ are polynomials in $\J$ alone, with coefficients even in $H$. This shift is the analogue, one rank up, of the redefinition that identifies the Higgs algebra with the Hahn algebra. Here it does not produce a quadratic algebra, but it exhibits the structure, and it is the only shift by the quadrupole that does so. When one fixes the energy and the magnetic quantum number $m$, the operator $T'_{zz}$ is, in every sector, an algebraic Heun operator of a Hahn pair whose first member is $\J^2$, and $A=\J^2$, $B=T'_{zz}$ satisfy
\[
[A,[A,B]]=8\{A,B\}-12B+a_5A+a_8,\qquad [B,[A,B]]=-128A^3-8B^2+a_3A^2-a_5B+a_6A+a_9 ,
\]
which are the relations of the algebra $\mathcal A_{\rm gen}$ of \cite{CPV_CY} with $a_0=0$, that is, the relations satisfied by a Racah generator and a Heun--Racah operator \cite{BCTVZ}. The parameters $a_i$ are polynomials in $H$, $m^2$ and $q^2$ and become symmetric under $m^2\leftrightarrow q^2$ after an additive scalar shift of the sector generator. The origin of this Heun structure is visible at $q=0$, where the tensor $\tilde T=T'+2HJ^{(2)}$ is $-2\{\J^2-1,D\}$ with $D$ the quadrupole of $\su(3)$: the tensor of \cite{LMV} is the algebraic Heun operator \cite{GVZ_Heun} of the pair (Casimir operator of $\so(3)$, quadrupole of $\su(3)$), in tensor form, and its contraction $\J\cdot\tilde T\cdot\J$ is, up to the factor $-4(\J^2-1)$, the labelling operator of Bargmann and Moshinsky \cite{BM} for the chain $\su(3)\supset\so(3)$. The symmetry algebra of the oscillator in the monopole field is thus the $\so(3)$-covariant extension of a Heun--Racah algebra, in which the two labels $(\J^2,J_z)$ play the role of the Racah generator and the tensor $T'$ that of the Heun operator. We define this covariant algebra $\Alg(H,q)$ by generators and relations, and determine its finite-dimensional representations at the energies of the system, with angular momenta of the parity of the level and each occurring at most once. The reduced matrix elements of $T'$ are fixed by the relations, and the representations are the chains of angular momenta between consecutive zeros of a single polynomial, the lowest angular momenta $l=0,\tfrac12,1$ requiring a separate check. Among them are the energy levels of the physical system, whose matrix elements are all recovered, and, for nonzero integer $q$, a second family carried by the angular momenta $l<|q|$: the states that the monopole removes from the oscillator level, which are the kernel of the Newman--Penrose operator $\eth^{|q|}$, form representations of the same algebra. All the relations are derived from the dynamical algebra of the ladder operators, which is established by operator identities and determines their action on the states. The covariant relations follow from the dynamical algebra by reordering, without reference to the basis of states, and the Heun--Racah relations of the sectors follow from the covariant relations by the recoupling of $\so(3)$, and they hold for half-integer as well as integer charges. Reduced by the Wigner--Eckart theorem, the relations become identities between rational functions of the quantum numbers, which is the form in which the representations are classified. The degeneracy which McIntosh and Cisneros found could not be extracted from a symmetry group is thus extracted from a polynomial symmetry algebra: each level is irreducible under $\J$ and $T$, and the algebra is a Heun--Racah algebra made covariant under rotations.

The paper is organised as follows. Section~\ref{sec:model} recalls the system and its ladder operators and gives their algebra. Section~\ref{sec:algebra} contains the relations of the symmetry algebra, for the tensor of \cite{LMV} and for the shifted tensor. Section~\ref{sec:sectors} describes the reduction to the sectors of fixed energy and magnetic quantum number and the Heun operators found there. Section~\ref{sec:heun} establishes the Heun--Racah structure of the sector algebras and explains its origin at $q=0$. Section~\ref{sec:reps} defines the abstract algebra and classifies its representations. Section~\ref{sec:hmzcompare} identifies the recurrence operators within the tensor algebra. Section~\ref{sec:deformed} gives the cubic Casimir and a deformed-oscillator realisation of the sector algebra. Section~\ref{sec:proofs} gives the detailed proofs. Concluding remarks are collected in Section~\ref{sec:conclusion}, and the conventions in the Appendix.

\section{The model and its ladder operators}\label{sec:model}

We take $\hbar=\omega=1$ and unit mass. Operator identities are understood on the span of the regular separated eigenfunctions. Let $\mathbf A$ be the vector potential of a monopole of strength $g$ at the origin, $\nabla\times\mathbf A=g\,\rr/r^3$, defined by patches \cite{WuYang}, and let $\pp=\mathbf p-e\mathbf A$. With $q=eg\in\tfrac12\mathbb Z$, the Hamiltonian and the angular momentum are
\begin{equation}\label{eq:H}
H=\tfrac12\pp^2+\frac{q^2}{2r^2}+\tfrac12r^2,\qquad \J=\rr\times\pp-q\,\hat{\rr},\qquad [J_i,J_j]=i\epsilon_{ijk}J_k,\quad [H,J_i]=0 .
\end{equation}
The wave functions are sections, and in spherical coordinates they separate as $\phi_{nl}(r)Y_{q,l,m}(\theta,\varphi)$ with $Y_{q,l,m}$ the monopole harmonics \cite{WuYang}, $l\ge|q|$, $l-q\in\mathbb Z$, and $\phi_{nl}$ the radial functions of the three-dimensional oscillator, since the term $q^2/2r^2$ cancels the shift $-q^2$ of the angular eigenvalue $\J^2-q^2=l(l+1)-q^2$. Hence
\begin{equation}\label{eq:spectrum}
H|n,l,m\rangle=(k+\tfrac32)|n,l,m\rangle,\qquad k=2n+l,\quad n\in\mathbb N,\quad l\ge|q|,\quad -l\le m\le l ,
\end{equation}
where $\mathbb N=\{0,1,2,\ldots\}$, and the level $k$ carries the angular momenta $l=k,k-2,\dots$ down to $|q|$ or $|q|+1$. Its degeneracy is $\tfrac12(k+|q|+1)(k-|q|+2)$ when $k-|q|$ is even and $\tfrac12(k-|q|+1)(k+|q|+2)$ when it is odd \cite{LMV}. For integer $q$, in both cases it is the oscillator degeneracy $\tfrac12(k+1)(k+2)$ diminished by the number $\sum_{l<|q|,\,l\equiv k}(2l+1)$ of states with $l<|q|$.

Ladder operators for the energy were constructed in \cite{LMV}. With
\begin{equation}\label{eq:u}
u_i=\tfrac12\,\epsilon_{ijk}\{J_j,r_k\},
\end{equation}
which is a vector operator, one finds that $u$ has only components that raise or lower $H$ by one unit. We take as generators
\begin{equation}\label{eq:bB}
b_i^\dagger=u_i+[H,u_i]=\tfrac12\epsilon_{ijk}\{J_j,\,r_k-i\pi_k\},\quad b_i=(b_i^\dagger)^\dagger,\quad B_-=r^2+\tfrac12[H,r^2]-H,\quad B_+=(B_-)^\dagger ,
\end{equation}
so that $b_i^\dagger$ is twice the raising part of $u_i$ and $B_-$ is twice the part of $r^2$ that raises $H$ by two units, the part of $r^2$ that commutes with $H$ being $H$ itself. Our $b$ is $\sqrt2$ times the $b$ of \cite{LMV}. The operators $B_\pm$ and $H$ generate the $\so(2,1)$ of the radial problem \cite{Jackiw,DHV_CMP}. The following relations hold, with
\begin{equation}\label{eq:defs}
\Fi=2\J^2-2H+3-2q^2,\qquad \Cc=(H-\tfrac12)(H-\tfrac32)-\J^2,\qquad \Aa=\J^2+1-q^2 .
\end{equation}

\begin{proposition}\label{prop:dyn}
The operators $H$, $\J$, $b$, $b^\dagger$, $B_\pm$ satisfy
\begin{gather}
[H,b_i]=-b_i,\qquad [H,B_\pm]=\mp2B_\pm,\qquad [B_+,B_-]=4H,\qquad B_-B_+=\Cc,\label{eq:dyn1}\\
[B_+,b_i]=0,\qquad [B_-,b_i]=-2b_i^\dagger,\qquad \J\cdot b=b\cdot\J=0,\label{eq:dyn2}\\
[b_i,b_j]=-2i\epsilon_{ijk}J_kB_+,\qquad [b_i,b_j^\dagger]=2(2\J^2+1-q^2)\delta_{ij}-\{J_i,J_j\}-2i\epsilon_{ijk}J_kH,\label{eq:dyn3}\\
b\cdot b=2\Aa\,B_+,\qquad b^\dagger\!\cdot b=\Aa\,(2H-3)-2\J^2 ,\label{eq:dyn4}
\end{gather}
together with the adjoint relations.
\end{proposition}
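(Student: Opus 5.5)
Every identity in Proposition~\ref{prop:dyn} is a statement about polynomials in $\rr,\pp,\J$, so the plan is to reduce them to a small set of canonical commutators and then reorder. The basic inputs are $[r_i,r_j]=0$, $[r_i,\pi_j]=i\delta_{ij}$ and $[\pi_i,\pi_j]=ieq\,\epsilon_{ijk}r_k/r^3\cdot(1/q)\cdot g$, that is $[\pi_i,\pi_j]=i\,q\,\epsilon_{ijk}r_k/r^3$. From these one gets the standard monopole facts: $\J$ generates rotations, with $\rr$, $\pp$, $\hat\rr$ vector operators; $\J\cdot\hat\rr=\hat\rr\cdot\J=-q$; and $\J^2=(\rr\times\pp)^2+q^2$. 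The last gives $\pp^2=\pi_r^2+(\J^2-q^2)/r^2$, which is where the centrifugal term cancels. I would also record $[H,r_k]=-i\pi_k$ and $[H,\pi_k]=ir_k+i\,q\,(\pp\times\rr)_k/r^3$-type terms, or more usefully $[H,\J]=0$ together with $[H,r^2]=-i\{r_k,\pi_k\}$ and $[H,\{r_k,\pi_k\}]=2i(\pp^2+q^2/r^2-r^2)$.

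\textbf{Step 1: the $\so(2,1)$ relations \eqref{eq:dyn1} for $B_\pm$.} The operators $r^2$, $H$ and $D=\tfrac12\{r_k,\pi_k\}$ close on $\so(2,1)$, because $\tfrac12\pp^2+q^2/2r^2=\tfrac12\pi_r^2+(\J^2)/2r^2$ is a "conformal" kinetic term. Writing $B_-=\tfrac12(r^2-K)-iD$ with $K=\pp^2+q^2/r^2$ gives $[H,B_\pm]=\mp2B_\pm$ and $[B_+,B_-]=4H$ directly. The Casimir $B_-B_+$ is then computed by expanding $r^2K-D^2$. One finds the angular Casimir value $\J^2-\tfrac34$, which yields $\Cc$.

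\textbf{Step 2: the vector relations \eqref{eq:dyn2} and \eqref{eq:dyn1} for $b$.} By construction $b_i^\dagger=u_i+[H,u_i]$. The claim that $u$ has only $\pm1$ components amounts to $[H,[H,u]]=u$. I would verify this from $[H,r_k]=-i\pi_k$ and the fact that $[H,\epsilon_{ijk}\{J_j,\pi_k\}]$ reduces to $-\epsilon_{ijk}\{J_j,r_k\}$. The latter holds because $\epsilon_{ijk}\{J_j,\cdot\}$ kills the parts of $[H,\pi_k]$ parallel to $\hat\rr$ coming from the monopole force, using $\J\times\hat\rr$ identities. Given $[H,[H,u]]=u$, the relation $[H,b^\dagger]=b^\dagger$ follows immediately. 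For $\J\cdot b=0$, write $\J\cdot b^\dagger=\tfrac12\epsilon_{ijk}J_i\{J_j,w_k\}$ with $w=\rr-i\pp$. Using $\epsilon_{ijk}J_iJ_j=iJ_k$ and the vector commutator $[J_j,w_k]=i\epsilon_{jkl}w_l$, this collapses to a multiple of $\J\cdot w$ minus itself, hence zero. For $[B_\pm,b_i]$, compute $[r^2,b_i^\dagger]$ and $[K,b_i^\dagger]$ using only $[r^2,\pi_k]=2ir_k$ and the vector nature of $\{J_j,r_k\}$; then project by energy weight. Since $[B_+,b]$ would have weight $-1$ while no such term is produced, it vanishes.

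\textbf{Step 3: \eqref{eq:dyn3}--\eqref{eq:dyn4}, the main obstacle.} These require reordering quartic expressions. I would use the decomposition $\epsilon_{ijk}\{J_j,w_k\}=2(\J\times w)_i-i\,w_i$-type, together with $\J\times\rr=$ an expression in $r^2\pp$, $\rr(\rr\cdot\pp)$ and $q\hat\rr\times\rr=0$. I would then express $b^\dagger$ in terms of $\rr\times\pp$-free quantities such as $r^2\,w$, $\rr(\rr\cdot w)$ and $q$-terms. The cleaner route is to compute $b\cdot b$ and $b^\dagger\cdot b$ first, via $(\J\times w)\cdot(\J\times w)$, using $\J\cdot w$ expressed through $q$ and $\hat\rr$. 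This leaves only scalars in $r^2$, $D$, $K$, $\J^2$, which Step~1 identifies with $B_+$ and $H$. The tensor commutators $[b_i,b_j]$ and $[b_i,b_j^\dagger]$ then follow by rotational covariance: each is a rank-$\le2$ tensor of definite energy weight. Its antisymmetric part must be $\propto\epsilon_{ijk}J_k\times$(scalar of the right weight), and its symmetric part is fixed by the trace (from Step~3's scalars) and one component, for example $[b_z,b_z^\dagger]$ evaluated in spherical form. I expect the bookkeeping of the $q$-dependent terms (from $[\pi_i,\pi_j]\ne0$ and $\J\cdot\hat\rr=-q$) to be the delicate point. As a cross-check, I would test the final formulas on states $|n,l,m\rangle$ at $q=0$, where $b^\dagger$ reduces to oscillator bilinears.
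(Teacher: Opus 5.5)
Your Steps 1 and 2 can be made to work (with a correction below), and your route to \eqref{eq:dyn4} through $(\J\times U)\cdot(\J\times V)$ and $\J\cdot\mathbf c=-q(r-ip_r)$ is the paper's. The gap is in the tensor relations \eqref{eq:dyn3}.

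Rotational covariance does not imply that the antisymmetric part of $[b_i,b_j]$ or of $[b_i,b_j^\dagger]$ is $\epsilon_{ijk}J_k$ times a scalar. Covariance fixes only the tensor type, and a vector operator of given weight need not be parallel to $\J$. On the span of the eigenfunctions the form can be rescued: each level contains each $l$ at most once, all with $l\equiv k\pmod 2$, so a vector operator of even weight has only blocks $l\to l$. By the projection theorem it is then $\J$ times a scalar operator of the same weight. But that scalar is still unknown, and nothing in your plan determines it. The trace and the component $[b_z,b_z^\dagger]$ see only the symmetric part, and $[b_i,b_j]$ is entirely antisymmetric.

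So $[b_i,b_j]=-2i\epsilon_{ijk}J_kB_+$ is not proved, and neither is the term $-2i\epsilon_{ijk}J_kH$ of $[b_i,b_j^\dagger]$. That the coefficient comes out as exactly $-2iB_+$, with every $q$-dependent term cancelling, is the bookkeeping you anticipate but do not carry out. The paper computes $[b_z,b_+]$ directly from the spherical components $b_+=-i[J_+c^\dagger_z-(J_z-1)c^\dagger_+]$ and $b_z=-\tfrac{i}{2}(J_-c^\dagger_+-J_+c^\dagger_-)-ic^\dagger_z$. It uses $[c^\dagger_z,c^\dagger_\pm]=\mp q\,n_\pm$, $[c^\dagger_+,c^\dagger_-]=-2q\,n_z$ and $\J\cdot\mathbf n=-q/r^2$, and finds $-2J_+B_+$. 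It then obtains the antisymmetric part of $[b_i,b_j^\dagger]$ by applying $[B_-,\cdot\,]$ to this, using $[B_-,b]=-2b^\dagger$ and $[B_-,B_+]=-4H$. Only the symmetric traceless part is fixed by one component, $[b_+,b_+^\dagger]=-2J_+^2$. Note also that your trace needs $b\cdot b^\dagger$, not only $b^\dagger\cdot b$.

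In Step 2 the claim $[H,[H,u]]=u$ is true, but the mechanism you give for it is wrong. One has $[H,\pi_k]=ir_k+\tfrac{iq}{r^3}J_k$. The Lorentz-force term $\tfrac{iq}{r^3}(\rr\times\pp)_k$ is transverse, and it is the centrifugal term $q^2/2r^2$ that adds the radial piece $-\tfrac{iq^2}{r^3}\hat r_k$ completing it to $\J$. Neither piece is annihilated by $\epsilon_{ijk}\{J_j,\cdot\,\}$ on its own, since $\epsilon_{ijk}\{J_j,\hat r_k\}=2(\J\times\hat\rr-i\hat\rr)_i\neq0$. Their sum is annihilated, because $r^{-3}$ commutes with $\J$ and $\epsilon_{ijk}\{J_j,J_k\}=0$. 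This is the McIntosh--Cisneros mechanism the paper singles out, and it gives $[H,\epsilon_{ijk}\{J_j,\pi_k\}]=i\,\epsilon_{ijk}\{J_j,r_k\}$, not $-\epsilon_{ijk}\{J_j,r_k\}$.

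Your weight projection for $[B_\pm,b]$, by contrast, is sound, and shorter than the paper's computation once stated correctly. The relation $[r^2,\pi_k]=2ir_k$ alone gives $[r^2,b^\dagger]=b^\dagger+b$, and $r^2=H+\tfrac12(B_++B_-)$. Comparing the parts of weight $-1$ and $3$ then yields $[B_+,b^\dagger]=2b$ and $[B_-,b^\dagger]=0$, whose adjoints are the relations of \eqref{eq:dyn2}. No $[K,b^\dagger]$ is needed, and $[B_+,b]$ has weight $-3$, not $-1$.
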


At $q=0$, $B_+=a\cdot a$ and $b_i^\dagger$ is the raising part of $\epsilon_{ijk}\{L_j,x_k\}$, so that these relations are those of the Schr\"odinger algebra of the ordinary oscillator \cite{Niederer72,Niederer73,Hagen}, $(\so(2,1)\oplus\so(3))\ltimes\mathfrak h_3$, in which the Heisenberg brackets of the vector $(b,b^\dagger)$ have become the quadratic expressions \eqref{eq:dyn3}--\eqref{eq:dyn4}. The action of the generators on the basis $|n,l,m\rangle$ was computed in \cite{LMV} (eqs.~(4.7)--(4.8) there):
\begin{equation}\label{eq:action}
\begin{aligned}
b_z^\dagger|n,l,m\rangle&=c_0(n,l,m)\,|n,l+1,m\rangle+c_1(n,l,m)\,|n+1,l-1,m\rangle,\\
c_0(n,l,m)^2&=\frac{2\,(2n+2l+3)(l+1-m)(l+1+m)(l+1-q)(l+1+q)}{(2l+1)(2l+3)},\\
c_1(n,l,m)^2&=\frac{4\,(n+1)(l-m)(l+m)(l-q)(l+q)}{(2l-1)(2l+1)},
\end{aligned}
\end{equation}
the other components following by covariance (Appendix). Amplitudes to absent states are set to zero before evaluating the formulas, including $c_1$ at $l=|q|=\tfrac12$. In the phase convention $c_0,c_1\ge0$,
\begin{equation}\label{eq:Baction}
B_+|n,l,m\rangle=2\sqrt{n\,(n+l+\tfrac12)}\,|n-1,l,m\rangle .
\end{equation}
The factors $(l+1\pm q)$ in $c_0$ and $(l\pm q)$ in $c_1$ are the only trace of the monopole in these amplitudes, and they vanish on the border $l=|q|$ and thus prevent the ladder operators from reaching the states with $l<|q|$. Proposition~\ref{prop:dyn} is proved in Section~\ref{sec:proofdyn} by operator identities. Conversely, its relations determine \eqref{eq:action}--\eqref{eq:Baction} on the representation generated from the ground state (Proposition~\ref{prop:rep} below), which gives a derivation of the action computed in \cite{LMV} from the algebra alone. Everything that follows is a consequence of the dynamical algebra, and the symmetry algebra is derived from it by reordering in Section~\ref{sec:reorder}.

\section{The symmetry algebra}\label{sec:algebra}

\subsection{Generators and relations}
The operators that commute with $H$ and are bilinear in the ladder operators are $\J$, $P_{ij}=b_i^\dagger b_j$, $Q_{ij}=b^\dagger_ib^\dagger_jB_+$ and $Q_{ij}^\dagger$. Their antisymmetric parts and traces reduce to $\J$ and $H$:
\begin{equation}\label{eq:PQred}
P_{ij}-P_{ji}=i\epsilon_{ijk}J_k\,\Fi,\qquad P_{kk}=2\Cc+(H-\tfrac32)\Fi,\qquad Q_{kk}=2\Aa\,\Cc .
\end{equation}
We retain the traceless symmetric parts
\begin{equation}\label{eq:TS}
T_{ij}=P_{ij}+P_{ji}-\tfrac23\delta_{ij}P_{kk},\qquad S_{ij}=Q_{ij}+Q_{ji}-\tfrac23\delta_{ij}Q_{kk},
\end{equation}
and $S^\dagger$. The tensor $T$ is, up to normalisation and a multiple of $\delta_{ij}$, the conserved tensor $R_{ij}$ of \cite{LMV}. On the basis \eqref{eq:spectrum}, $\Fi=2[l(l+1)-q^2-k]$ and $\Cc=(k-l)(k+l+1)=2n(k+l+1)$.

\paragraph{Conventions.} The spherical components of a vector $V$ are $V_0=V_z$, $V_{\pm1}=\mp(V_x\pm iV_y)/\sqrt2$, and those of a symmetric traceless tensor $T_{ij}$ are
\[
T_0=\sqrt{\tfrac32}\,T_{zz},\qquad T_{\pm1}=\mp(T_{xz}\pm iT_{yz}),\qquad T_{\pm2}=\tfrac12(T_{xx}-T_{yy}\pm2iT_{xy}) ,
\]
and they are the components of an irreducible tensor operator of rank 2, $[J_z,T_\mu]=\mu T_\mu$, $[J_\pm,T_\mu]=\sqrt{(2\mp\mu)(3\pm\mu)}\,T_{\mu\pm1}$. Two irreducible tensor operators $A$ and $B$ of ranks $j_1$ and $j_2$ are coupled to rank $L$ by
\begin{equation}\label{eq:coupling}
[A\otimes B]^{(L)}_M=\sum_{m_1+m_2=M}\langle j_1m_1\,j_2m_2|LM\rangle\,A_{m_1}B_{m_2},
\end{equation}
$B$ acting first. The tensors built from $\J$ are $J^{(1)}=\J$, the quadrupole $J^{(2)}=[J\otimes J]^{(2)}$, whose Cartesian components are $J_{(ij)}=\tfrac12\{J_i,J_j\}-\tfrac13\delta_{ij}\J^2$, and $J^{(3)}=[J\otimes J^{(2)}]^{(3)}$. When the same tensor $T$ is coupled to itself, the symmetry $\langle2\mu\,2\nu|LM\rangle=(-1)^L\langle2\nu\,2\mu|LM\rangle$ of the Clebsch--Gordan coefficients gives
\begin{equation}\label{eq:TTdef}
[T\otimes T]^{(L)}_M=\tfrac12\sum_{\mu+\nu=M}\langle2\mu\,2\nu|LM\rangle\,[T_\mu,T_\nu]_{\pm}\,,\qquad [\,\cdot,\cdot\,]_\pm=\begin{cases}[\,\cdot,\cdot\,]& L=1,3,\\ \{\,\cdot,\cdot\,\}& L=0,2,4,\end{cases}
\end{equation}
so that $[T\otimes T]^{(1)}$ and $[T\otimes T]^{(3)}$ are the vector and rank-3 parts of the commutator $[T_\mu,T_\nu]$, which they determine completely,
\begin{equation}\label{eq:commcoupled}
[T_\mu,T_\nu]=2\sum_{L=1,3}\langle2\mu\,2\nu|LM\rangle\,[T\otimes T]^{(L)}_M ,
\end{equation}
while $[T\otimes T]^{(0)}$, $[T\otimes T]^{(2)}$, $[T\otimes T]^{(4)}$ are the parts of the anticommutator. For instance $[T\otimes T]^{(3)}_{3}=[T_2,T_1]/\sqrt2$, and $\sqrt5\,[T\otimes T]^{(0)}=\sum_\mu(-1)^\mu T_\mu T_{-\mu}=T_{ij}T_{ji}=\Tr T^2$. The contraction $\J\cdot T$ denotes the vector $J_iT_{ij}$, and in coupled form $(\J\cdot T)_\mu=-\sqrt{5/3}\,[J\otimes T]^{(1)}_\mu$.

\begin{theorem}\label{thm:T}
With $H$ central, the angular momentum $\J$ and the tensor $T$ satisfy
\begin{align}
\J\cdot T&=T\cdot\J=-\tfrac23\,(2\Cc+H\Fi)\,\J, \label{eq:transv}\\
[T\otimes T]^{(3)}&=8(H-1)\,[J\otimes T]^{(3)}+8\,\Fi\,J^{(3)}, \label{eq:TT3}\\
[T\otimes T]^{(1)}&=\sqrt{\tfrac{18}{5}}\;\chi\,J,\qquad
3\chi=5\Fi^2-16\Fi\,(\Cc+H-\tfrac52)-16\,\Cc\,(2H+3), \label{eq:TT1}\\
[T\otimes T]^{(2)}&=\tfrac{4}{\sqrt{21}}\Big[(1-H)\{\J^2,T\}+\big(\tfrac92-(5-2q^2)H\big)T+\big(24\,\Cc-\tfrac32\Fi(\Fi+12)\big)J^{(2)}\Big], \label{eq:TT2}\\
\Tr T^2&=-\tfrac{64}{3}\Cc^2+\Cc(32H^2+56)+\Fi^2\big(2\Cc+\tfrac23H^2+4H-\tfrac{15}{2}\big)\notag\\
&\quad+\Fi\big(\tfrac{32}{3}\Cc H+48\Cc-16H^2+64H-60\big). \label{eq:trT2}
\end{align}
\end{theorem}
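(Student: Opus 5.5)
The plan is to derive every relation from Proposition~\ref{prop:dyn} by reordering, treating $T_{ij}=b_i^\dagger b_j+b_j^\dagger b_i-\tfrac23\delta_{ij}P_{kk}$ as a normally ordered bilinear. No basis of states is used. Only at the end do I cross-check constants on the basis \eqref{eq:spectrum} with the amplitudes \eqref{eq:action}.

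\textbf{Transversality.} This relation \eqref{eq:transv} comes first and is the simplest. Write $J_iP_{ij}=J_ib_i^\dagger b_j$ and use $\J\cdot b^\dagger=(b\cdot\J)^\dagger=0$, so the term $J_ib^\dagger_i b_j$ vanishes. For $J_ib_j^\dagger b_i$, move $J_i$ past $b_j^\dagger$ with the vector commutator $[J_i,b_j^\dagger]=i\epsilon_{ijk}b^\dagger_k$. This gives $b_j^\dagger(\J\cdot b)=0$ plus $i\epsilon_{ijk}b_k^\dagger b_i$, which is the antisymmetric part of $P$ and is given by \eqref{eq:PQred} as a multiple of $J_j\Fi$. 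Adding the trace term $-\tfrac23 P_{kk}J_j$ with $P_{kk}=2\Cc+(H-\tfrac32)\Fi$ should produce $-\tfrac23(2\Cc+H\Fi)\J$. The equality $T\cdot\J=\J\cdot T$ then follows by taking adjoints, since $T$ is hermitian and the coefficient is central in $H$ and commutes with $\J$.

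\textbf{Products $TT$.} For the products I reorder $b_i^\dagger b_j b_k^\dagger b_l$ into $b_i^\dagger b_k^\dagger b_j b_l$ plus $b_i^\dagger[b_j,b_k^\dagger]b_l$, using \eqref{eq:dyn3}. The quartic term $b^\dagger b^\dagger bb$ is symmetric under exchanging the two creation and the two annihilation indices, up to $[b_i^\dagger,b_k^\dagger]\propto \epsilon J B_-$ and $[b_j,b_l]\propto\epsilon JB_+$. In the commutator $[T_{ij},T_{kl}]$ the quartic terms therefore cancel, leaving cubic expressions of the form $b^\dagger (\J,\J\J,H)b$ together with terms $B_-\,b\,b$ and $b^\dagger b^\dagger B_+$. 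The terms containing $B_\pm$ are brought back to $b^\dagger b$ form in two ways. First, $B_-b_jb_l$ is rewritten as $b^\dagger_j b_l$ plus reordering, via $[B_-,b_i]=-2b^\dagger_i$, and then contracted. Second, contracted pieces such as $b\cdot b=2\Aa B_+$ and $b^\dagger\cdot b$ are replaced by \eqref{eq:dyn4}, with $B_-B_+=\Cc$ absorbing the products. Every surviving term has the form $b_i^\dagger f(\J) b_j$. Commuting $\J$ through $b^\dagger$ by the vector rule reduces it to $P$ (hence to $T$, $\J$ and $H$) multiplied by polynomials in $\J$.

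The rank decompositions \eqref{eq:TT3}, \eqref{eq:TT1} are extracted by coupling. I would not project Cartesian indices by hand. Instead I compute only highest components: $[T\otimes T]^{(3)}_3=[T_2,T_1]/\sqrt2$ with $T_2\propto b_{+}^\dagger b_{+}$, and $[T\otimes T]^{(1)}_1$. The right sides are fixed by covariance once these components match. In each case, the candidate rank-3 or rank-1 tensors built from $\J$ and $T$ of the appropriate degree form a small finite list: $[J\otimes T]^{(3)}$, $J^{(3)}$, and $\J$ times central factors. I therefore make an ansatz with coefficients polynomial in $H$ and $\J^2$ and match them. The same procedure applied to the anticommutator gives \eqref{eq:TT2} and \eqref{eq:trT2}. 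For $\Tr T^2=T_{ij}T_{ji}$, the full contraction reduces directly through \eqref{eq:dyn4} and \eqref{eq:transv} to a function of $H$ and $\J^2$. This can be written via $\Fi,\Cc$ because $\J^2$ and $H$ are recovered from them.

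\textbf{Main obstacle.} I expect the hardest step to be the rank-2 relation \eqref{eq:TT2}. The anticommutator keeps the quartic $b^\dagger b^\dagger bb$ terms, and these must be shown to be expressible through $\{\J^2,T\}$, $T$ and $J^{(2)}$. The first thing to try is to project the quartic term onto rank~2 and use $b\cdot b=2\Aa B_+$ together with $b^\dagger\!\cdot b^\dagger=2B_-\Aa$. Recoupling $[[b^\dagger\otimes b^\dagger]^{(L)}\otimes[b\otimes b]^{(L')}]^{(2)}$ should leave only pieces with $L=0$ or $L'=0$, plus pieces with an antisymmetric ($L=1$) pair, which reduce to $JB_\pm$ terms. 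Contracting then lowers the degree. The coefficients are easiest to pin down by evaluating diagonal matrix elements $\langle n,l,l|\cdot|n,l,l\rangle$ with \eqref{eq:action} for several $(n,l,q)$, and then confirming the operator form by this recoupling argument.
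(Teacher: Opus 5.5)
Your transversality argument is the paper's own, and so is computing the rank-3 relation from the highest component $[T_2,T_1]$. The gap is in how you dispose of the $B_\pm$ terms.

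\textbf{The $B_\pm$ terms in the rank-3 relation.} In $[T_2,T_1]$ these terms are $2b_+^{\dagger2}J_+B_+$ and $2B_-J_+b_+^2$, i.e.\ $2J_+S_2$ and $2J_+S_2^\dagger$. The pair $b_+b_+$ in them is uncontracted. Your first mechanism does not remove $B_-$: commuting it through with $[B_-,b_i]=-2b_i^\dagger$ gives $B_-b_jb_l=b_jb_lB_--2(b_jb_l^\dagger+b_j^\dagger b_l)$. This only trades $B_-bb$ for $bbB_-$. Your second mechanism applies only to contractions such as $b\cdot b$. The commutators alone cannot absorb $B_-$; you need a relation that does.

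The missing idea is Proposition~\ref{prop:S}, which rests on the Casimir relation $B_-B_+=\Cc$. Write $b_+=\tfrac12[B_+,b_+^\dagger]$ and use $[B_-,b_+^\dagger]=0$; equivalently, compute $b_+^\dagger\Cc\,b_+$ in two ways. This gives $2B_-b_+^2=[\Cc,b_+^\dagger]\,b_+$, and likewise $2B_+b_+^{\dagger2}=[b_+,B_+B_-]\,b_+^\dagger$. Hence $S+S^\dagger=2(H-2)T+2\Fi J^{(2)}$. This supplies $4(H-2)$ of the coefficient $8(H-1)$ in \eqref{eq:TT3} and half of the term $8\Fi J^{(3)}$. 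Without it the rank-3 relation does not close, and your claim that every surviving term has the form $b^\dagger f(\J)b$ is unsupported.

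\textbf{The rank-2 relation.} Your expected recoupling mechanism would also fail here. Take four commuting vectors $u,v,w,x$. The rank-2 part of the product of the quadrupoles $[u\otimes v]^{(2)}$ and $[w\otimes x]^{(2)}$ already has a nonzero component along $[[u\otimes w]^{(2)}\otimes[v\otimes x]^{(2)}]^{(2)}$. So after normal ordering, $\{T,T\}$ keeps a piece $[[b^\dagger\otimes b^\dagger]^{(2)}\otimes[b\otimes b]^{(2)}]^{(2)}$. None of $b\cdot b=2\Aa B_+$, its adjoint, or the antisymmetric commutators touches it; only contracting a $b$ with a $b^\dagger$ reduces it.

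The paper never normal orders at this point. By Lemma~\ref{lem:so3}(i), the parts of rank $0,1,2$ are contractions of the matrix product $T_{il}T_{lk}$. Use the antisymmetric part of $P$ to write $T_{ij}=2P_{ij}-i\epsilon_{ijk}J_k\Fi-\tfrac23\delta_{ij}P_{kk}$. Then the only quartic term is $P_{il}P_{lk}=b_i^\dagger(b_lb_l^\dagger)b_k$, where $b\cdot b^\dagger=\Aa(2H+3)+2\J^2$ is a scalar that is simply moved through $b_i^\dagger$. No $B_\pm$ terms and no recoupling arise. So rank 2 is routine, and rank 3 is the step that needs a new idea.

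Finally, fitting an ansatz to matrix elements at several values of $(n,l,q)$ suggests coefficients but does not prove an operator identity. A proof on the basis needs the reduced-matrix-element identities symbolically in $l$, as in Section~\ref{sec:proofcov}.
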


\begin{proposition}\label{prop:S}
$S=(H-2)\,T+\tfrac14[\J^2,T]+\Fi\,J^{(2)}$ and $S^\dagger=(H-2)\,T-\tfrac14[\J^2,T]+\Fi\,J^{(2)}$. The polynomial algebra of the symmetries of $H$ built from the ladder operators is therefore generated by $\J$ and $T$.
\end{proposition}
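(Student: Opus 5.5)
The plan is to express $Q_{ij}=b_i^\dagger b_j^\dagger B_+$ through $P_{ij}=b_i^\dagger b_j$ by commuting $B_+$ past one ladder operator, using only the dynamical algebra of Proposition~\ref{prop:dyn}. By the adjoint of $[B_-,b_i]=-2b_i^\dagger$ we have $[B_+,b_j^\dagger]=2b_j$. Together with $[B_+,b_i]=0$ this gives
\[
b_j^\dagger B_+=B_+b_j^\dagger-2b_j .
\]
Hence
\[
Q_{ij}=b_i^\dagger B_+ b_j^\dagger-2b_i^\dagger b_j=b_i^\dagger B_+b_j^\dagger-2P_{ij}.
\]
The term $b_i^\dagger B_+ b_j^\dagger$ is not yet of the form $P$. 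I would therefore use the vector operator $u$ of \eqref{eq:u}. Its raising and lowering parts are $\tfrac12 b^\dagger$ and $\tfrac12 b$ up to the normalisation in \eqref{eq:bB}, which gives the relation $[H,u_i]=\tfrac12(b_i^\dagger-b_i)$ and $u_i=\tfrac12(b_i^\dagger+b_i)$.

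A cleaner route is to use $[B_-,b_j]=-2b_j^\dagger$, i.e. $b_j^\dagger=-\tfrac12[B_-,b_j]$. This gives
\[
b_i^\dagger b_j^\dagger B_+=-\tfrac12 b_i^\dagger(B_-b_j-b_jB_-)B_+ .
\]
We then reduce $B_-B_+=\Cc$ and $b_jB_-B_+=b_j\Cc=\Cc(\J,H+1)\,b_j$, with $\Cc$ shifted in $H$ since $b$ lowers $H$. It remains to handle $B_-b_jB_+=B_-B_+b_j=\Cc b_j$, which uses $[B_+,b_j]=0$. This yields
\[
Q_{ij}=-\tfrac12\,b_i^\dagger\big(\Cc\,b_j-b_j\,\Cc\big)=-\tfrac12\,b_i^\dagger\,[\Cc,b_j].
\]
Since $\Cc=(H-\tfrac12)(H-\tfrac32)-\J^2$ and $[H,b_j]=-b_j$, the commutator $[\Cc,b_j]$ consists of a term $-(2H-3)b_j$ (up to ordering) and a term $-[\J^2,b_j]$. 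Writing $[\J^2,b_j]=i\epsilon_{jkl}\{J_k,b_l\}$ and moving $\J$ through $b_i^\dagger$ to the left, $Q_{ij}$ becomes a combination of $(H-2)P_{ij}$, of $\tfrac14[\J^2,P_{ij}]$-type terms, and of terms $\epsilon_{jkl}b_i^\dagger J_kb_l$. For the last of these I would use $\J\cdot b=0$ and the identity $\epsilon_{jkl}\epsilon_{\cdot}$-contractions. The aim is to rewrite $b_i^\dagger J_k b_l$ as $J_kP_{il}$ plus commutator terms, using $[J_k,b_i^\dagger]=i\epsilon_{kim}b_m^\dagger$.

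Next I symmetrise in $ij$ and remove the trace, using \eqref{eq:PQred} and \eqref{eq:TS}. The antisymmetric part of $P$ contributes $i\epsilon J\Fi$. Contracted with the $\epsilon$'s coming from $[\J^2,b]$, it produces $\{J_i,J_j\}$-terms, and its traceless part is $\Fi J^{(2)}$. The symmetric part gives $(H-2)T+\tfrac14[\J^2,T]$. To fix the coefficients I would check the scalar parts against $Q_{kk}=2\Aa\Cc$ and $P_{kk}$ from \eqref{eq:PQred}. As an independent check, I would evaluate matrix elements in a sector with \eqref{eq:action}: $\langle l\pm2|$ elements fix the coefficient $\pm\tfrac14$ of $[\J^2,T]$, and diagonal ones fix $(H-2)$ and $\Fi$. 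The statement for $S^\dagger$ follows by taking adjoints, since $T$, $\J$, $H$, $\Fi$ are Hermitian, $J^{(2)}$ is self-adjoint in Cartesian form, and $[\J^2,T]^\dagger=-[\J^2,T]$.

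For the final sentence, the bilinear constants of motion are $\J$, $P$, $Q$, $Q^\dagger$. Those of degree two in $b$ that commute with $H$ reduce to these by $b\cdot b=2\Aa B_+$ and its adjoint. By \eqref{eq:PQred}, \eqref{eq:TS} and the above, all of them lie in the algebra generated by $\J$, $T$ and the central $H$.

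The main obstacle is the ordering bookkeeping in the term $b_i^\dagger[\J^2,b_j]$. Getting exactly $\tfrac14[\J^2,T]$ and $\Fi J^{(2)}$, with no leftover antisymmetric or $\J$-linear pieces, requires careful use of $\J\cdot b=0$ and of the antisymmetric relation \eqref{eq:PQred}. I would control this by cross-checking with the matrix elements \eqref{eq:action}.
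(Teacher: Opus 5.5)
Your proposal is correct. Its engine is the paper's: $B_+$ is traded for $\Cc$ through $B_-B_+=\Cc$, $[B_+,b]=0$ and $[B_-,b]=-2b^\dagger$, after which everything reduces to $[\Cc,b]$. In fact your identity $Q_{ij}=-\tfrac12\,b_i^\dagger[\Cc,b_j]$ is, in Cartesian components, the adjoint of $2B_-b_+^2=[\Cc,b_+^\dagger]\,b_+$. The paper obtains the latter by evaluating $b_+^\dagger\Cc\,b_+$ in two ways and uses it for $S^\dagger$.

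The implementations differ in two respects.
\begin{itemize}
\item The paper checks only the highest components $\mu=2$. No symmetrisation or trace removal is then needed, and the antisymmetric part of $P$ enters only as $P_{z+}=P_{+z}+J_+\Fi$.
\item The paper proves the relation for $S$ by a second, independent computation. That computation uses $\tilde\Cc=B_+B_-$, the reordering of $B_+$ to the right of $b_+^{\dagger2}$, and the brackets $M_{++}$, $M_{z+}$.
\end{itemize}
You derive $S$ once and obtain $S^\dagger$ from the involution, using $T_{ij}^\dagger=T_{ij}$ and $[\J^2,T]^\dagger=-[\J^2,T]$. This makes the second computation unnecessary, at the price of Cartesian bookkeeping.

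That bookkeeping is short and fixes the coefficients by itself, so the matrix-element check is optional. Use $[\Cc,b_j]=-(2H-3)\,b_j+2i(\J\times b)_j$ and $b_i^\dagger J_a=J_ab_i^\dagger-i\epsilon_{aim}b_m^\dagger$. These give
\[
Q_{ij}=(H-\tfrac52)P_{ij}-P_{ji}+\delta_{ij}P_{kk}-i\epsilon_{jab}J_aP_{ib} .
\]
In $S_{ij}=Q_{ij}+Q_{ji}-\tfrac23\delta_{ij}Q_{kk}$ the first three terms give $(H-\tfrac72)T_{ij}$. For the last term, insert
\[
P_{ib}=\tfrac12T_{ib}+\tfrac{i}{2}\epsilon_{ibk}J_k\Fi+\tfrac13\delta_{ib}P_{kk}
\]
and use Lemma~\ref{lem:so3}(ii) in the form $(\J\times T)_{ij}+(\J\times T)_{ji}=\tfrac{i}{2}\big([\J^2,T_{ij}]+6T_{ij}\big)$. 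It then gives $\tfrac14[\J^2,T_{ij}]+\tfrac32T_{ij}+\Fi J_{(ij)}$. Adding the two contributions gives exactly $(H-2)T+\tfrac14[\J^2,T]+\Fi J^{(2)}$. The trace, $Q_{kk}=(H-\tfrac12)P_{kk}-\J^2\Fi=2\Aa\Cc$, is then a consistency check, not a way to fix coefficients.

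One slip needs correcting. With $[J_i,V_j]=i\epsilon_{ijk}V_k$ one has $[\J^2,b_j]=-i\epsilon_{jkl}\{J_k,b_l\}=-2i(\J\times b)_j-2b_j$, not $+i\epsilon_{jkl}\{J_k,b_l\}$. With your sign the trace would not reproduce $2\Aa\Cc$, so your proposed trace check would catch it. The opening detour through $b_i^\dagger B_+b_j^\dagger$ and the operator $u$ plays no role and can be dropped.
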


Relation \eqref{eq:TT3} is the commutator $[R_{ij},R_{kl}]$ of \cite{LMV}, whose leading term $\propto\epsilon_{ikm}HL_mR_{jl}$ was given there, now complete: the terms of lower degree are $8\Fi J^{(3)}$ and the $H$-independent part of $8(H-1)[J\otimes T]^{(3)}$. The shape of the relations can be read off the dynamical algebra. For the bilinears $P_{ij}=b_i^\dagger b_j$ one has the identity
\begin{equation}\label{eq:PPcomm}
[P_{ij},P_{kl}]=b_i^\dagger[b_j,b_k^\dagger]b_l-b_k^\dagger[b_l,b_i^\dagger]b_j-2i\epsilon_{jlm}\,b_i^\dagger b_k^\dagger J_mB_+-2i\epsilon_{ikm}\,B_-J_mb_lb_j ,
\end{equation}
where \eqref{eq:dyn3} has been used in the last two terms. The first two terms are linear in $P$ with coefficients polynomial in $\J$ and $H$, while the last two are of the form $Q\cdot J$ and $J\cdot Q^\dagger$, and would not be there for the ordinary oscillator, where $[b_i,b_j]=0$. Proposition~\ref{prop:S}, which expresses $S$ and $S^\dagger$ through $T$, is what closes the algebra on $\J$ and $T$, and the two contributions are visible in \eqref{eq:TT3}: $8(H-1)[J\otimes T]^{(3)}$ comes from the $Q$ terms through Proposition~\ref{prop:S} and from the $[b,b^\dagger]$ terms, $8\Fi J^{(3)}$ from the antisymmetric part of $P$ and from Proposition~\ref{prop:S} in equal parts. The derivation is carried out in Section~\ref{sec:reorder}, where \eqref{eq:PQred}, Proposition~\ref{prop:S} and Theorem~\ref{thm:T} are obtained from Proposition~\ref{prop:dyn} by reordering. The transversality \eqref{eq:transv} follows from $\J\cdot b=b\cdot\J=0$ alone, given \eqref{eq:PQred}: covariance gives $J_iP_{ij}=(b\cdot\J)^\dagger b_j=0$ and $J_iP_{ji}=b_j^\dagger(\J\cdot b)+i\epsilon_{ijk}P_{ki}=i\epsilon_{ijk}P_{ki}$, hence $J_iT_{ij}=\tfrac{i}{2}\epsilon_{ijk}(P_{ki}-P_{ik})-\tfrac23J_jP_{kk}=-(\Fi+\tfrac23P_{kk})J_j$. Relations \eqref{eq:TT2} and \eqref{eq:trT2} are the analogues of Elliott's relation \cite{Elliott} and of the Casimir relation in the symmetric representations of $\su(3)$, to which they reduce at $q=0$ once $T$ is expressed through the quadrupole (Section~\ref{sec:heun}). Each energy level is irreducible under $\langle\J,T\rangle$: $T$ connects $l$ to $l$ and $l\pm2$ only, and none of the blocks $l\to l+2$ vanishes inside a level, by \eqref{eq:5pairs} below.

\subsection{The shifted tensor}
Let
\begin{equation}\label{eq:shift}
T'=T+2(H-1)\,J^{(2)},\qquad \tilde T=T+2(2H-1)\,J^{(2)}=T'+2H\,J^{(2)} .
\end{equation}

\begin{theorem}\label{thm:Tp}
The commutators of $T'$ are polynomials in $\J$ alone, even in $H$:
\begingroup\allowdisplaybreaks[4]
\begin{align}
\J\cdot T'&=T'\cdot\J=\tfrac13(4q^2-1)\,H\,\J,\label{eq:transvp}\\
[T'\otimes T']^{(3)}&=8\,(2\Aa-H^2)\,J^{(3)},\label{eq:TT3p}\\
[T'\otimes T']^{(1)}&=\sqrt{\tfrac{18}{5}}\,f_1\,J,\label{eq:TT1p}\\
3f_1&=52\Aa^2+(104+32q^2)\Aa-36H^2\Aa-(7-4q^2)H^2-60(1-q^2),\notag\\
[T'\otimes T']^{(2)}&=\tfrac{4}{\sqrt{21}}\Big[\tfrac12(4q^2-1)H\,T'+p_2\,J^{(2)}\Big],\label{eq:TT2p}\\
2p_2&=-12\Aa^2+12(H^2-10)\Aa+(33-12q^2)H^2-36(q^2-1),\notag\\
\Tr T'^2&=-8\lambda^3+(16q^2-168)\lambda^2+(168q^2-8q^4-304)\lambda-24q^4+120q^2-96\notag\\
&\quad+H^2\big(8\lambda^2+(78-16q^2)\lambda+\tfrac{32}{3}q^4-\tfrac{160}{3}q^2+\tfrac{128}{3}\big),\qquad \lambda=\J^2,\label{eq:trTp2}\\
[\J^2,[\J^2,T']]&=8\{\J^2,T'\}-12\,T'+8(1-4q^2)H\,J^{(2)},\label{eq:HR1}
\end{align}
\endgroup
and $[\tilde T\otimes\tilde T]^{(3)}=-8H\,[J\otimes\tilde T]^{(3)}+16\,\Aa\,J^{(3)}$.
\end{theorem}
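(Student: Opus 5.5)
Since $T'=T+2(H-1)J^{(2)}$ with $H$ central, every relation of Theorem~\ref{thm:Tp} is a consequence of Theorem~\ref{thm:T} together with the algebra of the tensors $J^{(1)},J^{(2)},J^{(3)}$ built from $\J$. The plan is to substitute the shift and expand each coupled product bilinearly:
\[
[T'\otimes T']^{(L)}=[T\otimes T]^{(L)}+2(H-1)\big([T\otimes J^{(2)}]^{(L)}+[J^{(2)}\otimes T]^{(L)}\big)+4(H-1)^2[J^{(2)}\otimes J^{(2)}]^{(L)} .
\]
The three kinds of term are then reduced one after another.

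\textbf{Pure $\J$ terms.} The products $[J^{(2)}\otimes J^{(2)}]^{(L)}$ are explicit polynomials in $\J^2$ times $J^{(L)}$ for $L=1,2,3$, and a scalar polynomial for $L=0$. These can be computed once, for instance by evaluating highest components such as $[J^{(2)}\otimes J^{(2)}]^{(3)}_3=[J^{(2)}_2,J^{(2)}_1]/\sqrt2$, with $J^{(2)}_2=\tfrac12 J_+^2$ and $J^{(2)}_1=-\tfrac12\{J_+,J_z\}$. The result is then extended by covariance.

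\textbf{Mixed terms.} For odd $L$, the sum $[T\otimes J^{(2)}]^{(L)}+[J^{(2)}\otimes T]^{(L)}$ is $2[J^{(2)}\otimes T]^{(L)}$ plus a commutator correction. For even $L$, it is an anticommutator. The approach is to write $J^{(2)}_\mu$ as $\sum\langle1a\,1b|2\mu\rangle J_aJ_b$ and recouple through $\J\otimes(\J\otimes T)$ using $6j$ coefficients. Only the following ingredients are needed:
\begin{itemize}[nosep]
\item $[J\otimes T]^{(1)}\propto \J\cdot T$, which is a scalar times $\J$ by \eqref{eq:transv};
\item $[J\otimes T]^{(2)}\propto[J_a,T]$-type terms, which reduce to $T$ via $[J_\pm,T_\mu]$;
\item $[J\otimes T]^{(3)}$, which is kept as is;
\item $\{\J^2,T\}$, which appears through $[J\otimes[J\otimes T]^{(k)}]^{(2)}$.
\end{itemize}
At $L=3$ the coefficient of $[J\otimes T]^{(3)}$ from the mixed term must be $-8(H-1)$, cancelling the $8(H-1)[J\otimes T]^{(3)}$ of \eqref{eq:TT3}. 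This is the structural check that fixes the shift uniquely. The remaining scalar terms combine with $8\Fi J^{(3)}$ and the pure term. Using $\Fi=2\Aa-2H+1$ and $\Cc=H^2-2H+\tfrac34-\J^2$, they collapse to $8(2\Aa-H^2)J^{(3)}$.

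Two further reductions follow the same pattern. Transversality \eqref{eq:transvp} comes directly from \eqref{eq:transv} and $\J\cdot J^{(2)}=\tfrac23(\J^2-\tfrac34)\J$, after rewriting $\Cc,\Fi$ in terms of $\Aa,H$. The $L=2$ relation \eqref{eq:TT2p} is obtained in the same way: the term $(1-H)\{\J^2,T\}$ in \eqref{eq:TT2} must cancel against the mixed term. For the residual $T$, one rewrites $T=T'-2(H-1)J^{(2)}$ everywhere and collects terms.

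For $\Tr T'^2$, one computes $\Tr T^2+4(H-1)\Tr(TJ^{(2)})+4(H-1)^2\Tr (J^{(2)})^2$. Here $\Tr(TJ^{(2)})=J_iT_{ij}J_j$ up to commutator corrections, and this is a scalar by \eqref{eq:transv}. Evenness in $H$ is then a final consistency check on each output.

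Relation \eqref{eq:HR1} needs a separate argument. It is not a $T\otimes T$ relation but a statement about $\mathrm{ad}_{\J^2}$ acting on a rank-2 tensor transverse in the sense \eqref{eq:transvp}. The plan is to compute $[\J^2,X]=2[J\otimes X]$-type expressions, $[\J^2,X_\mu]=\sum_a\{J_a,[J_a,X_\mu]\}$, and iterate. For a generic rank-2 tensor this gives $[\J^2,[\J^2,X]]=8\{\J^2,X\}-12X+(\text{terms in }[J\otimes[J\otimes X]^{(1)}]^{(2)})$. This is the standard Racah-algebra identity, visible from the matrix elements $l\to l,l\pm2$: the eigenvalue differences $(\lambda_{l'}-\lambda_l)^2$ satisfy the quadratic relation only for $\Delta l\in\{0,\pm2\}$. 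The $\Delta l=\pm1$ parts are controlled by $[J\otimes X]^{(1)}$, i.e. by $\J\cdot X$, which is $\tfrac13(4q^2-1)H\J$ by \eqref{eq:transvp}. This yields the inhomogeneous term $8(1-4q^2)HJ^{(2)}$. The last claim, about $\tilde T$, is the same computation as for $L=3$ with shift coefficient $2(2H-1)$ in place of $2(H-1)$.

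The main obstacle is bookkeeping. The $6j$ recouplings of the mixed terms generate $\{\J^2,T\}$, $[J\otimes T]^{(3)}$ and lower terms with ordering corrections, and sign conventions such as "$B$ acting first" in \eqref{eq:coupling} matter. I would first fix all recoupling coefficients by checking them on highest-weight components, e.g. $M=L$, in the explicit $J_\pm$ form. As an independent check, I would verify every identity through reduced matrix elements between $|l\rangle$ and $|l'\rangle$ states with the amplitudes \eqref{eq:action}.
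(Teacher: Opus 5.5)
Your route is the paper's. Theorem~\ref{thm:Tp} is read off from Theorem~\ref{thm:T} by substituting $T=T'-2(H-1)J^{(2)}$ and reducing the mixed and pure-$\J$ couplings with the $\so(3)$ identities \eqref{eq:shiftid} and Lemma~\ref{lem:so3}(iii), checked on highest components. Relation \eqref{eq:HR1} is the general identity \eqref{eq:HRgen} with \eqref{eq:transvp} inserted. Your checks agree with this: the coefficient $-8(H-1)$ at rank~3, the cancellation of $(1-H)\{\J^2,T\}$ at rank~2, the expansion of $\Tr T'^2$, and $\beta=2(2H-1)$ for $\tilde T$.

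Three of your intermediate descriptions are wrong, although the highest-weight computations you plan would expose them.

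First, you have the odd and even cases reversed. By $\langle2\mu\,2\nu|LM\rangle=(-1)^L\langle2\nu\,2\mu|LM\rangle$, for odd $L$ the sum $[T\otimes J^{(2)}]^{(L)}+[J^{(2)}\otimes T]^{(L)}$ is the pure commutator $\sum\langle2\mu\,2\nu|LM\rangle[J^{(2)}_\mu,T_\nu]$. This is why the rank-1 and rank-3 mixed terms, $3\sqrt{2/5}\,\J\cdot T$ and $-4[J\otimes T]^{(3)}$, are linear in $\J$. It is for even $L$ that the sum equals $2[J^{(2)}\otimes T]^{(L)}$ plus a commutator.

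Second, $[J\otimes T]^{(2)}$ does not reduce to $T$. It equals $-\tfrac{1}{2\sqrt6}\big([\J^2,T]+6T\big)$, so commutators $[\J^2,T]$ appear in the intermediate steps. They cancel only in the symmetrised combination \eqref{eq:shiftid2}.

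Third, your matrix-element reading of \eqref{eq:HR1} is off. The relation $(\lambda_{l'}-\lambda_l)^2=8(\lambda_l+\lambda_{l'})-12$ holds for $l'=l\pm2$ but not for $l'=l$, where the two sides are $0$ and $16\lambda_l-12$. Since $\J\cdot T'\propto\J$ is diagonal in $l$, the correction $-24[J\otimes(\J\cdot T')]^{(2)}=8(1-4q^2)HJ^{(2)}$ is exactly what repairs the diagonal blocks. On those blocks it is equivalent to the value of $\alpha'_l$ in \eqref{eq:alphal}. It is not a $\Delta l=\pm1$ effect, and on the levels $T'$ has no $\Delta l=\pm1$ blocks at all.
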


Theorems~\ref{thm:T} and \ref{thm:Tp} are equivalent, the passage from one to the other being the two $\so(3)$ identities
\begin{equation}\label{eq:shiftid}
[T\otimes J^{(2)}]^{(3)}+[J^{(2)}\otimes T]^{(3)}=-4\,[J\otimes T]^{(3)},\qquad [J^{(2)}\otimes J^{(2)}]^{(3)}=-2\,J^{(3)},
\end{equation}
valid for every rank-2 tensor operator $T$, together with their analogues of ranks $0$, $1$, $2$ (Lemma~\ref{lem:so3}). Theorem~\ref{thm:Tp} is deduced from Theorem~\ref{thm:T} in this way in Section~\ref{sec:reorder}. Relation \eqref{eq:HR1} is the general identity \eqref{eq:HRgen} satisfied by the double commutator of $\J^2$ with a rank-2 tensor operator, combined with \eqref{eq:transvp}. The shift \eqref{eq:shift} is the analogue, one rank up, of the redefinition of generators that identifies the cubic Higgs algebra with the quadratic Hahn algebra \cite{Zhedanov_Higgs,FGVVZ}: here it removes the term $[J\otimes T]^{(3)}$ from the rank-3 relation and sets $a_2=0$ in the sector algebra of Section~\ref{sec:heun}. It is the only shift with a scalar coefficient depending on $H$ that does so.

\begin{proposition}\label{prop:shift}
Among the tensors $T_\beta=T+\beta\,J^{(2)}$ with $\beta$ a function of $H$, the rank-3 relation of $T_\beta$ contains no term $[J\otimes T_\beta]^{(3)}$ if and only if $\beta=2(H-1)$, i.e.\ $T_\beta=T'$.
\end{proposition}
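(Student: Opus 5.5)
The plan is to compute the rank-3 part of $[T_\beta\otimes T_\beta]$ directly from Theorem~\ref{thm:T} and the identities \eqref{eq:shiftid}. Since $H$ is central and $\beta=\beta(H)$, bilinearity of the coupling \eqref{eq:coupling} gives
\[
[T_\beta\otimes T_\beta]^{(3)}=[T\otimes T]^{(3)}+\beta\big([T\otimes J^{(2)}]^{(3)}+[J^{(2)}\otimes T]^{(3)}\big)+\beta^2[J^{(2)}\otimes J^{(2)}]^{(3)} .
\]
Here the symmetric form \eqref{eq:TTdef} for $L=3$ applies to $T_\beta$ itself, which is legitimate because the rank-3 part is the antisymmetrised coupling.

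Next I substitute \eqref{eq:TT3} for the first term and \eqref{eq:shiftid} for the other two. This gives
\[
[T_\beta\otimes T_\beta]^{(3)}=\big(8(H-1)-4\beta\big)[J\otimes T]^{(3)}+\big(8\Fi-2\beta^2\big)J^{(3)} .
\]
I then rewrite $T=T_\beta-\beta J^{(2)}$ in the first term. This uses $[J\otimes J^{(2)}]^{(3)}=J^{(3)}$, which holds by definition of $J^{(3)}$. The result is
\[
[T_\beta\otimes T_\beta]^{(3)}=\big(8(H-1)-4\beta\big)[J\otimes T_\beta]^{(3)}+\big(8\Fi-2\beta^2-\beta(8(H-1)-4\beta)\big)J^{(3)} .
\]
For $\beta=2(H-1)$ the coefficient of $J^{(3)}$ is $8\Fi+8(H-1)^2=8(2\Aa-H^2)$ after using $\Fi=2\Aa-2H+1$. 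This checks against \eqref{eq:TT3p}.

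The main obstacle is the "only if" direction. I must show that the coefficient $8(H-1)-4\beta$ is intrinsically determined. That is, the relation must not be rewritable with a different coefficient of $[J\otimes T_\beta]^{(3)}$ by trading it against $J^{(3)}$ or other terms. Equivalently, I need $[J\otimes T_\beta]^{(3)}$ to be linearly independent of $J^{(3)}$ over scalars depending on $H$ (and on $\J^2$ if coefficients are allowed there).

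I would prove this on the representation. On an energy level, take reduced matrix elements between $l$ and $l+2$. The operator $J^{(3)}$ is a polynomial in $\J$ and vanishes between different $l$. By contrast, $[J\otimes T_\beta]^{(3)}$ has block $l\to l+2$ proportional to the reduced element of $T$ in that block, which is nonzero by \eqref{eq:5pairs}, and the $6j$-recoupling factor for $J$ acting diagonally is also nonzero. Hence any relation $c\,[J\otimes T_\beta]^{(3)}=d\,J^{(3)}$ forces $c=0$ on each level. Since the levels $k$ take infinitely many values and $c$ depends only on $H$, this gives $c(H)\equiv0$. Therefore the coefficient vanishes if and only if $\beta=2(H-1)$.
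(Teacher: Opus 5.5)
Your argument is the paper's: substituting \eqref{eq:TT3} and \eqref{eq:shiftid}, then rewriting $[J\otimes T]^{(3)}=[J\otimes T_\beta]^{(3)}-\beta J^{(3)}$, gives the coefficient $8(H-1)-4\beta$; your extra check on the blocks $(l+2,l)$, showing that $[J\otimes T_\beta]^{(3)}$ cannot be traded against $J^{(3)}$, makes explicit a well-definedness the paper takes for granted (it needs levels containing some $l\ge1$ together with $l+2$, because the factor $\sqrt{l(l+3)/3}$ of Lemma~\ref{lem:red} vanishes at $l=0$, but infinitely many such levels exist). One slip: at $\beta=2(H-1)$ the coefficient of $J^{(3)}$ is $8\Fi-2\beta^2=8\Fi-8(H-1)^2$, not $8\Fi+8(H-1)^2$, and only the former equals $8(2\Aa-H^2)$.
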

\begin{proof}
By \eqref{eq:TT3} and \eqref{eq:shiftid}, $[T_\beta\otimes T_\beta]^{(3)}=\big(8(H-1)-4\beta\big)[J\otimes T]^{(3)}+(8\Fi-2\beta^2)J^{(3)}$, and $[J\otimes T_\beta]^{(3)}=[J\otimes T]^{(3)}+\beta J^{(3)}$ since $[J\otimes J^{(2)}]^{(3)}=J^{(3)}$. Hence
\[
[T_\beta\otimes T_\beta]^{(3)}=\big(8(H-1)-4\beta\big)[J\otimes T_\beta]^{(3)}+\big(8\Fi-8\beta(H-1)+2\beta^2\big)J^{(3)} ,
\]
whose first coefficient vanishes exactly for $\beta=2(H-1)$, and the second is then $8\Fi-8(H-1)^2=8(2\Aa-H^2)$, which is \eqref{eq:TT3p}.
\end{proof}

Since the rank-1 relation of $T_\beta$ is a multiple of $\J$ for every $\beta$, by \eqref{eq:transv} and Lemma~\ref{lem:so3}(iii), $T'$ is also the only shift by the quadrupole for which all commutators are polynomials in $\J$ alone. One cannot go further. A term $J^{(3)}$ with a coefficient depending on $\J^2$ cannot be removed by any polynomial redefinition $T'\to\alpha T'+\beta(\J^2,H)J^{(2)}$, since $[J^{(2)}\otimes J^{(2)}]^{(3)}$ has a constant coefficient. In particular the algebra is not quadratic, for a quadratic algebra would give, after such a shift, a constant coefficient in front of $J^{(3)}$, whereas \eqref{eq:TT3p} has the coefficient $8(2\Aa-H^2)$, linear in $\J^2$.

\section{The sectors of fixed energy and magnetic quantum number}\label{sec:sectors}

What $T'$ explains, and $\J$ alone does not, is the degeneracy between the different angular momenta $l=k,k-2,\dots$ of one level: $\J$ preserves $l$, whereas $T'$ connects $l$ to $l\pm2$. This is seen most directly in the sectors where the two conserved labels $H$ and $J_z$ are fixed, in which $T'_{zz}$ acts on the finite set of angular momenta of the level.

In the sector $(k,m)$ spanned by the $|n,l,m\rangle$ with $2n+l=k$, $l=l_0,l_0+2,\dots,k$, where $l_0$ is the smallest $l\ge\max(|m|,|q|)$ of the parity of $k$, $\J^2$ is diagonal and $T_{zz}$, $T'_{zz}$ are tridiagonal, with the same off-diagonal entries. With $M_+=\max(|m|,|q|)$, $M_-=\min(|m|,|q|)$ and $\langle c\rangle_l=(l+1+c)(l+2-c)$,
\begin{equation}\label{eq:5pairs}
(T_{zz})_{l,l+2}\,(T_{zz})_{l+2,l}=16\;\frac{\langle M_+\rangle_l\langle -M_+\rangle_l\langle M_-\rangle_l\langle -M_-\rangle_l\;(k-l)(k+l+3)}{(2l+1)(2l+3)^2(2l+5)} ,
\end{equation}
and
\begin{equation}\label{eq:dl}
(T'_{zz})_{ll}=-\tfrac23(4q^2-1)H\,\frac{l(l+1)-3m^2}{(2l-1)(2l+3)} .
\end{equation}
The products \eqref{eq:5pairs} are symmetric under $m^2\leftrightarrow q^2$, but the diagonal $d_l(q,m)=(T'_{zz})_{ll}$ satisfies
\[
d_l(q,m)-d_l(m,q)=\tfrac23H(m^2-q^2).
\]
Thus the sector matrices become symmetric under this exchange after replacing $T'_{zz}$ by $T'_{zz}+\tfrac23Hq^2\mathbf1$. The wave functions have the corresponding local symmetry. In the gauge of \cite{WuYang} the monopole harmonics are $Y_{q,l,m}=\sqrt{(2l+1)/4\pi}\,e^{i(m+q)\varphi}d^l_{m,-q}(\theta)$ (Section~\ref{sec:eth}), and $d^l_{m,-q}=d^l_{q,-m}$, so that $Y_{q,l,m}=Y_{m,l,q}$ as local functions: the sector $(k,m)$ of the system of charge $q$ and the sector $(k,q)$ of the system of charge $m$ are spanned by the same functions $\phi_{nl}(r)Y_{q,l,m}$, and the two operators $T'_{zz}$ have matrices differing by the scalar above. This is not a global gauge equivalence between the different monopole bundles. In terms of $s=l+\tfrac32$, each pair $\langle c\rangle_l$ is $s^2-(c-\tfrac12)^2$, so that the numerator of \eqref{eq:5pairs} is a product of five factors $s^2-c_i^2$, with $\{c_i\}=\{M_+\pm\tfrac12,\,M_-\pm\tfrac12,\,H\}$, against three for the Jacobi matrix of a Hahn polynomial and four for a Racah polynomial. The products \eqref{eq:5pairs} carry the same information as the structure function of the deformed-oscillator algebra of \cite{HMZ2018}. The relation with the Hahn polynomials is as follows. Let $Y=K_0^{(2)}$ be the diagonal generator of the second factor in the Clebsch--Gordan problem $D^+(k_1)\otimes D^+(k_2)$ of $\su(1,1)$, written in the coupled basis of the total representations $D^+(k_1+k_2+j)$, $j=0,\dots,N$ \cite{VdJ,GVZ_LMP}. It is a tridiagonal matrix whose entries are given in Section~\ref{sec:proofs}. With $j=(l-l_0)/2$, the pair $(\J^2,Y)$ is a Hahn pair and generates a Hahn algebra.

\begin{proposition}\label{prop:heunhahn}
In every sector $(k,m)$, let $N=(k-l_0)/2$, $2(k_1+k_2)=l_0+\tfrac32$, and let $2|k_1-k_2|$ be any of the four numbers $M_+\pm\tfrac12$, $M_-\pm\tfrac12$ other than $l_0-\tfrac12$, and call $a$ and $b$ the two remaining ones and $\varepsilon=-\operatorname{sign}(k_1-k_2)$. Then there is a diagonal matrix $D$ such that
\begin{gather}\label{eq:heunhahn}
T'_{zz}=D\,\big(\tau_1\,\J^2Y+\tau_2\,Y\J^2-2H\,\J^2+\tau_4\,Y+\tau_0\big)\,D^{-1},\\
\tau_1+\tau_2=8,\qquad \tau_2-\tau_1=\pm4(a+\varepsilon b),\qquad \tau_4=-6+8\varepsilon ab ,\nonumber
\end{gather}
with $\tau_0$ a constant. Thus $T'_{zz}$ is, up to a diagonal similarity, the algebraic Heun operator of the Hahn pair $(\J^2,Y)$, for each of the admissible choices of $(k_1,k_2)$.
\end{proposition}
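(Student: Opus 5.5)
Both sides of \eqref{eq:heunhahn} are tridiagonal in the basis $j=(l-l_0)/2$, $j=0,\dots,N$, and two tridiagonal matrices are related by a diagonal similarity exactly when their diagonals agree and the products of their symmetric off-diagonal pairs agree. The plan is therefore to compare two invariants: the diagonals and the products $(\cdot)_{j,j+1}(\cdot)_{j+1,j}$. I would not try to construct $D$ directly. Once these invariants match, $D$ is recovered recursively from the ratios of off-diagonal entries, since all off-diagonal products are nonzero inside the sector, by \eqref{eq:5pairs}.

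On the left, the diagonal is $d_l$ from \eqref{eq:dl} and the pair products are \eqref{eq:5pairs}, both taken as given. On the right, I take the standard Hahn/Clebsch--Gordan entries of $Y=K_0^{(2)}$ in the coupled basis of $D^+(k_1+k_2+j)$, as recalled in Section~\ref{sec:proofs}.
\begin{itemize}
\item Writing $\lambda_j=l(l+1)$ with $l=l_0+2j$, the operator $\tau_1\J^2Y+\tau_2Y\J^2+\tau_4Y$ has off-diagonal entries $(\tau_1\lambda_j+\tau_2\lambda_{j+1}+\tau_4)Y_{j,j+1}$ and the transposed analogue.
\item Its diagonal is $\big((\tau_1+\tau_2)\lambda_j+\tau_4\big)Y_{jj}-2H\lambda_j+\tau_0$.
\end{itemize}

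For the off-diagonal comparison, the Hahn pair product $Y_{j,j+1}Y_{j+1,j}$ has, in the variable $s=l+\tfrac32$, a numerator of three factors of type $s^2-c^2$ over the usual denominator $(2l+1)(2l+3)^2(2l+5)$ up to normalisation. The choice $2(k_1+k_2)=l_0+\tfrac32$ matches the bottom of the chain, and the value $2|k_1-k_2|$ together with $N$ (through $H$) supplies the remaining factors. The Heun factor $(\tau_1\lambda_j+\tau_2\lambda_{j+1}+\tau_4)(\tau_2\lambda_j+\tau_1\lambda_{j+1}+\tau_4)$ is symmetric in $\lambda_j,\lambda_{j+1}$ and a polynomial in $s$ of degree four. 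The requirement is that it equals, up to the constant, $(s^2-a^2)(s^2-b^2)$ times whatever corrects the denominator. This yields $\tau_1+\tau_2=8$ from the leading coefficient. It also yields $\tau_2-\tau_1=\pm4(a+\varepsilon b)$ and $\tau_4=-6+8\varepsilon ab$ from factoring each linear-in-$\lambda$ factor as a product $(2s\pm\cdots)$. Concretely, $\lambda_{j+1}-\lambda_j=4s$, so each factor becomes $8\lambda_j+\tau_4+\tau_2\cdot 4s$ or its mirror, and this must equal $2(s\pm a)(s\pm\varepsilon b)$-type quadratics.

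For the diagonal comparison, subtracting $-2H\lambda_j$ leaves the claim that $d_l+2H\lambda_j-\tau_0=(8\lambda_j+\tau_4)Y_{jj}$. Here $Y_{jj}$ is a rational function with denominator $(2l-1)(2l+3)$ up to shift, and $\tau_0$ is chosen to absorb the constant. I would verify this as an identity of rational functions in $l$ by clearing denominators and comparing coefficients, using $H=k+\tfrac32$ and $k=l_0+2N$.

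The main obstacle is bookkeeping across the four admissible choices and the signs. The excluded value $l_0-\tfrac12$ must be the one that cancels against the bottom factor, so that the Hahn chain terminates at $j=0$, and $\varepsilon$ then fixes which sign of $b$ appears. I would first treat the generic case $M_+=|m|>|q|$ with $l_0=|m|$ and the choice $2|k_1-k_2|=M_-+\tfrac12$, checking both invariants symbolically. I would then argue that the other choices follow because \eqref{eq:5pairs} is symmetric in the $c_i$, so permuting the roles of $a,b$ and $2|k_1-k_2|$ only rearranges factors, while the diagonal identity changes by a constant absorbed into $\tau_0$. The $m^2\leftrightarrow q^2$ symmetry after the scalar shift covers the case $|q|>|m|$.
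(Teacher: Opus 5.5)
Your route is the paper's: match the diagonals and the off-diagonal products of the two tridiagonal matrices, then build $D$ recursively from the ratios of the off-diagonal entries. Two details: the ratio of the products of $T'_{zz}$ to those of $Y$ is exactly $64(s^2-a^2)(s^2-b^2)$, since the denominators are the same, and each Heun factor is $8(s\pm a)(s\pm\varepsilon b)$, not $2(\cdots)$. Where you propose one generic case plus symmetry, the paper handles every choice at once. It notes that the $l$-dependent part of $(8\lambda_l+\tau_4)Y_{ll}-2H\lambda_l$ is $2H\,c\,ab/s(s-2)$, with $c=(l_0-\tfrac12)\,2|k_1-k_2|$, and that $c\,ab=(4m^2-1)(4q^2-1)/16$ is the product of all four numbers. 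This is the justification your unproved claim, that other choices only shift $\tau_0$, needs, and it also covers the case $l_0=M_++1$, which your generic case omits.
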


The similarity $D$ is needed because the bilinear form in \eqref{eq:heunhahn} is not symmetric when $\tau_1\neq\tau_2$, and it is a diagonal gauge whose successive ratios $D_{l+2}/D_l$ are square roots of rational functions of $l$. The symmetric form $\tau_1=\tau_2$ occurs exactly when $a+\varepsilon b=0$, that is when two of the five pairs coincide: this happens when $M_-=0$, when $|m|=|q|$, and when $|m|-|q|=\pm1$ with $l_0=M_++1$. There $T'_{zz}=4\{\J^2,Y\}-2H\J^2+\tau_4Y+\tau_0$ is a Heun--Hahn operator in the sense of \cite{VZ_HeunHahn}, and at $q=0$ every sector is of this kind, the pair being the spherical and cylindrical bases of the oscillator. In the other sectors the Heun operator is of the general bilinear form of \cite{GVZ_Heun}. That $T'_{zz}$ is a Heun operator of several Hahn pairs at once, up to six for a given sector, reflects the fact that the five pairs of \eqref{eq:5pairs} can be split into three Hahn pairs and two remaining ones in several ways.

\section{The Heun--Racah algebra of the sectors}\label{sec:heun}

\begin{theorem}\label{thm:HR}
In each sector $(k,m)$, with $A=\J^2$, $B=T'_{zz}$ and $C=[A,B]$,
\begin{align}
[A,C]&=8\{A,B\}-12\,B+a_5A+a_8,\label{eq:AC}\\
[B,C]&=-128\,A^3-8\,B^2+a_3A^2-a_5B+a_6A+a_9,\label{eq:BC}
\end{align}
with
\begin{gather*}
a_5=\tfrac83(4q^2-1)H,\qquad a_8=-8(4q^2-1)Hm^2,\qquad a_3=96H^2+192(m^2+q^2)-896,\\
a_6=-1024+704(m^2+q^2)-64(m^2+q^2)^2-128m^2q^2+304H^2-128H^2(m^2+q^2),\\
a_9=-256+320(m^2+q^2)-64(m^4+q^4)-384m^2q^2+64m^2q^2(m^2+q^2)\\
\qquad+\tfrac{1024}{9}H^2-\tfrac{1280}{9}H^2q^2+\tfrac{256}{9}H^2q^4-144H^2m^2+128H^2m^2q^2+32H^2m^4 .
\end{gather*}
\end{theorem}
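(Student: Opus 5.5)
The first relation \eqref{eq:AC} needs no new computation: it is the $zz$ component of \eqref{eq:HR1}. Since $\J^2$ and $J_z$ commute with each other and $T'_{zz}$ preserves $J_z$, the double commutator $[\J^2,[\J^2,T'_{zz}]]$ restricts to every sector $(k,m)$. The Cartesian component $J_{(zz)}=J_z^2-\tfrac13\J^2$ then becomes $m^2-\tfrac13A$. So the last term of \eqref{eq:HR1} contributes
\[
8(1-4q^2)H\big(m^2-\tfrac13A\big)=a_5A+a_8,
\qquad a_5=\tfrac83(4q^2-1)H,\quad a_8=-8(4q^2-1)Hm^2,
\]
as stated.

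For \eqref{eq:BC} I would work with the explicit sector matrices rather than attempt a full covariant recoupling of $[T'\otimes T']^{(L)}$ down to the $zz$ component. In the sector, $A$ is diagonal with entries $\lambda_l=l(l+1)$, and $B$ is tridiagonal in steps of $2$. Its diagonal $d_l$ is given by \eqref{eq:dl}, and its off-diagonal products $p_l=B_{l,l+2}B_{l+2,l}$ are given by \eqref{eq:5pairs}. Up to a diagonal similarity, which leaves both sides of \eqref{eq:BC} covariant, only $d_l$ and $p_l$ enter. The matrix $C$ has entries $C_{l,l\pm2}=(\lambda_l-\lambda_{l\pm2})B_{l,l\pm2}$, and $[B,C]$ is pentadiagonal. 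I would compare the three bands of both sides in turn.

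First, the band at distance $4$. Here $[B,C]_{l,l+4}=B_{l,l+2}B_{l+2,l+4}\big((\lambda_{l+2}-\lambda_{l+4})-(\lambda_l-\lambda_{l+2})\big)=-8B_{l,l+2}B_{l+2,l+4}$, which is exactly the $-8B^2$ term. No other term on the right-hand side reaches distance $4$.

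Second, the band at distance $2$. After dividing by $B_{l,l+2}$, the identity becomes linear in $d_l,d_{l+2}$:
\[
-(4l+6)\,(d_l+d_{l+2})\cdot(\text{sign})\;=\;-8(d_l+d_{l+2})-a_5 .
\]
This uses \eqref{eq:dl}, and it is a rational identity in $l$ to be checked directly. It must hold for all $m$, and it fixes the sign and $H$-dependence of $a_5$ consistently with \eqref{eq:AC}.

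Third, the diagonal. It reads
\[
2(\lambda_l-\lambda_{l+2})p_l-2(\lambda_{l-2}-\lambda_l)p_{l-2}=-128\lambda_l^3-8(d_l^2+p_l+p_{l-2})+a_3\lambda_l^2-a_5d_l+a_6\lambda_l+a_9 .
\]
Substituting \eqref{eq:5pairs} and \eqref{eq:dl}, this becomes an identity between rational functions of $l$ with parameters $H,m^2,q^2$. Clearing the denominators $(2l-1)(2l+1)(2l+3)^2(2l+5)\cdots$ and matching coefficients determines $a_3,a_6,a_9$ as the polynomials stated.

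It remains to treat the ends of the sector. At the top, $p_k=0$ by the factor $(k-l)$. At the bottom, $p_{l_0-2}$ vanishes because one of the factors $\langle\pm M_+\rangle_{l_0-2}$ or $\langle\pm M_-\rangle_{l_0-2}$ vanishes, given the definition of $l_0$ and its parity. So the generic-$l$ identities hold on the whole finite sector, and by the same formulas for half-integer $q$ as well. The lowest values $l_0\in\{0,\tfrac12,1\}$, where $(2l-1)$ may vanish in \eqref{eq:dl}, I would check separately.

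The main obstacle is the diagonal identity. It is of high degree in $l$, with five quadratic factors in the numerator of $p_l$, and the left-hand side must combine $p_l$ and $p_{l-2}$ so that all poles at $2l+1=0$, $2l+3=0$, $2l-1=0$ and so on cancel. I would organise the check in the variable $s=l+\tfrac32$, in which $p_l$ is a product of the factors $s^2-c_i^2$ with $c_i\in\{M_\pm\pm\tfrac12,H\}$. I would first verify the pole cancellation and then read off $a_3,a_6,a_9$ from the polynomial part. The $m^2\leftrightarrow q^2$ symmetry of $p_l$, together with the scalar shift relating $d_l(q,m)$ and $d_l(m,q)$, serves as a consistency check on the resulting coefficients.
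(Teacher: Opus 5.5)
Your argument for \eqref{eq:AC} is the paper's own (the $zz$ component of \eqref{eq:HR1}). Your band-by-band plan for \eqref{eq:BC} is the direct sector check of Section~\ref{sec:proofsect}, which the paper gives alongside its main proof; that main proof reduces $[B,C]$ through \eqref{eq:BCred} to the covariant relations of Theorem~\ref{thm:Tp}, so it also holds in $\Alg(H,q)$ with $J_z^2$ in place of $m^2$. The plan is sound and your distance-$4$ band is correct. However, the other two band identities are miscomputed, and as written they would fail.

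\textbf{Distance $2$.} Only $l''=l,l+2$ contribute, and $C_{ll'}=(\lambda_l-\lambda_{l'})B_{ll'}$ gives $[B,C]_{l,l+2}=(\lambda_{l+2}-\lambda_l)(d_{l+2}-d_l)B_{l,l+2}$. This is a \emph{difference} of diagonal entries, not a sum. The identity to check is $(4l+6)(d_{l+2}-d_l)=-8(d_l+d_{l+2})-a_5$, i.e.\ $(4l+14)d_{l+2}-(4l-2)d_l=-a_5$. It holds because \eqref{eq:dl} gives $(2l+7)d_{l+2}-(2l-1)d_l=\kappa(\lambda_{l+2}-\lambda_l)/(2l+3)=2\kappa$ with $\kappa=-\tfrac23(4q^2-1)H$. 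Your version reduces to $(8\mp(4l+6))(d_l+d_{l+2})=-a_5$, which is false in general: its left side depends on $m$.

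\textbf{Diagonal.} The correct entry is $[B,C]_{ll}=2\sum_{l''}(\lambda_{l''}-\lambda_l)B_{ll''}B_{l''l}=2(\lambda_{l+2}-\lambda_l)p_l+2(\lambda_{l-2}-\lambda_l)p_{l-2}$. This is the negative of your left side, and your sign is also inconsistent with the convention you used at distance $4$.

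The sign matters. In $s=l+\tfrac32$, with $\lambda_{l+2}-\lambda_l=4s$ and $\lambda_l-\lambda_{l-2}=4(s-2)$, your identity becomes $8(s-1)(p_{l-2}-p_l)+8d_l^2+a_5d_l=-128\lambda^3+a_3\lambda^2+a_6\lambda+a_9$. Since $p_l\sim-s^6$, the left side grows like $+96s^6$ while the right side grows like $-128s^6$. It also keeps (generically) simple poles at $s=-1$ and $s=3$, both at $\lambda=\tfrac{15}{4}$. The $d$-terms are singular only at $s(s-2)=0$ and cannot absorb them, so no $a_3,a_6,a_9$ could be read off.

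With the correct sign you obtain the recurrence \eqref{eq:rec}:
\[
8(s-3)\,p_{l-2}-8(s+1)\,p_l-8d_l^2-a_5d_l=128\lambda^3-a_3\lambda^2-a_6\lambda-a_9 .
\]
Here everything works out:
\begin{itemize}
\item the factors $s+1$ and $s-3$ cancel the poles of $p_l$ at $s=-1$ and of $p_{l-2}$ at $s=3$;
\item the residues of the two $p$-terms at $s=1$ cancel each other;
\item the poles at $s=0,2$ cancel against $-8d_l^2-a_5d_l$;
\item the leading term $8\big[(s+1)s^6-(s-3)(s-2)^6\big]\sim128s^6$ matches the right side.
\end{itemize}

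\textbf{Bottom of the chain.} Add $l_0=\tfrac32$ to your list of separate checks. There the denominator factor $2l+1$ of \eqref{eq:5pairs} vanishes at $l=l_0-2$. The value $p_{l_0-2}=0$ is obtained only after cancelling it against the double zero $(s^2-1)^2$ of the numerator at $s=1$.
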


Theorem~\ref{thm:HR} is the restriction to a sector of Proposition~\ref{prop:status}(ii) below, which establishes \eqref{eq:AC}--\eqref{eq:BC}, with $J_z^2$ in place of $m^2$, as consequences of the covariant relations of Theorem~\ref{thm:Tp} and of the recoupling identities of $\so(3)$. These are the defining relations of the algebra $\mathcal A_{\rm gen}$ of \cite{CPV_CY}, eq.~(1) there, with
\[
a_0=0,\qquad a_0'=-128,\qquad a_1=8,\qquad a_2=0,\qquad a_4=-12 ,
\]
and the parameters $a_3,a_5,a_6,a_8,a_9$ above, polynomials in the central elements $H$, $J_z$ and $q^2$. The nontrivial consistency is that $a_1$, $a_2$ and $a_5$ take the same value in \eqref{eq:AC} and \eqref{eq:BC}. The specialisation $a_0=0$ of $\mathcal A_{\rm gen}$ is, as noted in \cite{CPV_CY}, a particular case of the Heun--Racah algebra of \cite{BCTVZ}, generated by a Racah generator and a Heun--Racah operator, in agreement with Proposition~\ref{prop:heunhahn}. It also appears as the cubic symmetry algebra of superintegrable systems with third-order integrals \cite{Marquette}. Since $a_0=0$ and $a_0'\neq0$, the sector relations in these generators are neither the Racah algebra ($a_0'=a_3=0$) nor the cubic algebra of Lehrer and Racah for the missing label of $\su(3)\supset\so(3)$, which is the stratum $a_0=-6$, $a_0'=-2$ \cite{Racah,CPV_CY}. For the unshifted generator $B_0=T_{zz}=B+\tfrac23(H-1)A-2(H-1)m^2\mathbf1$, the relations have the same general cubic form, with $a_2=-\tfrac{32}{3}(H-1)$ and correspondingly transformed $a_3,a_5,a_6,a_8,a_9$. The coefficients of Theorem~\ref{thm:HR} become symmetric in $(m^2,q^2)$ under the scalar change $B\mapsto B+\tfrac23Hq^2\mathbf1$, so the two exchanged sectors are related by this affine change of generator. This duality between the magnetic charge and the magnetic quantum number is already visible in \eqref{eq:5pairs}.

\paragraph{Origin of the Heun operator.}
At $q=0$, let $D_{ij}$ be the quadrupole of $\su(3)$, the traceless symmetric part of $a_i^\dagger a_j$. Then
\begin{equation}\label{eq:q0}
\tilde T=-2\,\{\J^2-1,\;D\},\qquad\text{that is}\qquad T=-2\{\J^2,D\}+4D+2(1-2H)J^{(2)} .
\end{equation}
After the quadrupole shift in \eqref{eq:shift}, the conserved tensor of \cite{LMV} is therefore, at $q=0$, an algebraic Heun operator \cite{GVZ_Heun} of the pair formed by the Casimir operator of $\so(3)$ and the quadrupole of $\su(3)$: a particular bilinear expression in the two, written in tensor form. Its contraction $\J\cdot\tilde T\cdot\J=-4(\J^2-1)X_3$ reproduces the labelling operator $X_3=\J\cdot D\cdot\J$ of Bargmann and Moshinsky \cite{BM} for the chain $\su(3)\supset\so(3)$, which in the symmetric representations is a function of $(H,\J^2)$. In a sector, $D_{zz}=n_z-\tfrac13(H-\tfrac32)$ and $(\J^2,n_z)$ is the Hahn pair of Section~\ref{sec:sectors}, with $Y=\tfrac12(H-n_z-\tfrac12)$, and $T'_{zz}$ is its Heun--Hahn operator. For $q\neq0$ the Cartesian side of the picture, $D$ and the number operators $n_i$, disappears, but the Hahn pairs and the Heun operators of Proposition~\ref{prop:heunhahn} remain, $q^2$ entering only through their parameters.

\section{The covariant algebra and its representations}\label{sec:reps}
\subsection{The abstract algebra and its representations}
\begin{definition}\label{def:alg}
Let $H$ be central and $q^2$ a parameter. The algebra $\Alg(H,q)$ is generated by $\J$, satisfying the $\so(3)$ relations, and by a rank-2 tensor operator $T'$ with respect to $\J$, subject to \eqref{eq:transvp}, \eqref{eq:TT3p}, \eqref{eq:TT1p}, \eqref{eq:TT2p} and \eqref{eq:trTp2}.
\end{definition}

Two algebras must be distinguished: the concrete symmetry algebra $\mathcal S(H,q)$, the algebra of operators on the Hilbert space of the system generated by $\J$ and $T'$, and the abstract algebra $\Alg(H,q)$ presented by generators and relations. Their relation is as follows.

\begin{proposition}\label{prop:status}
\begin{enumerate}[label=(\roman*)]
\item The map sending the generators of $\Alg(H,q)$ to the operators $\J$, $T'$ of the system extends to a surjective homomorphism $\Alg(H,q)\to\mathcal S(H,q)$. Whether it is injective, i.e.\ whether \eqref{eq:transvp}--\eqref{eq:trTp2} is a complete presentation of $\mathcal S(H,q)$, is not known.
\item In $\Alg(H,q)$ the elements $A=\J^2$ and $B=T'_{zz}$ commute with $J_z$ and $H$, and satisfy the relations \eqref{eq:AC}--\eqref{eq:BC} of Theorem~\ref{thm:HR} with $m^2$ replaced by $J_z^2$.
\end{enumerate}
\end{proposition}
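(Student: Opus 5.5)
Part (i) is essentially a restatement of Theorem~\ref{thm:Tp}. The defining relations \eqref{eq:transvp}--\eqref{eq:trTp2} of $\Alg(H,q)$ are identities satisfied by the concrete operators $\J$ and $T'=T+2(H-1)J^{(2)}$ of the system, with $H$ acting as the central Hamiltonian. These identities were derived from Proposition~\ref{prop:dyn} by reordering and pass to $T'$ through the identities \eqref{eq:shiftid}. The $\so(3)$ relations and the rank-2 tensor character of $T'$ hold because $T$ is built covariantly from the vector operators $b,b^\dagger$, and $J^{(2)}$ is built from $\J$. By the universal property of a presented algebra, the assignment of generators therefore extends to a homomorphism. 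It is surjective because $\mathcal S(H,q)$ is by definition generated by the images. I would state the injectivity question as open and not attempt it.

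For part (ii), $[J_z,\J^2]=0$ is immediate. $[J_z,T'_{zz}]=0$ follows from $[J_z,T'_0]=0$ in the tensor-operator relations, and $H$ is central by definition. The relations themselves I would derive entirely inside $\Alg(H,q)$.

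\textbf{Relation \eqref{eq:AC}.} Take the $zz$ (i.e.\ $M=0$) component of \eqref{eq:HR1}. This is the general identity \eqref{eq:HRgen} for the double commutator of $\J^2$ with any rank-2 tensor operator, which uses only the $\so(3)$ relations. Combined with the transversality \eqref{eq:transvp}, it gives $[A,[A,B]]=8\{A,B\}-12B+8(1-4q^2)HJ_{(zz)}$. Substituting $J_{(zz)}=J_z^2-\tfrac13\J^2$ produces $a_5A+a_8$ with $a_5=\tfrac83(4q^2-1)H$ and $a_8=-8(4q^2-1)HJ_z^2$.

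\textbf{Relation \eqref{eq:BC}.} This is the substantive step. I would expand $[T'_0,[\J^2,T'_0]]$ using $[\J^2,T'_0]=\sum_\mu$ (terms $J_{\pm}T'_{\mp}$-type and $J_zT'_0$) obtained from $\J^2=J_z^2+\tfrac12\{J_+,J_-\}$ and the rank-2 commutation rules. The result is a sum of products of $J$'s with $T'_\mu T'_{\nu}$ with $\mu+\nu=0$. I would then rewrite each $T'_\mu T'_{-\mu}$ via \eqref{eq:commcoupled} and its anticommutator analogue as a combination of $[T'\otimes T']^{(L)}_0$, $L=0,\dots,4$.
\begin{itemize}
\item The odd parts $L=1,3$ are given by \eqref{eq:TT1p} and \eqref{eq:TT3p} as polynomials in $\J$.
\item The rank-0 part is \eqref{eq:trTp2}.
\item The rank-2 part is \eqref{eq:TT2p}, linear in $T'$.
\end{itemize}
The main obstacle I expect is the rank-4 component $[T'\otimes T']^{(4)}_0$, which is not among the defining relations. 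I would try to show that it cancels in the double commutator. The reason is that the commutator with $\J^2$ raises rank only by the coupling with one $J$, so the combination appearing should be of the form $[J\otimes[J\otimes\{T',T'\}]]$ projected to $M=0$ inside the $J_z$-centraliser. If it does not cancel, I would eliminate it using the $\so(3)$ recoupling (six-$j$) identities of Lemma~\ref{lem:so3} applied to $[J^{(2)}\otimes[T'\otimes T']^{(L)}]$, together with the transversality \eqref{eq:transvp}. Transversality is precisely what should express contractions of $\J$ with $T'\otimes T'$ in lower-rank terms.

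After these substitutions, everything reduces to polynomials in $\J^2$, $J_z$, $H$ and $q^2$, plus terms linear in $T'_0$ and $\{\J^2,T'_0\}$. The $B^2=(T'_{zz})^2$ term comes from the rank-0,2,4 anticommutator pieces recombined. Finally I would collect coefficients and check them against $a_0'=-128$, $-8B^2$, $-a_5B$, $a_3,a_6,a_9$. As a cross-check, the coefficients can be compared on the sector matrices \eqref{eq:5pairs}--\eqref{eq:dl}, where both sides are tridiagonal-type polynomials in $l$. Because $J_z$ commutes with $A$ and $B$, it is central in the subalgebra they generate, so writing $J_z^2$ in place of $m^2$ is legitimate.
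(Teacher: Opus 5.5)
Your part (i) and your derivation of \eqref{eq:AC} (the $zz$ component of \eqref{eq:HRgen} combined with \eqref{eq:transvp}) coincide with the paper's. Your overall plan for \eqref{eq:BC} also coincides with it: commutators of components of $T'$ are handled by the odd-rank relations and anticommutators by the even-rank ones. The gap is at the step you flag yourself, the rank-4 piece.

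That piece does not cancel in $[B,C]$. Moving the $J_i$ of $C=\sum_i(J_i[J_i,B]+[J_i,B]J_i)$ through $B$ gives
\[
[B,C]=\sqrt{\tfrac23}\Big(\{J_+,[T'_0,T'_{-1}]\}+\{J_-,[T'_0,T'_1]\}\Big)-4\{T'_1,T'_{-1}\}.
\]
The last term comes from $-2\sum_i[J_i,B]^2$ and contains no $J$ at all. Your heuristic that commuting with $\J^2$ couples only one $J$ misses this term. Its rank-4 part is $-8\langle 2\,1\,2\,{-1}|4\,0\rangle\,[T'\otimes T']^{(4)}_0=-16\sqrt{2/35}\,[T'\otimes T']^{(4)}_0\neq0$.

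Your fallback cannot remove it either. No defining relation of $\Alg(H,q)$ involves $[T'\otimes T']^{(4)}$, and no recoupling identity or use of \eqref{eq:transvp} can express it through polynomials in $\J$ plus terms linear in $T'$. On physical levels with at least three angular momenta it has nonzero blocks $(l+4,l)$, which such expressions never have, and any identity in $\Alg(H,q)$ would pass to these representations through the homomorphism of (i). So your statement that after substitution everything reduces to polynomials in $\J$ plus terms linear in $T'_0$ is false for $[B,C]$.

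What closes the argument is the term $-8B^2$ on the right of \eqref{eq:BC}. Since $B^2=\tfrac23T_0'^2$ and $\langle2\,0\,2\,0|4\,0\rangle=\sqrt{18/35}$, its rank-4 part compensates exactly. With the Clebsch--Gordan values at $L=0,2$ one gets
\[
-4\{T'_1,T'_{-1}\}+8B^2=\tfrac83\Tr T'^2-\tfrac{4\sqrt{14}}{3}[T'\otimes T']^{(2)}_0 .
\]
So you must work with $[B,C]+8B^2$, not with $[B,C]$. The coefficient $-8$ of $B^2$ is forced by this cancellation; it is not one more coefficient to check at the end. Your remark that $B^2$ arises from recombining the rank-0, 2, 4 pieces points this way, but it contradicts the plan you state just before it.

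A smaller point: the commutator terms enter as $\{J_\pm,[T'_0,T'_{\mp1}]\}$, not as products with $\mu+\nu=0$. So \eqref{eq:TT1p} and \eqref{eq:TT3p} are needed at $M=\mp1$, and the paper uses $J^{(3)}_{\pm1}$ explicitly for this. With these corrections your plan becomes the paper's proof. Substituting \eqref{eq:TT2p} and \eqref{eq:trTp2} and expanding directly then yields $a_3$, $a_6$, $a_9$.
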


\begin{proof}
(i) is Theorem~\ref{thm:Tp}. (ii) Relation \eqref{eq:AC} is the $zz$ component of \eqref{eq:HR1}, which holds in $\Alg(H,q)$ by the identity \eqref{eq:HRgen} of Lemma~\ref{lem:so3} and \eqref{eq:transvp}. With $J_{(zz)}=J_z^2-\tfrac13\J^2$ it gives $a_5=\tfrac83(4q^2-1)H$ and $a_8=-8(4q^2-1)HJ_z^2$. For \eqref{eq:BC}, with $C=[A,B]=\sum_i\big(J_i[J_i,B]+[J_i,B]J_i\big)$, moving the $J_i$ through $B$ gives the identity
\[
[B,C]=\sum_i\Big(J_i\big[B,[J_i,B]\big]+\big[B,[J_i,B]\big]J_i\Big)-2\sum_i[J_i,B]^2 .
\]
Since $B=\sqrt{2/3}\,T'_0$ and $[J_{\pm1},B]=\mp\sqrt2\,T'_{\pm1}$, $[J_0,B]=0$, this reads
\begin{equation}\label{eq:BCred}
[B,C]=\sqrt{\tfrac23}\Big(\{J_+,[T'_0,T'_{-1}]\}+\{J_-,[T'_0,T'_1]\}\Big)-4\{T'_1,T'_{-1}\} .
\end{equation}
The commutators are given by \eqref{eq:commcoupled}, $[T'_0,T'_{\mp1}]=\mp2\sqrt{\tfrac{3}{10}}\,[T'\otimes T']^{(1)}_{\mp1}\pm\tfrac{2}{\sqrt5}\,[T'\otimes T']^{(3)}_{\mp1}$, hence by \eqref{eq:TT1p} and \eqref{eq:TT3p} they are polynomials in $\J$. With $J^{(3)}_{1}=\tfrac{\sqrt{30}}{20}\,J_+\Xi$, $J^{(3)}_{-1}=-\tfrac{\sqrt{30}}{20}\,\Xi J_-$, $\Xi=\J^2-5J_z^2-5J_z-2$, one finds $\{J_+,J_{-1}\}=-\{J_-,J_{+1}\}=\sqrt2(\J^2-J_z^2)$ and $\{J_+,J^{(3)}_{-1}\}=-\{J_-,J^{(3)}_{+1}\}=-\tfrac{\sqrt{30}}{10}\,\Omega$ with $\Omega=\J^4-6\J^2J_z^2-2\J^2+5J_z^4+7J_z^2$, so that the first term of \eqref{eq:BCred} is $-\tfrac{24}{5}f_1(\J^2-J_z^2)-\tfrac{32}{5}(2\Aa-H^2)\,\Omega$. The anticommutator and $B^2=\tfrac23T_0'^2$ decompose by \eqref{eq:TTdef} into the parts of rank $0$, $2$, $4$ of $[T'\otimes T']$. With $\langle2\,1\,2\,{-1}|L\,0\rangle=-\tfrac{1}{\sqrt5},\tfrac{1}{\sqrt{14}},\sqrt{\tfrac{8}{35}}$ and $\langle2\,0\,2\,0|L\,0\rangle=\tfrac{1}{\sqrt5},-\sqrt{\tfrac27},\sqrt{\tfrac{18}{35}}$ for $L=0,2,4$,
\[
-4\{T'_1,T'_{-1}\}+8B^2=\tfrac{8}{3}\Tr T'^2-\tfrac{4\sqrt{14}}{3}\,[T'\otimes T']^{(2)}_0 ,
\]
the parts of rank $4$ cancelling exactly, which is why the coefficient of $B^2$ in \eqref{eq:BC} is $-8$. By \eqref{eq:TT2p} and $J^{(2)}_0=\tfrac{1}{\sqrt6}(3J_z^2-\J^2)$, $-\tfrac{4\sqrt{14}}{3}[T'\otimes T']^{(2)}_0=-\tfrac83(4q^2-1)H\,B-\tfrac{16}{9}\,p_2\,(3J_z^2-\J^2)$. Collecting,
\[
[B,C]+8B^2+a_5B=-\tfrac{24}{5}\,f_1\,(A-J_z^2)-\tfrac{32}{5}\,(2\Aa-H^2)\,\Omega+\tfrac83\Tr T'^2-\tfrac{16}{9}\,p_2\,(3J_z^2-A) ,
\]
a polynomial in $A=\J^2$, $J_z^2$, $H$ and $q^2$, and direct expansion with the values of $f_1$, $p_2$ and $\Tr T'^2$ of Theorem~\ref{thm:Tp} shows that it equals $-128A^3+a_3A^2+a_6A+a_9$ with the $a_i$ of Theorem~\ref{thm:HR}, $m^2$ replaced by $J_z^2$. This proves \eqref{eq:BC} in $\Alg(H,q)$, and Theorem~\ref{thm:HR} follows through the homomorphism of (i).
\end{proof}

Thus the centraliser of $J_z$ in $\Alg(H,q)$ contains the image of a specialised Heun--Racah algebra, and $\Alg(H,q)$ is the $\so(3)$-covariant extension of a Heun--Racah algebra, the two labels $(\J^2,J_z)$ holding the place of the Racah generator and the tensor $T'$ that of the Heun operator.

Fix $q\in\tfrac12\mathbb Z$ and $k\ge|q|$ with $k-|q|\in\mathbb Z$, and let $V=\bigoplus_{l\in\Lambda}D_l$ be a finite-dimensional representation of $\Alg(H,q)$ in which $H=k+\tfrac32$, each angular momentum occurs at most once, and $\Lambda\subset k+2\mathbb Z$ is an arithmetic progression of step~2: the energy and the parity of the angular momenta are those of the level $k$ of the system. By the Wigner--Eckart theorem, $\langle l'm'|T'_\mu|lm\rangle=\langle l\,m\,2\,\mu|l'\,m'\rangle R_{l'l}$ with $R_{l'l}=0$ unless $l'\in\{l,l\pm2\}$, and everything reduces to $R_{ll}$ and $P_l=R_{l+2,l}R_{l,l+2}$. Set
\begin{equation}\label{eq:Pl}
P(l)=16\,\frac{\langle q\rangle_l\langle -q\rangle_l\,(k-l)(k+l+3)\,(l+1)(l+2)}{(2l+3)\sqrt{(2l+1)(2l+5)}} ,\qquad \langle c\rangle_l=(l+1+c)(l+2-c),
\end{equation}
where the numerator has zeros at $l=-1,-2$, $l=\pm|q|-1,\pm|q|-2$ and $l=k,-k-3$. At the lower endpoints $l_0\in\{0,1\}$, $P(l_0-2)$ stands for the rational form $\Pc_{l_0-2}$ of \eqref{eq:Pcl}, whose numerator vanishes there.

\begin{theorem}\label{thm:reps}
\begin{enumerate}
\item Relation \eqref{eq:transvp} fixes $R_{ll}=\tfrac23(4q^2-1)H\sqrt{\dfrac{3\,l(l+1)}{2(2l-1)(2l+3)}}$ for $l\ge1$, and $R_{ll}=0$ for $l\in\{0,\tfrac12\}$.
\item For each $l\ge2$ in $\Lambda$, the diagonal blocks of \eqref{eq:TT3p} and \eqref{eq:TT2p} form a linear system in $(P_l,P_{l-2})$ whose determinant, proportional to $\sqrt{(2l-3)(2l+5)/((l-1)(l+2))}$, does not vanish. Consequently $P_l=P(l)$ for every $l\in\Lambda$, and the chain $\Lambda=\{l_0,l_0+2,\dots,l_1\}$ carries a representation if and only if $P(l_0-2)=0$ and $P(l_1)=0$, and the remaining relations are then satisfied identically.
\item The irreducible representations on which $T'$ is hermitian, i.e.\ those with $P_l>0$ along the chain, are: for every charge, the physical levels $\Lambda=\{|q|\text{ or }|q|+1,\dots,k\}$, on which all the matrix elements of the system are recovered, and, for integer $|q|\ge2$, the chains $\Lambda=\{0\text{ or }1,\dots,|q|-2\text{ or }|q|-1\}$ carried by the angular momenta that the monopole removes from the oscillator level. At $|q|=1$ and even $k$ the second family reduces to the singlet $\Lambda=\{0\}$, on which $T'=0$. The parity of $\Lambda$ is part of the definition of the class and not a consequence of the relations: the small chains of the other parity satisfy them as well. For half-integer $q$ there are no representations of the second kind. The chains $\{k+2,k+4,\dots\}$ are infinite and non-unitary, $P(l)<0$ for $l>k$.
\end{enumerate}
\end{theorem}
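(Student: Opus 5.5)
The plan is to reduce every relation of Definition~\ref{def:alg} to identities between reduced matrix elements by the Wigner--Eckart theorem, and to work block by block in $l$.

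\emph{Part 1.} The vector $\J\cdot T'$ has components that, between $D_l$ and itself, are proportional to $R_{ll}$ times a $6j$-type factor; the blocks $l\to l\pm2$ vanish automatically because a vector cannot connect $l$ to $l\pm2$. I would evaluate $\langle l\|[J\otimes T']^{(1)}\|l\rangle=(-1)^{\cdots}\sqrt5\sixj{1}{2}{1}{l}{l}{l}\langle l\|J\|l\rangle R_{ll}$. Equating this with $\tfrac13(4q^2-1)H\sqrt{l(l+1)}$, and using $(\J\cdot T)_\mu=-\sqrt{5/3}[J\otimes T]^{(1)}_\mu$, gives the stated $R_{ll}$. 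For $l=0,\tfrac12$ no rank-2 tensor has a diagonal block, so $R_{ll}=0$. The consistency check is that the result agrees with the diagonal \eqref{eq:dl} computed from the physical action.

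\emph{Part 2.} Take the $(l,l)$ blocks of \eqref{eq:TT3p} and \eqref{eq:TT2p}. By the recoupling formula
\[
\langle l\|[T'\otimes T']^{(L)}\|l\rangle=\sum_{l''}(\cdots)\sixj{2}{2}{L}{l}{l}{l''}R_{ll''}R_{l''l},
\]
these blocks are linear in $P_l$ and $P_{l-2}$, with a known inhomogeneous term in $R_{ll}^2$. The right-hand sides are known scalars: $J^{(3)}$ and $J^{(2)}$ have explicit reduced elements, and the coefficients are polynomials in $l(l+1)$, $H$ and $q^2$. This gives a $2\times2$ system. Its determinant is a combination of products of $6j$ symbols with $L=2,3$ and $l''=l\pm2$; I expect it to simplify to the stated multiple of $\sqrt{(2l-3)(2l+5)/((l-1)(l+2))}$, which is nonzero for $l\ge2$.

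Solving the system gives $P_l$ and $P_{l-2}$ as rational functions of $l$. The claim $P_l=P(l)$ then follows by checking that $P(l)$ and $P(l-2)$ solve it. This check is a rational-function identity that I would verify by clearing denominators; it is guaranteed in advance because the physical levels realize the relations.

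The endpoint conditions follow from the requirement that the chain stops. At the lowest $l_0$ the system still determines $P_{l_0-2}$, which must vanish, and similarly $P_{l_1}=0$. For $l_0\in\{0,1\}$ (and $\tfrac12$), where no such system exists, I would use the $(l,l)$ blocks of \eqref{eq:TT1p} and \eqref{eq:trTp2} instead; this is the separate check mentioned in the statement, via $\Pc_{l_0-2}$. The remaining blocks, namely the off-diagonal $(l,l\pm2)$ and $(l,l\pm4)$ blocks of the rank-1, 2, 3 and trace relations, must then hold identically. I would verify them as further rational identities once $P_l=P(l)$ and $R_{ll}$ are substituted.

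This sufficiency check is the main obstacle: there are many blocks, each a $6j$-weighted sum. To keep it manageable I would organize it by noting that the physical levels already satisfy every block for generic $l$, by Theorem~\ref{thm:Tp}, so each identity holds as a rational function of $l$ and specializes to any chain.

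\emph{Part 3.} Read off the zeros of $P(l)$ listed after \eqref{eq:Pl}. A chain must run between consecutive zeros of the same parity as $k$ with $P>0$ in between. The intervals are $[|q|\text{ or }|q|+1,\,k]$, which gives the physical level, and, for integer $q$, the interval between $0$ or $1$ and $|q|-2$ or $|q|-1$, which gives the states removed by the monopole. For half-integer $q$ the lower zeros $\pm|q|-1,\pm|q|-2$ have the wrong parity or are negative, so there is no second family. For $l>k$ the factor $(k-l)$ makes $P<0$, which gives the infinite non-unitary chains. The case $|q|=1$ with $k$ even reduces to the singlet $\{0\}$, on which $T'=0$ since $R_{00}=0$.
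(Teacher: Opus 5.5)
Your strategy is the same as the paper's: Wigner--Eckart reduction, $R_{ll}$ from the diagonal block of \eqref{eq:transvp}, a unique solution $(\Pc(l),\Pc(l-2))$ of the $2\times2$ system formed by the $(l,l)$ blocks of \eqref{eq:TT3p} and \eqref{eq:TT2p}, and endpoint conditions coming from the absent states. The determinant, which you only say you expect to simplify, follows directly from the bracketed coefficients of \eqref{eq:red3} and \eqref{eq:red2}. The rank-3 row has coefficients of opposite signs and the rank-2 row has coefficients of equal signs, so the determinant is $-2/[l(l-1)(l+1)(l+2)(2l-1)(2l+1)(2l+3)]$ and cannot vanish. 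Parts~1 and 2, and the integer-charge half of Part~3, go through as you describe.

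The gap is in the half-integer case of Part~3. Your reason, that the zeros $\pm|q|-1$, $\pm|q|-2$ have the wrong parity or are negative, is false. For half-integer $|q|\ge\tfrac52$, the zeros $|q|-1$ and $|q|-2$ are positive half-integers, and exactly one of them lies in $k+2\mathbb Z$. So the top condition $P(l_1)=0$ can be met: for example $q=\tfrac52$, $k=\tfrac72$, $\Lambda=\{\tfrac32\}$ has $P(\tfrac32)=0$.

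These chains fail at the bottom, at $l_0=\tfrac32$ or $l_0=\tfrac12$, and your Part~2 covers neither case.
\begin{itemize}
\item At $l_0=\tfrac32$ your own criterion (determinant nonzero for $l\ge2$) does not apply. In the variables $(P_l,P_{l-2})$ the determinant carries the factor $\sqrt{2l-3}$ and vanishes. This is an artefact of $P_{l-2}=\Pc_{l-2}/\sqrt{(2l-3)(2l+1)}$. You must work with $\Pc$ instead, whose coefficients in \eqref{eq:red2}--\eqref{eq:red3} are finite and nonzero at $l=\tfrac32$. The bottom condition is then $\Pc(-\tfrac12)=6(\tfrac94-q^2)(\tfrac14-q^2)(k+\tfrac12)(k+\tfrac52)=0$, which fails for $|q|\ge\tfrac52$.
\item At $l_0=\tfrac12$ you record $R_{\frac12\frac12}=0$ but do not draw its consequence. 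The block $(\tfrac12,\tfrac12)$ of $\J\cdot T'$ is $\J$ times the absent block $(\tfrac12,\tfrac12)$ of $T'$, hence zero. The corresponding block of $\tfrac13(4q^2-1)H\J$ is nonzero unless $q^2=\tfrac14$. So \eqref{eq:transvp} itself excludes $l_0=\tfrac12$, except for the physical chain of $|q|=\tfrac12$.
\end{itemize}

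These two failures at the bottom of the chain, not the position of the zeros of $P$, are why there is no second family for half-integer charge. You list $\tfrac12$ among the cases to be checked ``with \eqref{eq:TT1p} and \eqref{eq:trTp2}'' but do not carry the check out, and you omit $\tfrac32$ entirely. As written, the half-integer claim of the theorem is not proved.
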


For integer $q$, the algebra does not know that $l\ge|q|$: when the second family is present, it cuts the symmetric level of $\su(3)$ into two irreducible pieces, the physical one and the small one. The condition $P(l_0-2)=0$ at the bottom of a chain is not vacuous: at $l_0=\tfrac32$ with $|q|\ge\tfrac52$ the rank-3 and rank-2 relations fail on a single angular momentum, and at $l_0=\tfrac12$ the transversality \eqref{eq:transvp} fails, which is why the small representations exist only for integer charges. Restricted to a sector $(k,m)$, these representations give chains on which $(T'_{zz})_{l,l+2}(T'_{zz})_{l+2,l}$ is given by \eqref{eq:5pairs} and the diagonal by \eqref{eq:dl}: \eqref{eq:5pairs} satisfies identically in $l$ the first-order recurrence that the diagonal of \eqref{eq:BC} imposes on the off-diagonal products, and the chains run between consecutive zeros of \eqref{eq:5pairs}, $l_0\in\{M_\pm,M_\pm+1\}$.

\subsection{The cut and the Newman--Penrose operator}\label{sec:eth}
For integer $q$, the second family has a geometric meaning at the level of its carrier spaces: the symmetry algebra detects precisely the kernel of the operator that takes the ordinary harmonics to the monopole harmonics, and this operator is a power of the operator $\eth$ of Newman and Penrose. We recall the spin-weighted spherical harmonics \cite{NP,GMNRS}, fixing the conventions. Throughout, $D^l_{m'm}(\alpha,\beta,\gamma)=\langle l,m'|e^{-i\alpha J_z}e^{-i\beta J_y}e^{-i\gamma J_z}|l,m\rangle=e^{-im'\alpha}d^l_{m'm}(\beta)e^{-im\gamma}$ are the Wigner functions, with $d^l_{m'm}(\beta)=\langle l,m'|e^{-i\beta J_y}|l,m\rangle$ real, and the spherical harmonics carry the Condon--Shortley phase, $Y_{lm}(\theta,\varphi)=\sqrt{(2l+1)/4\pi}\;e^{im\varphi}d^l_{m0}(\theta)$.

\emph{Spin weight.} Let $(e_\theta,e_\varphi)$ be the tangent frame of the unit sphere and $e_\pm=e_\theta\pm ie_\varphi$. A quantity $\eta$ on the sphere has spin weight $s$ if it transforms as $\eta\to e^{is\psi}\eta$ when the frame is rotated by $\psi$ about the normal, $e_\pm\to e^{\pm i\psi}e_\pm$: the component of a tensor field along $p$ factors $e_+$ and $p'$ factors $e_-$ has spin weight $p-p'$, so that a scalar has $s=0$, the components of a vector field along $e_+$ and $e_-$ have $s=\pm1$, and the Stokes combinations $Q\pm iU$ of polarised radiation have $s=\pm2$. The spin weight is a helicity with respect to the local frame. For integer $s$, quantities of spin weight $s$ are sections of the corresponding tensor power of the complex tangent line bundle (or its dual), with absolute Chern number $2|s|$. Half-integer weights require the spin lift and a square-root line bundle. Line bundles over the sphere are classified by this integer, and a section of the monopole bundle of charge $q$, of Chern number $2q$, is therefore the same object as a quantity of spin weight $\pm q$. The gauge phase is the frame phase, and the Dirac condition $2q\in\mathbb Z$ is the condition $2s\in\mathbb Z$. The sign, and the gauge, are fixed below.

\emph{The operators $\eth$ and $\bar\eth$.} In the coordinates $(\theta,\varphi)$,
\begin{equation}\label{eq:eth}
\eth\,\eta=-(\sin\theta)^{s}\Big[\partial_\theta+\frac{i}{\sin\theta}\partial_\varphi\Big]\big((\sin\theta)^{-s}\eta\big),\qquad
\bar\eth\,\eta=-(\sin\theta)^{-s}\Big[\partial_\theta-\frac{i}{\sin\theta}\partial_\varphi\Big]\big((\sin\theta)^{s}\eta\big)
\end{equation}
raise and lower the spin weight by one unit. Writing $\nabla=d-i\mathbf A_s$ with
\begin{equation}\label{eq:As}
\mathbf A_s=-s\cos\theta\,d\varphi ,
\end{equation}
one has $\eth=-(\nabla_\theta+\tfrac{i}{\sin\theta}\nabla_\varphi)$ and $\bar\eth=-(\nabla_\theta-\tfrac{i}{\sin\theta}\nabla_\varphi)$: they are the components of the covariant derivative of a section along the complex null directions, the conjugation by $(\sin\theta)^{\pm s}$ in \eqref{eq:eth} producing the connection term $\mp s\cot\theta$ inside the brackets. The form \eqref{eq:As} is the connection of the weight-$s$ bundle in the frame $(e_\theta,e_\varphi)$. For $s=1$ it is the potential of a monopole of unit charge in the gauge singular at both poles.

\emph{The harmonics ${}_sY_{lm}$.} The spin-weighted harmonics are the quantities of spin weight $s$ that carry the representation $l$ of the rotation group. For every $s\in\tfrac12\mathbb Z$, $l\ge|s|$ and $l-s\in\mathbb Z$, with $m=-l,-l+1,\ldots,l$, we define
\begin{equation}\label{eq:sYD}
{}_sY_{lm}(\theta,\varphi)=e^{i\pi s}\sqrt{\frac{2l+1}{4\pi}}\;e^{im\varphi}\,d^l_{m,-s}(\theta)=e^{i\pi s}\sqrt{\frac{2l+1}{4\pi}}\;\overline{D^l_{m,-s}(\varphi,\theta,0)} ,
\end{equation}
so that ${}_0Y_{lm}=Y_{lm}$. The third Euler angle, absent here, would carry the frame phase $e^{is\psi}$, and half-integer $s$ corresponds to functions on $SU(2)$ rather than $SO(3)$, i.e.\ to half-integer $l$, exactly as half-integer charges do. For fixed $s$ they are orthonormal on the sphere and complete in the space of spin weight $s$. For integer $s$ they are obtained from the ordinary harmonics by raising or lowering the weight,
\begin{equation}\label{eq:sYdef}
{}_sY_{lm}=\sqrt{\frac{(l-s)!}{(l+s)!}}\;\eth^{\,s}\,Y_{lm}\quad(0\le s\le l),\qquad
{}_sY_{lm}=(-1)^s\sqrt{\frac{(l+s)!}{(l-s)!}}\;\bar\eth^{\,-s}\,Y_{lm}\quad(-l\le s\le0),
\end{equation}
and for every $s$,
\begin{gather}\label{eq:ethaction}
\eth\;{}_sY_{lm}=\sqrt{(l-s)(l+s+1)}\;{}_{s+1}Y_{lm},\qquad
\bar\eth\;{}_sY_{lm}=-\sqrt{(l+s)(l-s+1)}\;{}_{s-1}Y_{lm},\\
\bar\eth\eth\;{}_sY_{lm}=-\big(l(l+1)-s(s+1)\big)\,{}_sY_{lm} .\nonumber
\end{gather}
Since $\eth$ annihilates the multiplet $l=s$ and $\bar\eth$ the multiplet $l=-s$, the harmonics ${}_sY_{lm}$ exist only for $l\ge|s|$. In operator form, with $\J^2_{(s)}=-\nabla^2+s^2$ the Casimir operator acting on sections of weight $s$ ($\nabla^2$ the covariant Laplacian of \eqref{eq:As}), \eqref{eq:ethaction} reads $\bar\eth\eth=-(\J^2_{(s)}-s(s+1))$ and $\eth\bar\eth=-(\J^2_{(s)}-s(s-1))$, so the two factorisations intertwine the Casimir operators, $\eth\,\J^2_{(s)}=\J^2_{(s+1)}\eth$.

\emph{Monopole harmonics.} In the gauge of \cite{WuYang} regular away from the negative $z$-axis, $e\mathbf A=q(1-\cos\theta)\,d\varphi=\mathbf A_q+q\,d\varphi$, and the monopole harmonics of \cite{WuYang}, normalised with the phase convention of Edmonds for the ladder operators of $J_z$, are
\begin{equation}\label{eq:WYD}
Y_{q,l,m}(\theta,\varphi)=\sqrt{\frac{2l+1}{4\pi}}\;e^{i(m+q)\varphi}\,d^l_{m,-q}(\theta)=e^{-i\pi q}\,e^{iq\varphi}\;{}_qY_{lm}(\theta,\varphi) .
\end{equation}
Thus, with the conventions \eqref{eq:eth} and \eqref{eq:sYD}, the monopole harmonics of charge $q$ are the spin-weighted harmonics of weight $s=q$, up to the gauge transformation $e^{iq\varphi}$ that takes \eqref{eq:As} to the Wu--Yang potential and a constant phase. The identification goes back to \cite{Dray}, and the cut $l\ge|q|$ of the spectrum is the condition $l\ge|s|$. Since $(\rr\times\pp)^2=\J^2-q^2$, the angular part of $H$ is $\J^2/2r^2$ for every charge and the radial equation does not depend on $q$: the eigenfunctions of $H$ are
\begin{equation}\label{eq:eigenH}
\psi_{nlm}=\phi_{nl}(r)\,Y_{q,l,m}(\theta,\varphi)=e^{-i\pi q}e^{iq\varphi}\,\phi_{nl}(r)\;{}_qY_{lm}(\theta,\varphi),\qquad l\ge|q|,
\end{equation}
with $\phi_{nl}$ the radial functions of the ordinary oscillator. By \eqref{eq:sYdef} and \eqref{eq:ethaction}, the operator $\eth^{q}$ for $q>0$ ($\bar\eth^{|q|}$ for $q<0$), which acts on the angular variables only, satisfies
\begin{equation}\label{eq:intertwine}
\eth^{q}\big(\phi_{nl}Y_{lm}\big)=\sqrt{\frac{(l+q)!}{(l-q)!}}\;\phi_{nl}\;{}_{q}Y_{lm}\quad(l\ge q),\qquad \eth^{q}\big(\phi_{nl}Y_{lm}\big)=0\quad(l<q),
\end{equation}
the multiplet $l=j<q$ being annihilated at the $(j+1)$-th step. It therefore intertwines the oscillator Hamiltonian with $H$ level by level, $\eth^{q}H_0=H\,\eth^{q}$, maps the states of the level $k$ with $l\ge|q|$ isomorphically onto the level $k$ of the charged system, and has for kernel on the level $k$ the states with $l<|q|$: exactly the small piece of Theorem~\ref{thm:reps}. The removal of the states $l<|q|$ is thus a chain of intertwining transformations of the factorisation type, in the angular variable, whose seeds are the lowest multiplets. Its spinorial version, in which the charge changes by half a unit at a time, is the supersymmetry of \cite{DHV_CMP,DHV_PRL}. This is the classical factorisation chain of the spherical top rather than one producing exceptional polynomials: the seeds are ground states and no gap opens inside the sequence of angular momenta.

\section{Recurrence operators from the conserved tensor}\label{sec:hmzcompare}

The recurrence operators of Hoque, Marquette and Zhang \cite{HMZ2018} can be obtained, on each physical sector of fixed energy and magnetic quantum number, by rescaling the parts of $T'_{zz}$ that change the angular momentum by two units. We give this correspondence explicitly and compare the resulting deformed-oscillator structure functions. We use $\omega=1$, $Q=q$ and $\mathbf L=\J$ in the notation of \cite{HMZ2018}.

Let
\[
K=\sqrt{\J^2+\tfrac14},
\qquad
K\,|n,l,m\rangle=\Bigl(l+\tfrac12\Bigr)|n,l,m\rangle,
\]
and let $d(K)$ denote the diagonal part of $T'_{zz}$, so that $d(K)|n,l,m\rangle=d_l|n,l,m\rangle$, where $d_l$ is the diagonal coefficient of Section~\ref{sec:sectors}. For $l=\tfrac12$ its value is zero by the rank-2 selection rule. Define
\[
R_{\pm}=\frac12\Bigl(T'_{zz}-d(K)\pm \frac12[K,T'_{zz}]\Bigr).
\]
Since $R_-=(R_+)^\dagger$ and $[K,R_\pm]=\pm2R_\pm$, the operators $R_+$ and $R_-$ map $(n,l)$ to $(n-1,l+2)$ and $(n+1,l-2)$, respectively. With the actions (2.18)--(2.19) of \cite{HMZ2018} on the basis states, the recurrence generators are, once the phases of the bases are matched,
\[
D_2=R_+\,g(K),
\qquad
D_1=g(K)\,R_-,
\qquad
g(z)=\frac18 z(z+1)(z+2)(H+z-1).
\]
The coefficients of $D_1$, $D_2$ printed in \cite{HMZ2018} have the sign opposite to the positive off-diagonal convention used here; on a sector, the rephasing $|l_0+2j\rangle\mapsto(-1)^j|l_0+2j\rangle$ removes the difference. On a state with $K$-eigenvalue $z$, the raising amplitude is multiplied by $g(z)$ and the lowering amplitude by $g(z-2)$. In particular, the nonzero raising matrix elements obey
\[
\frac{\langle n-1,l+2,m|D_2|n,l,m\rangle}{\langle n-1,l+2,m|R_+|n,l,m\rangle}=g(l+\tfrac12),
\]
which follows from (2.19) of \cite{HMZ2018} and the matrix elements in Section~\ref{sec:sectors}. Consequently,
\[
D_1D_2=\mathcal G(K),
\qquad
D_2D_1=\mathcal G(K-2),
\qquad
\mathcal G(z)=g(z)^2\mathcal F(z).
\]
On the physical sector, $z=l+\tfrac12\ge\tfrac12$ and $H\ge z+1$, so $g(z)>0$. The normalisation therefore preserves every nonzero transition and is invertible on this finite sector.
Here $R_-R_+=\mathcal F(K)$ and \eqref{eq:5pairs} gives
\[
\mathcal F(z)=\frac{\big(H^2-(z+1)^2\big)
\displaystyle\prod_{a=q,m}
\big((z+\tfrac12)^2-a^2\big)\big((z+\tfrac32)^2-a^2\big)}
{z(z+1)^2(z+2)},
\]
with one factor for $a=q$ and one for $a=m$, even when they coincide. The denominator of $\mathcal F$ cancels against $g^2$, and $\mathcal G$ is a polynomial. The expression $\mathcal G(K-2)$ agrees with (2.23) of \cite{HMZ2018}, and on the eigenvectors of the number operator of \cite{HMZ2018}, on which $K=2x+u_{\mathrm{HMZ}}$, its structure function (2.26) is recovered as $\Phi_{\mathrm{HMZ}}(x)=\mathcal G(2x+u_{\mathrm{HMZ}}-2)$, with $u_{\mathrm{HMZ}}=l_0+\tfrac12$ on the physical chain. The comparison is made sector by sector on the matrices; since $K$ involves a square root, the correspondence is not a polynomial change of generators.

The recurrence construction isolates the sector algebra in a form adapted to finite-dimensional representations and the algebraic derivation of the spectrum. The formulas above identify its generators, up to normalisation and basis phase, with the $l\mapsto l\pm2$ parts of $T'_{zz}$, thereby connecting the sector description with the full $\so(3)$-covariant algebra generated by $\J$ and $T'$.

\section{Cubic Casimir and deformed-oscillator representations}\label{sec:deformed}

The sector relations of Theorem~\ref{thm:HR} also admit the deformed-oscillator construction developed for general cubic associative algebras in \cite{Marquette}. This gives a direct algebraic link between Section~\ref{sec:heun}, the representation theory of Section~\ref{sec:reps}, and the recurrence realisation of Section~\ref{sec:hmzcompare}. In the notation of \cite{Marquette}, the cubic algebra parameters are
\begin{gather*}
\alpha=0,\qquad \beta=8,\qquad \gamma=a_5,\qquad \delta=-12,\qquad \epsilon=a_8,\\
\mu=-128,\qquad \nu=a_3,\qquad \xi=a_6,\qquad \zeta=a_9.
\end{gather*}
To avoid confusion with the spectral operator $K=\sqrt{\J^2+\tfrac14}$ of Section~\ref{sec:hmzcompare}, denote the cubic Casimir by $\mathcal C_{\rm cub}$. Specialising the general Casimir of \cite{Marquette} gives
\begin{align}
\mathcal C_{\rm cub}={}&C^2-8\{A,B^2\}-a_5\{A,B\}+76B^2+(8a_5-2a_8)B-64A^4 \notag\\
&+\frac23(a_3-1024)A^3
+\left(a_6+\frac{8a_3+6400}{3}\right)A^2
+(2a_9-4a_3-2048)A .
\label{eq:cubicCasimir}
\end{align}
Direct reduction of $[\mathcal C_{\rm cub},A]$ and $[\mathcal C_{\rm cub},B]$ using only \eqref{eq:AC}--\eqref{eq:BC} and $BA=AB-C$ gives zero. Commutation with $C=[A,B]$ then follows from the Jacobi identity. On the sector representations obtained by restricting the covariant modules of Theorem~\ref{thm:reps}, its value is the scalar
\begin{align}
\kappa(H,m,q)=\frac{16}{9}\Big\{&H^2\big[45m^4+72m^4q^2+48m^2q^4+48m^2q^2-168m^2
+32q^4-160q^2+128\big] \notag\\
&+36(m^2-1)(q^2-1)\big(m^2q^2+2m^2+2q^2-8\big)\Big\}.
\label{eq:cubicCasimirValue}
\end{align}
It is obtained by substituting the matrix elements of the sector, and it does not depend on $l$. The cubic relations alone do not prescribe this value. Like the parameters, it is not symmetric in $(m^2,q^2)$; for the shifted generator $B+\sigma\mathbf 1$, $\sigma=\tfrac23Hq^2$, whose relations are symmetric, the Casimir \eqref{eq:cubicCasimir} with the transformed parameters equals $\mathcal C_{\rm cub}-76\sigma^2+(8a_5-2a_8)\sigma$, and its value is the symmetric expression
\begin{align*}
\frac{16}{9}\Big\{&H^2\big[45(m^4+q^4)+72m^2q^2(m^2+q^2)+42m^2q^2-168(m^2+q^2)+128\big]\\
&+36(m^2-1)(q^2-1)\big(m^2q^2+2m^2+2q^2-8\big)\Big\}.
\end{align*}

We now write the deformed-oscillator realisation of \cite{Marquette} on the restriction to a sector of fixed $m$ of one of the representations of Theorem~\ref{thm:reps}, $l_0$ being the smallest angular momentum of the representation with $l\ge|m|$. Write $l=l_0+2N$, $N=0,1,\ldots,p$, where $l_0+2p$ is the largest angular momentum of the representation, and set
\[
z_0=l_0+\tfrac12,\qquad \mathcal N|N\rangle=N|N\rangle .
\]
For $\beta=8$ and $\delta=-12$, the solution of \cite{Marquette} for $\beta\neq0$ is $A(\mathcal N)=4(\mathcal N+u)^2-\tfrac14$, and the step of two units in $l$ gives $u=z_0/2$, half the $u_{\mathrm{HMZ}}$ of Section~\ref{sec:hmzcompare}. Hence
\begin{equation}
A(\mathcal N)=\big(z_0+2\mathcal N\big)^2-\frac14,
\qquad
b_0(\mathcal N)=-\frac{a_5A(\mathcal N)+a_8}{16A(\mathcal N)-12}.
\label{eq:oscillatorAB}
\end{equation}
For $A(\mathcal N)\neq\tfrac34$, this gives $A(\mathcal N)=l(l+1)$ and $b_0(\mathcal N)=d_l$, with $d_l$ the diagonal entry \eqref{eq:dl}. At $l=\tfrac12$ the denominator of \eqref{eq:oscillatorAB} vanishes; this bottom occurs only for $q^2=\tfrac14$, where $a_5=a_8=0$ and the equation reduces to $0=0$, and the value $b_0=0$ is supplied by the covariant representation, in which the block $(\tfrac12,\tfrac12)$ of $T'$ is absent. After the phase choice used in Section~\ref{sec:hmzcompare}, put
\[
B=b_0(\mathcal N)+R_++R_- .
\]
Then
\[
[\mathcal N,R_\pm]=\pm R_\pm,
\qquad R_+R_-=\Phi_{\mathrm{osc}}(\mathcal N),
\qquad R_-R_+=\Phi_{\mathrm{osc}}(\mathcal N+1),
\]
and the structure function is simply the off-diagonal product of the tensor sector,
\begin{equation}
\Phi_{\mathrm{osc}}(N+1)=\mathcal F(z_0+2N),
\qquad N=0,1,\ldots,p,
\label{eq:sectorStructure}
\end{equation}
where $\mathcal F$ is the factorised function displayed in Section~\ref{sec:hmzcompare}. On a small representation of integer charge, these products are given by the reduced elements of Theorem~\ref{thm:reps}. With the phase convention fixed above, $R_+|N\rangle=\sqrt{\Phi_{\mathrm{osc}}(N+1)}|N+1\rangle$ and $R_-|N\rangle=\sqrt{\Phi_{\mathrm{osc}}(N)}|N-1\rangle$. Here $\Phi_{\mathrm{osc}}(0)=0$ since the lowering operator annihilates the first state; at a missing edge the rational expression is evaluated after cancellation of the removable factors. This is the $\varrho=1$ hermitian gauge of the general realisation $B=b_0(\mathcal N)+R_+\varrho(\mathcal N)+\varrho(\mathcal N)R_-$ of \cite{Marquette}, with $R_+$ and $R_-$ denoting its creation and annihilation operators. The recurrence normalisation of \cite{HMZ2018} replaces $R_\pm$ by $D_2,D_1$ and multiplies the structure function by the square of the spectral factor $g$. The polynomial structure function of the recurrence construction is therefore a rescaled version of the natural deformed-oscillator structure function of the cubic algebra.

A finite-dimensional unitary module obeys the usual conditions
\begin{equation}
\Phi_{\mathrm{osc}}(0)=0,
\qquad
\Phi_{\mathrm{osc}}(p+1)=0,
\qquad
\Phi_{\mathrm{osc}}(N)>0\quad (N=1,\ldots,p).
\label{eq:oscillatorFinite}
\end{equation}
For a sector of a physical carrier, the upper zero is the factor $H^2-(l_p+\tfrac32)^2$ with $l_p=k$. For a sector of a small integer-charge carrier, it is instead one of the factors involving $q$. Every such restriction is therefore a deformed-oscillator module of this form. Whether, conversely, every unitary representation of the cubic algebra of a sector with the value \eqref{eq:cubicCasimirValue} of the Casimir extends to the covariant algebra is not settled here: the relations of the latter at $l=0,\tfrac12,1$ are additional conditions, and the classification of Theorem~\ref{thm:reps} is not replaced by this construction.

\section{Proofs}\label{sec:proofs}

The proofs are arranged so that every step can be checked by hand. Proposition~\ref{prop:dyn} is proved by operator identities (Section~\ref{sec:proofdyn}), the action \eqref{eq:action}--\eqref{eq:Baction} is derived from it (Section~\ref{sec:proofrep}), and the closed forms of the reduced matrix elements follow (Section~\ref{sec:closed}). The relations of the symmetry algebra, Proposition~\ref{prop:S} and Theorems~\ref{thm:T} and \ref{thm:Tp}, are then derived from Proposition~\ref{prop:dyn} by reordering (Section~\ref{sec:reorder}). The Heun--Racah relations of the sectors have been derived from them in the proof of Proposition~\ref{prop:status}. On the reduced matrix elements the covariant relations become identities between rational functions of the quantum numbers, once the $6j$ symbols are written in closed form (Section~\ref{sec:proofcov}). This is the form used for the Heun operators of the sectors and for the representations (Sections~\ref{sec:proofsect}--\ref{sec:proofreps}), and it provides an independent check of Section~\ref{sec:reorder}. All identities have also been checked with a computer algebra system.

\subsection{The dynamical algebra}\label{sec:proofdyn}
Proposition~\ref{prop:dyn} is a consequence of the canonical relations $[r_i,\pi_j]=i\delta_{ij}$, $[\pi_i,\pi_j]=iq\,\epsilon_{ijk}\,r_k/r^3$, and of the vector character of $\rr$, $\pp$, $\J$, i.e.\ $[J_i,V_j]=i\epsilon_{ijk}V_k$ for $V=\rr,\pp,\J$ and $[J_i,S]=0$ for the scalars built from them. Set
\begin{equation}\label{eq:defs2}
\mathbf c=\rr-i\pp,\quad \mathbf n=\frac{\rr}{r^3},\quad K=\tfrac12r^2,\quad P=\tfrac12\pp^2+\frac{q^2}{2r^2},\quad D=\tfrac12(\rr\cdot\pp+\pp\cdot\rr),\quad p_r=\tfrac12(\hat\rr\cdot\pp+\pp\cdot\hat\rr),
\end{equation}
so that $H=K+P$, $D=\rr\cdot\pp-\tfrac32 i$ and $p_r=\hat\rr\cdot\pp-i/r$. Since $[H,r^2]=-2iD$ and $\mathbf c\cdot\mathbf c=r^2-2iD-\pp^2$, and since $\tfrac12\epsilon_{ijk}\{J_j,c_k\}=\epsilon_{ijk}J_jc_k-\tfrac12\epsilon_{ijk}[J_j,c_k]$, the generators \eqref{eq:bB} read
\begin{equation}\label{eq:bc}
b_i^\dagger=(\J\times\mathbf c)_i-i\,c_i,\qquad b_i=(\J\times\mathbf c^\dagger)_i-i\,c_i^\dagger,\qquad
B_\mp=\tfrac12\Big(\mathbf c^{(\dagger)}\!\cdot\mathbf c^{(\dagger)}-\frac{q^2}{r^2}\Big),\qquad B_\pm=\mathcal K-P\pm iD .
\end{equation}

\begin{lemma}\label{lem:basic}
\begin{enumerate}[label=(\roman*)]
\item $[c_i,c_j]=[c_i^\dagger,c_j^\dagger]=-iq\,\epsilon_{ijk}n_k$ and $[c_i,c_j^\dagger]=-2\delta_{ij}+iq\,\epsilon_{ijk}n_k$.
\item $\J\times\J=i\J$, $\J\cdot\rr=\rr\cdot\J=-qr$, $\J\cdot\mathbf n=-q/r^2$, $\J\cdot\mathbf c=-q\,(r-ip_r)$ and $\J\cdot\mathbf c^\dagger=-q\,(r+ip_r)$.
\item $\mathbf c\cdot\mathbf c^\dagger=2H-3-q^2/r^2$, $\mathbf c^\dagger\!\cdot\mathbf c=2H+3-q^2/r^2$, $\mathbf c\times\mathbf c^\dagger=2i\J+iq(2+r^{-2})\hat\rr$, $\mathbf c\times\mathbf c=\mathbf c^\dagger\times\mathbf c^\dagger=-iq\,\mathbf n$.
\item $[P,\pi_j]=\dfrac{iq}{r^3}J_j$, $[\mathcal K,\pi_j]=ir_j$, $[P,r_j]=-i\pi_j$, $[D,r_j]=-ir_j$ and $[D,\pi_j]=i\pi_j$.
\item $[\mathcal K,D]=2iK$, $[P,D]=-2iP$, $[\mathcal K,P]=iD$, and $2(KP+PK)=D^2+\J^2-\tfrac34$.
\item $\pp^2=p_r^2+(\J^2-q^2)/r^2$ and $\J^2=r^2\pp^2-(\rr\cdot\pp)^2+i\,\rr\cdot\pp+q^2$.
\end{enumerate}
\end{lemma}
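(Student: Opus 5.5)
The six items of Lemma~\ref{lem:basic} are direct consequences of the canonical relations $[r_i,\pi_j]=i\delta_{ij}$, $[\pi_i,\pi_j]=iq\epsilon_{ijk}r_k/r^3$ and of the definition $\J=\rr\times\pp-q\hat\rr$. I would prove them in the order (iv), (ii), (vi), (i), (iii), (v), since later items reuse earlier ones. Two auxiliary facts come first. One is $[\pp,1/r^a]$, computed from $[\pi_j,f(r)]=-i\partial_jf$. The other is the Lorentz-force identity $\pp\times\pp=iq\,\mathbf n$, which follows from the commutator of the $\pi$'s.

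For (iv), commute $P=\tfrac12\pp^2+q^2/2r^2$ with $\pi_j$. The term $\tfrac12[\pi_k\pi_k,\pi_j]$ equals $\tfrac12\{\pi_k,iq\epsilon_{kjl}r_l/r^3\}$, which gives $-\tfrac{iq}{2r^3}\{(\rr\times\pp)_j\}$ symmetrised. This combines with $[q^2/2r^2,\pi_j]=-iq^2r_j/r^4$ into $\tfrac{iq}{r^3}(\rr\times\pp-q\hat\rr)_j=\tfrac{iq}{r^3}J_j$, using that $r_l/r^3$ commutes with $\epsilon_{kjl}\pi_k$ up to a term that vanishes by antisymmetry. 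This is the single place where the centrifugal term $q^2/2r^2$ is needed. The remaining brackets with $K$ and $D$ are standard. The brackets $[D,r_j]$ and $[D,\pi_j]$ use $\rr\cdot\pp$ and the fact that $[\rr\cdot\pp,\pi_j]$ picks up $r_k[\pi_k,\pi_j]=iq\,r_k\epsilon_{kjl}r_l/r^3=0$.

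For (ii), $\J\times\J=i\J$ is the $\so(3)$ relation. The relations $\J\cdot\rr=-qr$ and $\J\cdot\mathbf n=-q/r^2$ hold because $(\rr\times\pp)\cdot\rr=0$ and $\rr\cdot(\rr\times\pp)=0$. For $\J\cdot\pp$, the identity $(\rr\times\pp)\cdot\pp=\rr\cdot(\pp\times\pp)=iq\,\rr\cdot\mathbf n=iq/r$ gives $\J\cdot\pp=iq/r-q\,\hat\rr\cdot\pp=-q\,p_r$, using $p_r=\hat\rr\cdot\pp-i/r$. The two formulas for $\J\cdot\mathbf c$ and $\J\cdot\mathbf c^\dagger$ then follow.

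For (vi), expand $\J^2=(\rr\times\pp)^2-q\{\hat\rr,\rr\times\pp\}+q^2$. The cross term vanishes, as shown above. The expansion $(\rr\times\pp)^2=r^2\pp^2-(\rr\cdot\pp)^2+i\rr\cdot\pp$ is the usual reordering; the only new term would come from $\pp\times\pp$, and it contracts to $\rr\cdot\mathbf n\cdot(\ldots)$ terms that cancel. Dividing by $r^2$ and writing $\rr\cdot\pp=r\,p_r+i$ gives the radial form.

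For (i), expand $\mathbf c=\rr-i\pp$ and apply the canonical relations directly. Item (iii) then follows by reordering. The combinations $\mathbf c\cdot\mathbf c^\dagger$ and $\mathbf c^\dagger\cdot\mathbf c$ reduce to $r^2+\pp^2\pm i[\ldots]$, that is $2H\mp3-q^2/r^2$. The cross product $\mathbf c\times\mathbf c$ reduces to $-\pp\times\pp-i(\rr\times\pp+\pp\times\rr)$, where the last bracket vanishes. The mixed product $\mathbf c\times\mathbf c^\dagger$ is $2i\,\rr\times\pp-\pp\times\pp$ plus a commutator term, and it is rewritten through $\J$ and $\hat\rr$.

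Item (v) is the $\so(2,1)$ bookkeeping. The brackets follow from (iv). The last identity follows from (vi): write $\pp^2r^2$ and $r^2\pp^2$ symmetrically, replace $\rr\cdot\pp$ by $D+\tfrac32i$, and note that the $q^2$ terms combine with $\J^2-q^2$ to give $\J^2$.

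The main obstacle is the bookkeeping of ordering and of the constants $\tfrac32i$, $-\tfrac34$ and $2+r^{-2}$. I expect (iii) and the last identity of (v) to be the most error-prone, so I would check each of them on a separated state $\phi(r)Y_{q,l,m}$ as a control.
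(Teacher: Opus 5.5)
Your route is the paper's. Every item is checked directly from $[r_i,\pi_j]=i\delta_{ij}$ and $[\pi_i,\pi_j]=iq\,\epsilon_{ijk}n_k$, using the same facts: $\pp\times\pp=iq\,\mathbf n$, $\epsilon_{ijk}\partial_in_k=0$ since $\mathbf n$ is a gradient, $\epsilon_{kjm}r_jr_k=0$, $\pp^2r^2=r^2\pp^2-4i\,\rr\cdot\pp-6$, $D=\rr\cdot\pp-\tfrac32i$ and $\rr\cdot\pp=rp_r+i$. Only the order of the items differs.

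Two signs in your sketch must be corrected, because as written they do not give the statement.

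In (iv), $\tfrac12[\pp^2,\pi_j]=iq\,\epsilon_{kjl}n_l\pi_k=+\tfrac{iq}{r^3}(\rr\times\pp)_j$, since $\epsilon_{kjl}=\epsilon_{jlk}$. With your minus sign, adding $-iq^2r_j/r^4$ would give $-\tfrac{iq}{r^3}(\rr\times\pp+q\hat\rr)_j$, which is not proportional to $\J$.

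In (iii), $\mathbf c\times\mathbf c^\dagger=i\,\rr\times\pp-i\,\pp\times\rr+\pp\times\pp=2i\,\rr\times\pp+\pp\times\pp$ exactly. The $\pi\pi$ term carries the factor $(-i)(i)=+1$, and no commutator term survives because $\epsilon_{ijk}[\pi_j,r_k]=0$. With $-\pp\times\pp$ you would obtain $iq(2-r^{-2})\hat\rr$ instead of $iq(2+r^{-2})\hat\rr$.

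With these two signs fixed, the argument is complete.
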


\begin{proof}
(i) is immediate from the canonical relations. (ii) $\J\times\J=i\J$ is $[J_i,J_j]=i\epsilon_{ijk}J_k$. Since $\epsilon_{ijk}r_j\pi_kr_i=\epsilon_{ijk}r_jr_i\pi_k-i\epsilon_{ijk}\delta_{ik}r_j=0$, $(\rr\times\pp)\cdot\rr=\rr\cdot(\rr\times\pp)=0$ and $\J\cdot\rr=\rr\cdot\J=-q\,\hat\rr\cdot\rr=-qr$, and $1/r^3$ commutes with $\J$, whence $\J\cdot\mathbf n$. Next, $(\pp\times\pp)_j=\tfrac12\epsilon_{jab}[\pi_a,\pi_b]=iq\,n_j$, so $(\rr\times\pp)\cdot\pp=\rr\cdot(\pp\times\pp)=iq/r$ and $\J\cdot\pp=iq/r-q\,\hat\rr\cdot\pp$. Hence $\J\cdot\mathbf c=-qr-i\J\cdot\pp=-q(r-r^{-1}-i\,\hat\rr\cdot\pp)=-q(r-ip_r)$, and $\J\cdot\mathbf c^\dagger$ likewise. (iii) $\mathbf c\cdot\mathbf c^\dagger=r^2+\pp^2+i(\rr\cdot\pp-\pp\cdot\rr)=r^2+\pp^2-3$, $\mathbf c\times\mathbf c^\dagger=i\,\rr\times\pp-i\,\pp\times\rr+\pp\times\pp=2i(\J+q\hat\rr)+iq\,\mathbf n$, because $\pp\times\rr=-\rr\times\pp$ ($\epsilon_{jab}[\pi_a,r_b]=0$), and $\mathbf c\times\mathbf c=-\pp\times\pp$. (iv) $[\tfrac12\pp^2,\pi_j]=\tfrac12(\pi_i[\pi_i,\pi_j]+[\pi_i,\pi_j]\pi_i)=iq\,\epsilon_{ijk}n_k\pi_i$, since $\epsilon_{ijk}[\pi_i,n_k]=-i\epsilon_{ijk}\partial_in_k=0$ ($\mathbf n$ is a gradient). This is $(iq/r^3)(\rr\times\pp)_j$, and $[q^2/2r^2,\pi_j]=-iq^2r_j/r^4=-(iq/r^3)\,q\hat r_j$. The other relations are $[f(\rr),\pi_j]=i\partial_jf$ and the fact that $D$ generates dilations, $[D,f]=-i\,\rr\cdot\nabla f$, $[D,\pi_j]=i\pi_j$ (here $\rr\cdot[\pp,\pi_j]=iq\,\epsilon_{ijk}r_ir_k/r^3=0$ is used). (v) The first three follow from (iv), $D$ being of degree $0$ and $\mathcal K$, $P$ of degrees $2$, $-2$ under dilations, and $[\mathcal K,P]=\tfrac14[r^2,\pp^2]=iD$. For the last one, $\pp^2r^2=r^2\pp^2-4i\,\rr\cdot\pp-6$ gives $2(KP+PK)=r^2\pp^2-2i\,\rr\cdot\pp-3+q^2$, while $D^2=(\rr\cdot\pp)^2-3i\,\rr\cdot\pp-\tfrac94$, and the claim follows from the expression of $\J^2$ in (vi). (vi) $(\rr\times\pp)^2=r_j\pi_kr_j\pi_k-r_j\pi_kr_k\pi_j=r^2\pp^2-(\rr\cdot\pp)^2+i\,\rr\cdot\pp$, the term $\epsilon_{kjm}r_jr_kn_m$ produced by $[\pi_k,\pi_j]$ vanishing. Then $(\rr\times\pp)\cdot\hat\rr=\hat\rr\cdot(\rr\times\pp)=0$ gives $\J^2=(\rr\times\pp)^2+q^2$, and $p_r=r^{-1}(\rr\cdot\pp-i)$ gives $p_r^2=r^{-2}\big((\rr\cdot\pp)^2-i\,\rr\cdot\pp\big)$.
\end{proof}

\begin{proof}[Proof of Proposition~\ref{prop:dyn}]
\emph{The $\so(2,1)$ relations \eqref{eq:dyn1}.} With $B_\pm=\mathcal K-P\pm iD$ and Lemma~\ref{lem:basic}(v), $[H,B_\pm]=[\mathcal K+P,\mathcal K-P\pm iD]=-2iD\pm i(2iK-2iP)=\mp2B_\pm$, $[B_+,B_-]=-2i[\mathcal K-P,D]=4(\mathcal K+P)$, and
\[
B_-B_+=(\mathcal K-P)^2+i[\mathcal K-P,D]+D^2=(\mathcal K+P)^2-2(KP+PK)-2H+D^2=H^2-2H+\tfrac34-\J^2=\Cc .
\]

\emph{The ladder relations \eqref{eq:dyn2} and $[H,b]=-b$.} By Lemma~\ref{lem:basic}(iv),
\begin{equation}\label{eq:anom}
[H,c_j]=c_j+\frac{q}{r^3}J_j,\qquad [B_+,c_j^\dagger]=\frac{q}{r^3}J_j,\qquad [B_-,c_j^\dagger]=-2c_j+\frac{q}{r^3}J_j,\qquad [B_-,c_j]=-\frac{q}{r^3}J_j ,
\end{equation}
and the adjoint relations: the centrifugal term makes the anomalous parts of these commutators proportional to $\J$. Now the map $V\mapsto\J\times V-iV$, which sends $\mathbf c$ to $b^\dagger$ and $\mathbf c^\dagger$ to $b$, annihilates $\J$ by Lemma~\ref{lem:basic}(ii), and commutes with the scalars $H$, $B_\pm$, $r^{-3}$. Applying it to \eqref{eq:anom} gives $[H,b^\dagger]=b^\dagger$, $[B_+,b]=0$, $[B_-,b]=-2b^\dagger$, $[B_-,b^\dagger]=0$, and the adjoints. This is the mechanism by which the term $q^2/2r^2$ of McIntosh and Cisneros restores the ladder structure. Finally $\J\cdot b=\epsilon_{ijk}J_iJ_jc_k^\dagger-i\J\cdot\mathbf c^\dagger=\tfrac12\epsilon_{ijk}[J_i,J_j]c^\dagger_k-i\J\cdot\mathbf c^\dagger=0$, and $b\cdot\J=\epsilon_{ijk}J_jc_k^\dagger J_i-i\mathbf c^\dagger\!\cdot\J=(-i+2i)\J\cdot\mathbf c^\dagger-i\,\mathbf c^\dagger\!\cdot\J=0$, using $\epsilon_{ijk}[c^\dagger_k,J_i]=2ic^\dagger_j$ and $\mathbf c^\dagger\!\cdot\J=\J\cdot\mathbf c^\dagger$.

\emph{The scalar relations \eqref{eq:dyn4}.} For vector operators $U$, $V$ one has, by $\epsilon_{iab}\epsilon_{icd}=\delta_{ac}\delta_{bd}-\delta_{ad}\delta_{bc}$ and the vector character,
\begin{gather}\label{eq:vecid}
(\J\times U)\cdot(\J\times V)=\J^2\,U\cdot V-(\J\cdot U)(\J\cdot V)+i\,\J\cdot(U\times V),\\
(\J\times U)\cdot V=\J\cdot(U\times V),\qquad U\cdot(\J\times V)=-\J\cdot(U\times V)+2i\,U\cdot V .\nonumber
\end{gather}
Hence $b\cdot b=(\J^2+1)\,\mathbf c^\dagger\!\cdot\mathbf c^\dagger-(\J\cdot\mathbf c^\dagger)^2+i\J\cdot(\mathbf c^\dagger\times\mathbf c^\dagger)$ and $b^\dagger\!\cdot b=\J^2\,\mathbf c\cdot\mathbf c^\dagger-(\J\cdot\mathbf c)(\J\cdot\mathbf c^\dagger)+i\J\cdot(\mathbf c\times\mathbf c^\dagger)+\mathbf c\cdot\mathbf c^\dagger$. By Lemma~\ref{lem:basic}(ii),(iii), $i\J\cdot(\mathbf c^\dagger\times\mathbf c^\dagger)=-q^2/r^2$, $i\J\cdot(\mathbf c\times\mathbf c^\dagger)=-2\J^2+2q^2+q^2/r^2$, $(\J\cdot\mathbf c^\dagger)^2=q^2(r+ip_r)^2$ and $(\J\cdot\mathbf c)(\J\cdot\mathbf c^\dagger)=q^2(r-ip_r)(r+ip_r)$. By (vi) and $[r,p_r]=i$,
\[
(r+ip_r)^2=\mathbf c^\dagger\!\cdot\mathbf c^\dagger+\frac{\J^2-q^2}{r^2},\qquad (r-ip_r)(r+ip_r)=r^2+p_r^2-1=2H-1-\frac{\J^2}{r^2} .
\]
Substituting, $b\cdot b=(\J^2+1-q^2)(\mathbf c^\dagger\!\cdot\mathbf c^\dagger-q^2/r^2)=2\Aa B_+$ and $b^\dagger\!\cdot b=(2H-3)(\J^2+1-q^2)-2\J^2$. Applying $[B_-,\,\cdot\,]$ to $b\cdot b=2\Aa B_+$ gives moreover $b^\dagger\!\cdot b+b\cdot b^\dagger=4\Aa H$, i.e.\ $b\cdot b^\dagger=\Aa(2H+3)+2\J^2$.

\emph{The tensor relations \eqref{eq:dyn3}.} A vector operator $V$ is determined by its component $V_+=V_x+iV_y$, since $V_z=-\tfrac12[J_-,V_+]$ and $V_-=[J_-,V_z]$, and a symmetric traceless tensor operator $A_{ij}$ is determined by $A_{++}=A_{xx}-A_{yy}+2iA_{xy}$ in the same way. We use $[J_z,V_\pm]=\pm V_\pm$, $[J_\pm,V_z]=\mp V_\pm$, $[J_\pm,V_\mp]=\pm2V_z$, $[J_\pm,V_\pm]=0$, and Lemma~\ref{lem:basic}(i) in the form $[c^\dagger_z,c^\dagger_\pm]=\mp q\,n_\pm$, $[c^\dagger_+,c^\dagger_-]=-2q\,n_z$, $[c^\dagger_z,c_z]=2$, $[c^\dagger_+,c_+]=0$, $[c^\dagger_z,c_+]=-[c^\dagger_+,c_z]=q\,n_+$. From $(\J\times V)_+=-iJ_+V_z+iJ_zV_+$ and $(\J\times V)_z=-\tfrac{i}{2}(J_-V_+-J_+V_-)$,
\begin{equation}\label{eq:bcomp}
b_+=-i\big[J_+c^\dagger_z-(J_z-1)c^\dagger_+\big],\qquad b^\dagger_+=-i\big[J_+c_z-(J_z-1)c_+\big],\qquad b_z=-\tfrac{i}{2}\big(J_-c^\dagger_+-J_+c^\dagger_-\big)-ic^\dagger_z .
\end{equation}
The commutator $[b_i,b_j]$ is antisymmetric, hence the vector $V_k=\tfrac12\epsilon_{kij}[b_i,b_j]$, with $V_+=i[b_z,b_+]$, and the first relation of \eqref{eq:dyn3} is $[b_z,b_+]=-2J_+B_+$. Expanding with \eqref{eq:bcomp}, $[b_z,b_+]=-\tfrac12\mathcal U-\mathcal W$ with
\begin{align*}
\mathcal U&=[J_-c^\dagger_+-J_+c^\dagger_-,\;J_+c^\dagger_z-(J_z-1)c^\dagger_+]\\
&=q\big[J_-J_+n_++J_+^2n_-+2J_+(J_z-1)n_z\big]-2c^\dagger_zc^\dagger_++2J_+\big(c^{\dagger2}_z+c^\dagger_-c^\dagger_+\big),\\
\mathcal W&=[c^\dagger_z,\;J_+c^\dagger_z-(J_z-1)c^\dagger_+]=c^\dagger_+c^\dagger_z+q(J_z-1)n_+ ,
\end{align*}
so that, with $c^\dagger_zc^\dagger_+-c^\dagger_+c^\dagger_z=-qn_+$,
\[
[b_z,b_+]=-J_+\big(c^{\dagger2}_z+c^\dagger_-c^\dagger_+\big)-qJ_zn_+-\tfrac q2\big(J_-J_+n_++J_+^2n_-+2J_+J_zn_z\big).
\]
Since $c^\dagger_-c^\dagger_+=\tfrac12\{c^\dagger_+,c^\dagger_-\}+qn_z$ and $\mathbf c^\dagger\!\cdot\mathbf c^\dagger=c^{\dagger2}_z+\tfrac12\{c^\dagger_+,c^\dagger_-\}$, the first term is $-2J_+B_+-q^2J_+/r^2-qJ_+n_z$. Moreover, $J_-J_+n_++J_+^2n_-+2J_+J_zn_z=J_+(J_-n_++J_+n_-+2J_zn_z)-2J_zn_+=2J_+\,\J\cdot\mathbf n-2J_zn_+=-2qJ_+/r^2-2J_zn_+$. The remaining terms cancel and $[b_z,b_+]=-2J_+B_+$.

For $M_{ij}=[b_i,b_j^\dagger]$ we decompose into ranks. The trace is $b\cdot b^\dagger-b^\dagger\!\cdot b=6\Aa+4\J^2$, which is the trace of the right side of \eqref{eq:dyn3}. The antisymmetric part follows from the first relation. Applying $[B_-,\,\cdot\,]$ to $[b_i,b_j]=-2i\epsilon_{ijk}J_kB_+$ with $[B_-,b]=-2b^\dagger$ and $[B_-,B_+]=-4H$ gives $M_{ij}-M_{ji}=-4i\epsilon_{ijk}J_kH$. For the symmetric traceless part it suffices to compute $M_{++}=[b_+,b^\dagger_+]$. With \eqref{eq:bcomp},
\begin{gather*}
[J_+c^\dagger_z,J_+c_z]=J_+c^\dagger_+c_z+2J_+^2-J_+c_+c^\dagger_z,\qquad
[J_+c^\dagger_z,(J_z-1)c_+]=qJ_+(J_z-1)n_+-J_+c_+c^\dagger_z,\\
[(J_z-1)c^\dagger_+,J_+c_z]=-q(J_z-1)J_+n_++J_+c_zc^\dagger_+,\qquad
[(J_z-1)c^\dagger_+,(J_z-1)c_+]=0 ,
\end{gather*}
whence
\[
[b_+,b^\dagger_+]=-\big\{2J_+^2+J_+[c^\dagger_+,c_z]+q[J_z,J_+]n_+\big\}=-\big\{2J_+^2-qJ_+n_++qJ_+n_+\big\}=-2J_+^2 ,
\]
which is the $++$ component of $-\{J_i,J_j\}+\tfrac23\delta_{ij}\J^2$. The three parts assemble into the second relation of \eqref{eq:dyn3}.
\end{proof}

\subsection{From the dynamical algebra to the representation}\label{sec:proofrep}
By the Wigner--Eckart theorem, $\langle n',l',m'|b^\dagger_\mu|n,l,m\rangle=\langle l\,m\,1\,\mu|l'\,m'\rangle\,\rho_{(n',l'),(n,l)}$ with $\rho$ independent of $m$, and the reduced elements of $b$ follow by conjugation,
\begin{equation}\label{eq:rhob}
\rho^{b}_{(n',l'),(n,l)}=(-1)^{l-l'}\sqrt{\frac{2l+1}{2l'+1}}\;\rho^{b^\dagger}_{(n,l),(n',l')} .
\end{equation}
The reduced element of a product $[V\otimes W]^{(L)}$ of two tensor operators of ranks $j_a$ and $j_b$ between $|l\rangle$ and $|l'\rangle$ is obtained from the recoupling identity
\begin{multline}\label{eq:recoupling}
\sum_{\mu\nu}\langle j_a\mu\,j_b\nu|LM\rangle\langle l\,m\,j_b\,\nu|l''\,m+\nu\rangle\langle l''\,m+\nu\,j_a\,\mu|l'm'\rangle\\
=(-1)^{L+l+l'}\sqrt{(2l''+1)(2L+1)}\sixj{l}{j_b}{l''}{j_a}{l'}{L}\langle l\,m\,L\,M|l'm'\rangle ,
\end{multline}
as a sum over the intermediate $l''$ of $6j$ symbols times products of reduced elements.

The $6j$ symbols and Clebsch--Gordan coefficients needed below have one entry equal to $1$ or $2$ and are elementary, and their closed forms are collected in Appendix~\ref{app:6j}.

\begin{proposition}\label{prop:rep}
Let $V$ be a representation of the relations of Proposition~\ref{prop:dyn}, with $b_i=(b_i^\dagger)^\dagger$, $B_+=(B_-)^\dagger$, generated by the vectors $|0,l_0,m\rangle$ of an $\so(3)$ multiplet annihilated by $b$ and $B_+$, and in which the joint eigenspaces of $(H,\J^2,J_z)$ are one-dimensional. Assume in addition that $q\in\tfrac12\mathbb Z$, $l_0-q\in\mathbb Z$, and that $b^\dagger$ does not annihilate the ground multiplet. Then $l_0=|q|$, $H|0,l_0,m\rangle=(l_0+\tfrac32)|0,l_0,m\rangle$, and the action of $b^\dagger$ and $B_+$ is given by \eqref{eq:action}--\eqref{eq:Baction}, up to the phases of the basis vectors.
\end{proposition}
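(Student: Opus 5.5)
The plan is to use the Wigner--Eckart theorem together with the scalar relations of Proposition~\ref{prop:dyn} to fix the reduced matrix elements level by level. I treat the ground multiplet first, then propagate upward by induction on $H$.

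\emph{Ground multiplet.} On $|0,l_0,m\rangle$ we have $B_+=0$, and $B_-B_+=\Cc$ by \eqref{eq:dyn1}. Hence $\Cc=0$, i.e.\ $(E-\tfrac12)(E-\tfrac32)=l_0(l_0+1)$, where $E$ is the ground energy. This gives $E=l_0+\tfrac32$ or $E=\tfrac12-l_0$. Next, $b=0$ on this multiplet, so $b^\dagger\!\cdot b=0$, and \eqref{eq:dyn4} gives $\Aa(2E-3)=2l_0(l_0+1)$. Positivity of $b\cdot b^\dagger=\Aa(2E+3)+2\J^2\ge0$ with $b=(b^\dagger)^\dagger$ should then select $E=l_0+\tfrac32$. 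Substituting this value gives $(l_0(l_0+1)+1-q^2)\,2l_0=2l_0(l_0+1)$, i.e.\ $l_0(l_0^2-q^2)=0$.

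To exclude $l_0=0$ with $q\ne0$ (which requires integer $q$) I use the nonvanishing of $b^\dagger$. The reduced element of $b_i^\dagger$ from the ground state is fixed by $\langle0|b_jb_i^\dagger|0\rangle=\langle0|[b_j,b_i^\dagger]|0\rangle$, which equals $2(1-q^2)\delta_{ij}$ by \eqref{eq:dyn3} at $\J=0$. Its trace must be nonnegative, and $b^\dagger|0\rangle\neq0$ forces it to be positive, so $q^2<1$ and $q=0=l_0$. For $l_0>0$ we get $l_0=|q|$. The case $l_0=0$, $q=0$ is consistent with $l_0=|q|$.

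\emph{Propagation.} The states are $|n,l,m\rangle=(B_-)^n$ applied to $b^\dagger$-descendants. I would show by induction on $k=H-\tfrac32$ that the level $k$ is spanned by $l=k,k-2,\dots\ge|q|$. By \eqref{eq:dyn2}, $[B_+,b^\dagger]=[B_+,b]^\dagger{}'$ and $[B_-,b_i]=-2b_i^\dagger$. The adjoint gives $[B_+,b_i^\dagger]=2b_i$, so the states are generated by $b^\dagger$ and $B_-$. $B_-$ is a scalar, and its action follows from $B_-B_+=\Cc$, which gives $|B_+|n,l\rangle|^2=\Cc$ at level $k-2$ plus $4H$ via $[B_+,B_-]=4H$. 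This yields $4n(n+l+\tfrac12)$ and hence \eqref{eq:Baction}. For $b^\dagger_z$ the reduced elements $\rho_{(n,l+1),(n,l)}$ and $\rho_{(n+1,l-1),(n,l)}$ are the unknowns. By \eqref{eq:dyn2}, $b^\dagger$ commutes with $B_-$ except through $[B_-,b]$; in particular $[B_-,b^\dagger]=0$, so $b^\dagger(B_-)^n=(B_-)^nb^\dagger$. This reduces everything to $n=0$ states, up to the known norms of $B_-^n$.

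\emph{Main step.} On the top states $|0,l\rangle$, I apply the Wigner--Eckart reduction via \eqref{eq:recoupling} to the diagonal matrix element of $[b_i,b_j^\dagger]$ in \eqref{eq:dyn3}, decomposed into ranks $0$ and $2$. The rank-0 part is the scalar $b\cdot b^\dagger-b^\dagger\cdot b$, and the rank-2 part is $-\{J_i,J_j\}$. Both give sums over intermediate $l\pm1$ of $|\rho|^2$ weighted by $6j$ symbols. Together with the known lowering data from the previous level, this is a linear system for the two squared reduced elements $|c_0|^2$ and $|c_1|^2$ at level $k$. Its determinant is a nonzero $6j$ combination for $l\ge1$. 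Solving it recovers the formulas \eqref{eq:action}, with phases absorbed into the basis. The obstacle is bookkeeping: verifying that the solution matches $c_0^2,c_1^2$ of \eqref{eq:action} identically in $n,l,m$, and treating the edge $l=|q|=\tfrac12$, where the rank-2 part is absent and $c_1=0$ must come from $(l-q)=0$. The one-dimensionality hypothesis ensures that the vectors $|n,l,m\rangle$ so obtained are well defined, and the factors $(l\pm q)$ show that no state with $l<|q|$ is reached.
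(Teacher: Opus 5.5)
Your outline matches the paper's proof: the ground multiplet from $B_-B_+=\Cc$ and $b^\dagger\!\cdot b$, then the trace and rank-2 parts of $[b_i,b_j^\dagger]$ as a linear system for the squared reduced elements. However, the step that selects the ground energy fails.

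\textbf{The ground energy.} Positivity of $b\cdot b^\dagger$ does not exclude the root $E=\tfrac12-l_0$. There $2E-3=-2(l_0+1)$, so $\Aa(2E-3)=2l_0(l_0+1)$ forces $\Aa=-l_0$, i.e.\ $|q|=l_0+1$, which is compatible with $l_0-q\in\mathbb Z$. Then $b\cdot b^\dagger=\Aa(2E+3)+2\J^2=2l_0(2l_0-1)$, which is nonnegative and strictly positive for $l_0\ge1$ (e.g.\ $l_0=1$, $|q|=2$, $E=-\tfrac12$). So neither your positivity argument nor the hypothesis on $b^\dagger$ removes this branch. The other ground-level relations do not remove it either. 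Set $X=\rho^2_{(0,l_0+1),(0,l_0)}$ and $Y=\rho^2_{(1,l_0-1),(0,l_0)}$. The rank-2 part, $l_0X+(l_0+1)Y=2l_0(l_0+1)(2l_0+1)$, is independent of $E$; together with the trace it gives $X=0$ and $Y=2l_0(2l_0+1)>0$. The antisymmetric part, $l_0(2l_0+3)X-(l_0+1)(2l_0-1)Y=4E\,l_0(l_0+1)(2l_0+1)$, also holds; the branch is the image of the physical one under $l_0\mapsto-l_0-1$. What removes it is the $\so(2,1)$ positivity the paper invokes: $\|B_-|0,l_0,m\rangle\|^2=\langle[B_+,B_-]\rangle=4E$, which is negative for $l_0\ge1$. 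For $l_0\in\{0,\tfrac12\}$ the branch has $b\cdot b^\dagger=0$ and is excluded by the hypothesis on $b^\dagger$.

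\textbf{The signs.} The linear system fixes only $c_0^2$ and $c_1^2$, and not every sign can be absorbed into the basis. Take $|n,l,m\rangle$ to be a positive multiple of $B_-^n|0,l,m\rangle$, so that \eqref{eq:Baction} holds with the positive sign, and choose the phase of $|0,l,m\rangle$ so that $c_0(0,l-1,m)\ge0$. Then the sign of $c_1(0,l,m)$ is the invariant of the loop $(0,l-1)\to(0,l)\to(1,l-1)\to(0,l-1)$ and must be read off a further relation. The adjoint $[B_+,b_z^\dagger]=2b_z$ of \eqref{eq:dyn2}, applied to $|0,l,m\rangle$, gives $2\sqrt{l+\tfrac12}\,c_1(0,l,m)=2c_0(0,l-1,m)$, hence $c_1\ge0$. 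The paper fixes this sign with $[B_-,b_z]=-2b_z^\dagger$.

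\textbf{The rest.} With these two repairs your argument goes through, and your reduction to the top states via $[B_-,b^\dagger]=0$ is a genuine shortcut. The paper solves the $(X,Y)$ system at every $(n,l)$ of a level, with incoming data $X'$, $Y'$. You need it only at $n=0$, where $Y'=0$, and the norms $\|B_-^n|0,l\rangle\|^2=4^n n!\,(l+\tfrac32)_n$ reproduce the ratios $(2n+2l+3)/(2l+3)$ and $n+1$ of $c_0^2$ and $c_1^2$ in \eqref{eq:action}. You should also state why every $(n,l)$ lies in the tower of some $|0,l\rangle$. By one-dimensionality each $l$ carries a single $\so(2,1)$ tower. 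Its bottom has $\Cc=0$ and $H\ge l_0+\tfrac32$, hence $H=l+\tfrac32$.
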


\begin{proof}
On the ground multiplet, where $H$ takes the value $h_0$, $B_-B_+=\Cc$ gives $(h_0-\tfrac12)(h_0-\tfrac32)=l_0(l_0+1)$ and $b^\dagger\!\cdot b=\Aa(2H-3)-2\J^2$ gives $(l_0(l_0+1)+1-q^2)(2h_0-3)=2l_0(l_0+1)$, whence $h_0=l_0+\tfrac32$ for the $\so(2,1)$ representation to be of the positive discrete series and for $b^\dagger$ not to annihilate the ground multiplet, and then $l_0(l_0+1)=l_0(l_0(l_0+1)+1-q^2)$, i.e.\ $l_0=|q|$ or $l_0=0$. The branch $l_0=0$ with $q\ne0$ gives $\|b_z^\dagger|0,0,0\rangle\|^2=2(1-q^2)\le0$ and is excluded by the hypothesis on $b^\dagger$ (at $|q|=1$, $b^\dagger$ annihilates the ground multiplet and the representation is trivial). The relations \eqref{eq:dyn1}, restricted to the $\so(2,1)$ generated by $B_\pm$ and $H$, fix \eqref{eq:Baction} in each sector $l$.

In the real phase convention of \eqref{eq:action}, let $k=2n+l$ and assume the reduced elements of $b^\dagger$ leaving the levels $k'<k$ known. For the state $(n,l)$ of level $k$ the unknowns are $X=\rho_{(n,l+1),(n,l)}^2$ and $Y=\rho_{(n+1,l-1),(n,l)}^2$, and the known quantities are $X'=\rho^2_{(n,l),(n,l-1)}$ and $Y'=\rho^2_{(n,l),(n-1,l+1)}$, the squares of the elements arriving at $(n,l)$ from the level $k-1$. Since $\sum_{m',\mu}\langle l\,m\,1\,\mu|l'm'\rangle^2=(2l'+1)/(2l+1)$ and $\sum_{m',\mu}\langle l'\,m'\,1\,\mu|l\,m\rangle^2=1$, the scalars $b\cdot b^\dagger$ and $b^\dagger\!\cdot b$ act on $(n,l)$ as $\big((2l+3)X+(2l-1)Y\big)/(2l+1)$ and $X'+Y'$. The trace of the second relation \eqref{eq:dyn3} gives
\begin{equation}\label{eq:sys1}
(2l+3)X+(2l-1)Y=(2l+1)\big(10\lambda_l+6-6q^2+X'+Y'\big),\qquad \lambda_l=l(l+1),
\end{equation}
and its symmetric traceless part, $[b\otimes b^\dagger]^{(2)}-[b^\dagger\otimes b]^{(2)}=-2J^{(2)}$, evaluated on $(n,l)$ by \eqref{eq:recoupling} and \eqref{eq:rhob} with the symbols $\sixj{l}{1}{l\pm1}{1}{l}{2}$ of Appendix~\ref{app:6j}, gives
\begin{equation}\label{eq:sys2}
(2l-1)(2l+3)\big(lX+(l+1)Y\big)=(2l+1)\big[2\lambda_l(2l-1)(2l+3)+(l+1)(2l+3)X'+l(2l-1)Y'\big].
\end{equation}
The determinant of the system \eqref{eq:sys1}--\eqref{eq:sys2} in $(X,Y)$ is $3(2l+1)(2l-1)(2l+3)$, nonzero for $l\ne\tfrac12$, so that $X$ and $Y$ are determined by $X'$ and $Y'$, and one checks by substitution that
\begin{equation}\label{eq:XY}
X=\frac{2(2n+2l+3)\big((l+1)^2-q^2\big)(l+1)}{2l+3},\qquad Y=\frac{4(n+1)(l^2-q^2)\,l}{2l-1},
\end{equation}
with $X'=X(n,l-1)$ and $Y'=Y(n-1,l+1)$, solves it. At $l=|q|$ it gives $Y=0$, consistently with the absence of the state $(n+1,|q|-1)$. At $l=\tfrac12$ on the physical branch the downward state is absent, so $Y=0$ and the scalar equation alone fixes $X$. This is \eqref{eq:action}, with absent transitions set to zero before evaluation. The relation $b\cdot b=2\Aa B_+$ then fixes the sign of $\rho_{(n,l+1),(n,l)}\rho_{(n+1,l-1),(n,l)}$ relative to that of \eqref{eq:Baction}, the products of amplitudes around closed paths such as $(n,l)\to(n,l+1)\to(n+1,l)\to(n,l)$ being independent of the phases of the basis. With the phases fixed by $c_0,c_1\ge0$, the sign is read off $[B_-,b_z]=-2b_z^\dagger$: on the transition $|0,l,m\rangle\to|0,l+1,m\rangle$ it gives $-\beta_1c_1(0,l+1,m)=-2c_0(0,l,m)$, hence $\beta_1=2\sqrt{l+\tfrac32}>0$, which is the positive sign in \eqref{eq:Baction}, and the remaining signs follow along the ladder.\end{proof}

The reduced elements of $b^\dagger$ are thus
\begin{gather}\label{eq:rho}
\rho_{(n,l+1),(n,l)}=\sqrt{\frac{2(2n+2l+3)(l+1-q)(l+1+q)(l+1)}{2l+3}},\\
\rho_{(n+1,l-1),(n,l)}=-\sqrt{\frac{4(n+1)(l-q)(l+q)\,l}{2l-1}} .\notag
\end{gather}

\subsection{Closed forms}\label{sec:closed}
For a rank-2 tensor operator $A$ commuting with $H$, with reduced elements $A_{l'l}$ vanishing unless $l'\in\{l,l\pm2\}$, we write
\begin{equation}\label{eq:alphaP}
\alpha_l(A)=\frac{A_{ll}}{j_2(l)},\qquad \Pc_l(A)=\sqrt{(2l+1)(2l+5)}\;A_{l+2,l}A_{l,l+2},
\end{equation}
so that $\alpha_l$ is the diagonal in units of the quadrupole of $\J$, and $\Pc_l$ is the product of the two off-diagonal elements, which is independent of the phases of the basis. Since $T_{ij}=P_{ij}+P_{ji}-\tfrac23\delta_{ij}P_{kk}$ and the symmetric traceless part of $P_{ij}=b_i^\dagger b_j$ is the rank-2 part of the product $b^\dagger_ib_j$, the tensor of Section~\ref{sec:algebra} is $T=2\,[b^\dagger\otimes b]^{(2)}$ in the normalisation of the Conventions, and likewise $S=2\,[b^\dagger\otimes b^\dagger]^{(2)}B_+$. On the level $k$, the block $(l+2,l)$ of $T$ is reached from $(n,l)$ through $(n-1,l+1)$ only, and the block $(l,l)$ through $(n,l-1)$ and $(n-1,l+1)$. By \eqref{eq:recoupling}, \eqref{eq:rhob}, $\sixj{l}{1}{l+1}{1}{l+2}{2}^2=1/[5(2l+3)]$ and the symbols $\sixj{l}{1}{l\pm1}{1}{l}{2}$, with $X$, $Y$ as in \eqref{eq:XY},
\begin{gather}
\Pc_l:=\Pc_l(T)=\Pc_l(T')=\frac{4(2l+1)(2l+5)}{2l+3}\,X(n-1,l+1)\,Y(n-1,l+1)\notag\\
\phantom{\Pc_l:=\Pc_l(T)=\Pc_l(T')}=\frac{16\,\langle q\rangle_l\langle-q\rangle_l\,(k-l)(k+l+3)\,(l+1)(l+2)}{2l+3},\label{eq:Pcl}\\
\alpha_l(T)=-\frac{2\big[(l+1)(2l+3)X(n,l-1)+l(2l-1)Y(n-1,l+1)\big]}{\lambda_l(2l-1)(2l+3)}=\alpha'_l-2(H-1),\label{eq:alphal}\\
\alpha'_l:=\alpha_l(T')=\frac{2(4q^2-1)H}{(2l-1)(2l+3)},\notag
\end{gather}
where $\langle c\rangle_l=(l+1+c)(l+2-c)$, so that $\langle q\rangle_l\langle-q\rangle_l=((l+1)^2-q^2)((l+2)^2-q^2)$, and the last equality in \eqref{eq:alphal} is the identity $(2l+3)(2n+2l+1)(l^2-q^2)+2n(2l-1)((l+1)^2-q^2)=\tfrac12(2l+1)\big[(H-1)(2l-1)(2l+3)-(4q^2-1)H\big]$ with $H=2n+l+\tfrac32$. The matrix elements of $T'_{zz}$ and $T_{zz}$ follow from $T_{zz}=\sqrt{2/3}\,T_0$ and the coefficients $\langle l\,m\,2\,0|l'\,m\rangle$ of the Appendix: $(T'_{zz})_{ll}=\tfrac13(3m^2-\lambda_l)\alpha'_l$ is \eqref{eq:dl}, and $(T_{zz})_{l,l+2}(T_{zz})_{l+2,l}=\tfrac23\langle l\,m\,2\,0|l+2\,m\rangle\langle l+2\,m\,2\,0|l\,m\rangle\,\Pc_l/\sqrt{(2l+1)(2l+5)}$ is \eqref{eq:5pairs}. Relations \eqref{eq:PQred} and Proposition~\ref{prop:S} are derived from the dynamical algebra in Section~\ref{sec:reorder}, and on the reduced elements they can also be read off directly. The antisymmetric part $P_{ij}-P_{ji}=\epsilon_{ijk}(b^\dagger\times b)_k$ is the vector $b^\dagger\times b=-i\sqrt2\,[b^\dagger\otimes b]^{(1)}$, whose reduced element on $(n,l)$, computed with $\sixj{l}{1}{l\pm1}{1}{l}{1}$, is $\tfrac{i}{j_1(l)}\big[(l+1)X(n,l-1)-l\,Y(n-1,l+1)\big]$, and $b^\dagger\times b=i\J\Fi$ is the identity
\begin{equation}\label{eq:PQanti}
(l+1)\,X(n,l-1)-l\,Y(n-1,l+1)=\lambda_l\,\Fi_l,\qquad \Fi_l=2(\lambda_l-q^2-k),
\end{equation}
which holds by \eqref{eq:XY}. Similarly, the blocks $(l+2,l)$ of $S$ and $T$ are reached through the same intermediate state $(n-1,l+1)$ and differ by the replacement of the amplitude $\rho^b_{(n-1,l+1),(n,l)}=-\sqrt{(2l+1)/(2l+3)}\;\rho_{(n,l),(n-1,l+1)}$ by $\rho_{(n-1,l+1),(n-1,l)}\,\beta_n$, where $\beta_n=2\sqrt{n(n+l+\tfrac12)}$ is the amplitude \eqref{eq:Baction}. By \eqref{eq:rho} the ratio is $2n+2l+1=k+l+1$, and similarly $S_{l,l+2}=(k-l-2)T_{l,l+2}$ ; these are the blocks $(l+2,l)$ and $(l,l+2)$ of $(H-2)T+\tfrac14[\J^2,T]$, since $\lambda_{l+2}-\lambda_l=4(l+\tfrac32)$.

\subsection{The symmetry algebra from the dynamical algebra}\label{sec:reorder}
Proposition~\ref{prop:dyn} gives the commutator of every pair of generators among $\J$, $H$, $B_\pm$, $b^\dagger$, $b$, and therefore allows any product of generators to be brought to a normal order, say $\J$, $H$, $B_-$, $B_+$, $b^\dagger$, $b$. The relations of Section~\ref{sec:algebra} are obtained by computing suitable elements in two ways. We carry this out here, so that Proposition~\ref{prop:S} and Theorems~\ref{thm:T} and \ref{thm:Tp} appear as consequences of Proposition~\ref{prop:dyn} alone, without reference to the basis $|n,l,m\rangle$.

\begin{theorem}\label{thm:reorder}
Relations \eqref{eq:PQred}, Proposition~\ref{prop:S} and Theorems~\ref{thm:T} and \ref{thm:Tp} hold in every associative algebra with involution generated by an $\so(3)$ triple $\J$, scalars $H$, $B_\pm=B_\mp^\dagger$ and vector operators $b$, $b^\dagger$ satisfying \eqref{eq:dyn1}--\eqref{eq:dyn4}.
\end{theorem}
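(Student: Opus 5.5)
The plan is to work entirely in the abstract algebra, using only the commutators of Proposition~\ref{prop:dyn}, and to derive the relations in the order \eqref{eq:PQred}, Proposition~\ref{prop:S}, Theorem~\ref{thm:T}, Theorem~\ref{thm:Tp}. Each target is a covariant identity between rotation tensors. By Wigner--Eckart-free covariance, an identity between two rank-$L$ tensor operators holds as soon as one component holds, for instance the highest weight component built from the $+$ components as in the proof of \eqref{eq:dyn3}. So I would check $V_+$ for vectors, $A_{++}$ for rank~2 and the component of weight $3$ for rank~3, and the trace for scalars. In every case the method is the same: expand the quadratic or quartic expressions in $b,b^\dagger,B_\pm$, move all $b$'s to the right and all $B_+$'s past the $b^\dagger$'s using \eqref{eq:dyn1}--\eqref{eq:dyn3}, and then recognise the contractions $\J\cdot b=0$, $b\cdot b=2\Aa B_+$, $b^\dagger\!\cdot b=\Aa(2H-3)-2\J^2$ and $B_-B_+=\Cc$.

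For \eqref{eq:PQred}, the trace $P_{kk}=b^\dagger\!\cdot b$ is immediate. The antisymmetric part $\epsilon_{kij}b_i^\dagger b_j$ is handled by writing $b^\dagger\times b$ and using $\J\cdot b=0$ together with the vector identities \eqref{eq:vecid}. For $Q_{kk}=b^\dagger\!\cdot b^\dagger B_+=2B_-\Aa B_+=2\Aa\Cc$, the key point is that $\Aa$ commutes with $B_\pm$.

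For Proposition~\ref{prop:S}, I would use $[B_-,b_j]=-2b_j^\dagger$ to write $b^\dagger_ib^\dagger_jB_+=-\tfrac12 b_i^\dagger[B_-,b_j]B_+$. Expanding gives $-\tfrac12 b^\dagger_i B_-b_jB_+ +\tfrac12 b^\dagger_i b_j B_-B_+$, and in the first term $B_-$ and $B_+$ are moved together, using $[B_+,b]=0$ and $[B_+,b^\dagger]=2b$. This expresses $Q$ through $P\,\Cc$, $b^\dagger_ib_j$ with $H$ shifts, and a term $b_ib_j B_+$-free remainder. The commutator $\tfrac14[\J^2,T]$ is then identified from $[\J^2,b^\dagger_i]=2i\epsilon_{ijk}b^\dagger_kJ_j+2b^\dagger_i$-type formulas, and the $\Fi J^{(2)}$ term comes from symmetrising $[b_i,b_j^\dagger]$ in \eqref{eq:dyn3}.

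For Theorem~\ref{thm:T}, I start from \eqref{eq:PPcomm}. The first two terms reduce by \eqref{eq:dyn3} to $P$ times polynomials in $\J,H$. The last two are $Q\cdot J$ and $J\cdot Q^\dagger$, which are rewritten through $T$ by Proposition~\ref{prop:S} and the antisymmetric and trace parts of \eqref{eq:PQred}. Projecting onto ranks $3$ and $1$ then gives \eqref{eq:TT3} and \eqref{eq:TT1}. Transversality \eqref{eq:transv} is already sketched in the text. The anticommutator relations \eqref{eq:TT2} and \eqref{eq:trT2} need a different route, since no commutator produces them. Here I would compute $P_{ij}P_{kl}=b_i^\dagger b_jb_k^\dagger b_l$ and contract: reorder $b_jb_k^\dagger$ into $b_k^\dagger b_j+[b_j,b_k^\dagger]$, and express $b_i^\dagger b_k^\dagger b_jb_l$ as $Q_{ik}\cdot(\ldots)$ via $b_jb_l$ with $b\cdot b=2\Aa B_+$. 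The rank-0 and rank-2 projections of $b_i^\dagger b_k^\dagger b_jb_l$ with indices $(j,k)$ contracted or symmetrised reduce, via \eqref{eq:vecid} and the transversality $\J\cdot b=0$, to $T$, $J^{(2)}$ and scalars. Theorem~\ref{thm:Tp} then follows from Theorem~\ref{thm:T} by substituting $T=T'-2(H-1)J^{(2)}$, with the $\so(3)$ identities \eqref{eq:shiftid} and their rank $0,1,2$ analogues of Lemma~\ref{lem:so3}. \eqref{eq:HR1} follows from \eqref{eq:HRgen} and \eqref{eq:transvp}.

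The main obstacle is \eqref{eq:TT2} and \eqref{eq:trT2}, the Elliott-type relations. These are quartic in $b$, and the reduction of the rank-2 part of $b^\dagger b^\dagger bb$ through contracted pairs is where nontrivial $\J$-ordering corrections appear. I would first try the highest weight component $[T\otimes T]^{(2)}_2$, expressed through $b_+,b_z$ as in \eqref{eq:bcomp}, and use $b_+^\dagger b_+^\dagger$-type terms reduced by $b\cdot b$. As a fallback, the identity can be verified on reduced matrix elements, using \eqref{eq:XY}, \eqref{eq:Pcl} and \eqref{eq:alphal}. That check is legitimate inside the abstract algebra only on the representations of Proposition~\ref{prop:rep}, so I would keep it as a check and insist on the reordering argument for the proof.
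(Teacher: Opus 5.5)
Your strategy is the paper's for \eqref{eq:PQred}, the transversality, the rank-3 relation (expanding \eqref{eq:PPcomm} on the highest component and closing with Proposition~\ref{prop:S}) and Theorem~\ref{thm:Tp}. The genuine gap is the Elliott-type relation \eqref{eq:TT2}. You flag it as the obstacle, and the tools you name do not resolve it.

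Whatever the route (Cartesian contraction or the component $[T\otimes T]^{(2)}_2$), the traceless part of $T_{il}T_{lk}+T_{kl}T_{li}$ reduces to rank-2 tensors in which $T$ is dressed by $\J$, for instance the symmetrised $\epsilon_{iab}J_aT_{bk}$ multiplied on the left or on the right by $\J^2$. These must be brought to the form $\{\J^2,T\}$, $T$, $J^{(2)}$ of the right side. The identities \eqref{eq:vecid} and $\J\cdot b=0$ are scalar and vector contractions and cannot do this.

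The paper closes the computation with two ingredients you do not have.
\begin{enumerate}
\item By \eqref{eq:R1}, $T_{ij}=2P_{ij}-i\epsilon_{ijk}J_k\Fi-\tfrac23\delta_{ij}F$, which is \eqref{eq:TP}. Then the only quartic term in $T_{il}T_{lk}$ is the middle contraction $P_{il}P_{lk}=b_i^\dagger(b\cdot b^\dagger)b_k$, with $b\cdot b^\dagger=\Aa(2H+3)+2\J^2$ a scalar; this gives \eqref{eq:quartic}. Your reductions through $b\cdot b=2\Aa B_+$, and with them $Q$, $Q^\dagger$ and Proposition~\ref{prop:S}, are never needed.
\item The cross terms produce $W_{ik}=(\J\times P)_{ik}+(\J\times P)_{ki}$, which Lemma~\ref{lem:so3}(ii) expresses through $[\J^2,T]$. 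Ordering $\Fi$ against $W$ gives $2i[\Fi,W]=-[\J^2,[\J^2,T]]-6[\J^2,T]$. The double commutator is removed only by \eqref{eq:HRgen} together with \eqref{eq:transv}, which turns $[J\otimes(\J\cdot T)]^{(2)}$ into $-\tfrac23(2\Cc+H\Fi)J^{(2)}$; after that all commutators cancel.
\end{enumerate}
You invoke \eqref{eq:HRgen} only for \eqref{eq:HR1}, but it is the key to \eqref{eq:TT2}. The scalar relation \eqref{eq:trT2}, by contrast, is within reach of your method. Its double contractions give $b_i^\dagger(b\cdot b^\dagger)b_i$ and $(b^\dagger\!\cdot b^\dagger)(b\cdot b)=4\Aa^2\Cc$, plus terms in $M_{ij}$, all reducible with \eqref{eq:vecid}. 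As you note, the reduced-matrix-element check proves nothing in an arbitrary algebra satisfying \eqref{eq:dyn1}--\eqref{eq:dyn4}.

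Three smaller points.
\begin{itemize}
\item \textbf{Antisymmetric part of \eqref{eq:PQred}.} It cannot come from $\J\cdot b=0$ and \eqref{eq:vecid} alone, since $\Fi$ contains $H$ and $q^2$. After $(b^\dagger\times b)_j=-iJ_iP_{ji}$ you need $J_iP_{ji}=J_i(b_ib_j^\dagger-M_{ij})=-J_iM_{ij}$ with \eqref{eq:dyn3}. This step moves a $b$ to the left, against your normal order.
\item \textbf{Proposition~\ref{prop:S}.} Your route is correct, and more economical than the paper's, provided that in $b_i^\dagger B_-b_jB_+$ you move $B_+$ through $b_j$ using $[B_+,b]=0$. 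This gives $Q_{ij}=\tfrac12 b_i^\dagger[b_j,\Cc]$ in Cartesian form, linear in $P$ by $[H,b]=-b$ and Lemma~\ref{lem:so3}(ii). Its adjoint is the paper's identity $2B_-b_+^2=[\Cc,b_+^\dagger]b_+$, and it makes the separate computation of $S$ via $\tilde\Cc$ unnecessary. The term $\Fi J^{(2)}$ then comes from $P_{z+}=P_{+z}+J_+\Fi$. If instead you commute $B_+$ through $b_i^\dagger$ with $[B_+,b^\dagger]=2b$, you regenerate $B_-b_ib_j=Q^\dagger_{ji}$ and obtain only $S-S^\dagger=\tfrac12[\J^2,T]$.
\item \textbf{Rank 1.} The full Cartesian $Q_{ij}$ is what the rank-1 projection of \eqref{eq:PPcomm} requires, because there $Q_{ik}$ appears unsymmetrised. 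The paper obtains \eqref{eq:TT1} more directly, from the contraction $\epsilon_{mik}T_{il}T_{lk}$ and \eqref{eq:quartic}.
\end{itemize}
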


We use the components $V_\pm=V_x\pm iV_y$ of Section~\ref{sec:proofdyn}, with $[J_z,V_\pm]=\pm V_\pm$, $[J_\pm,V_z]=\mp V_\pm$, $[J_\pm,V_\mp]=\pm2V_z$, and write $P_{ab}=b_a^\dagger b_b$ for $a,b\in\{+,z,-\}$. The components of \eqref{eq:dyn3} and of its adjoint that occur are
\begin{gather*}
[b_z,b_+]=-2J_+B_+,\qquad [b^\dagger_+,b^\dagger_z]=2B_-J_+,\qquad M_{++}=[b_+,b^\dagger_+]=-2J_+^2,\\
M_{z+}=[b_z,b_+^\dagger]=-\{J_z,J_+\}-2HJ_+,\qquad M_{+z}=[b_+,b_z^\dagger]=-\{J_z,J_+\}+2HJ_+ .
\end{gather*}
The adjoints of \eqref{eq:dyn2} are $[B_-,b^\dagger_i]=0$, $[B_+,b^\dagger_i]=2b_i$. In the conventions of Section~\ref{sec:algebra} the highest components of the tensors are $T_2=P_{++}$, $T_1=-(P_{+z}+P_{z+})$, $S_2=b^{\dagger2}_+B_+$, $S^\dagger_2=B_-b_+^2$, $J^{(2)}_2=\tfrac12J_+^2$, $J^{(3)}_3=-\dfrac{J_+^3}{2\sqrt2}$, $[J\otimes T]^{(3)}_3=-J_+T_2/\sqrt2$, and $[J\otimes V]^{(2)}_2=\tfrac12J_+V_+$ for a vector $V$. Two tensor operators of the same rank are equal as soon as their highest components are, the others being obtained by commutation with $J_-$, and this is used throughout. The identities of $\so(3)$ that are needed are collected first.

\begin{lemma}\label{lem:so3}
Let $A$, $B$, $T$ be rank-2 tensor operators, $V$ a vector operator, and $\J\cdot T=J_iT_{ij}$, which equals $T\cdot\J$.
\begin{enumerate}[label=(\roman*)]
\item $\epsilon_{mik}A_{il}B_{lk}=i\sqrt{\tfrac52}\,[A\otimes B]^{(1)}_m$, the symmetric traceless part of $A_{il}B_{lk}+A_{kl}B_{li}$ is $-\sqrt{\tfrac73}\,[A\otimes B]^{(2)}$, and $A_{il}B_{li}=\sqrt5\,[A\otimes B]^{(0)}$.
\item $[\J^2,V_i]=-2i(\J\times V)_i-2V_i$ and $[\J^2,T_{ij}]=-2i\big((\J\times T)_{ij}+(\J\times T)_{ji}\big)-6T_{ij}$, where $(\J\times T)_{ij}=\epsilon_{iab}J_aT_{bj}$. Moreover
\begin{equation}\label{eq:HRgen}
[\J^2,[\J^2,T]]=8\{\J^2,T\}-12\,T-24\,[J\otimes(\J\cdot T)]^{(2)} .
\end{equation}
\item Besides \eqref{eq:shiftid},
\begin{gather}
[T\otimes J^{(2)}]^{(1)}+[J^{(2)}\otimes T]^{(1)}=3\sqrt{\tfrac25}\;\J\cdot T,\qquad [J^{(2)}\otimes J^{(2)}]^{(1)}=\sqrt{\tfrac25}\,(\J^2-\tfrac34)\,\J,\notag\\
[T\otimes J^{(2)}]^{(2)}+[J^{(2)}\otimes T]^{(2)}=\tfrac{1}{\sqrt{21}}\Big(2\{\J^2,T\}+9T-12\,[J\otimes(\J\cdot T)]^{(2)}\Big),\label{eq:shiftid2}\\
[J^{(2)}\otimes J^{(2)}]^{(2)}=-\tfrac{2}{\sqrt{21}}(\J^2-\tfrac{15}{4})\,J^{(2)},\notag\\
T_{ij}J_{(ij)}=J_{(ij)}T_{ij}=\J\cdot(\J\cdot T),\qquad J_{(ij)}J_{(ij)}=\tfrac23\J^4-\tfrac12\J^2,\qquad \J\cdot J^{(2)}=(\tfrac23\J^2-\tfrac12)\,\J .\notag
\end{gather}
\end{enumerate}
\end{lemma}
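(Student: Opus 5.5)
The identities are all statements about tensor operators built from $\J$ and an arbitrary rank-2 tensor operator, so I will prove them on highest components. Each side of every identity is a tensor operator of a definite rank with respect to $\J$, so it suffices to check equality of the highest components (rank $L$, component $L$) and to handle scalars by a direct contraction. The tools are the conventions already fixed: $T_2$, $T_1=-[J_-,T_2]/2$, $J^{(2)}_2=\tfrac12J_+^2$, $J^{(3)}_3=-J_+^3/(2\sqrt2)$, the coupling formula \eqref{eq:coupling} with explicit Clebsch--Gordan coefficients, and the commutators $[J_z,T_\mu]=\mu T_\mu$, $[J_\pm,T_\mu]=\sqrt{(2\mp\mu)(3\pm\mu)}\,T_{\mu\pm1}$.

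For (i), I will write the Cartesian contractions in spherical components. $\epsilon_{mik}A_{il}B_{lk}$ is the antisymmetric part of the matrix product $AB$, hence a vector. Its $m=+$ component, expressed through $A_\mu B_\nu$, must be proportional to $[A\otimes B]^{(1)}_1$, and comparing one coefficient fixes $i\sqrt{5/2}$. The same procedure gives the rank-2 and rank-0 statements; the trace $A_{il}B_{li}=\sum(-1)^\mu A_\mu B_{-\mu}$ is immediate.

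For (ii), I will expand $[\J^2,V_i]=J_a[J_a,V_i]+[J_a,V_i]J_a$ using $[J_a,V_i]=i\epsilon_{aik}V_k$, and reorder with $[J_a,V_k]$ to reach $-2i(\J\times V)_i-2V_i$. The tensor case is identical, with one term per index. For \eqref{eq:HRgen}, I will iterate: apply $[\J^2,\cdot]$ to $-2i(\J\times T+(\J\times T)^t)-6T$. This produces $\J\times(\J\times T)$ terms, which reduce through $\epsilon\epsilon=\delta\delta-\delta\delta$ to $\J^2T$, $J_iJ_aT_{aj}$ and $T$. The term $J_iJ_aT_{aj}$, symmetrised and made traceless, is the rank-2 part $[J\otimes(\J\cdot T)]^{(2)}$ up to a factor (check on the highest component: $\tfrac12J_+(\J\cdot T)_+$). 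An alternative check uses the reduced matrix elements: on $D_l\to D_{l'}$, $[\J^2,[\J^2,T]]$ acts by $(\lambda_{l'}-\lambda_l)^2$. With $\lambda_{l'}-\lambda_l\in\{0,\pm(4l+6)\text{ or }\pm(4l-2)\}$, I will verify the scalar identity $(\lambda'-\lambda)^2=8(\lambda'+\lambda)-12$ for $l'=l\pm2$. This leaves the $l'=l,l\pm1$ components to be absorbed by the $[J\otimes(\J\cdot T)]^{(2)}$ term, whose reduced elements I compute from the $6j$ symbols with a $1$. I will use this reduced-element check as the main verification, since it is clean.

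For (iii), everything is highest-component computation. The two rank-3 identities and the $[J^{(2)}\otimes J^{(2)}]^{(L)}$ formulas reduce to polynomials in $J_+,J_z$ acting on a highest weight. Since these are tensor identities, I can evaluate them as operators on an arbitrary $D_l$ via the Wigner--Eckart theorem: $J^{(L)}$ has reduced element proportional to a known function of $l$, and the couplings are given by \eqref{eq:recoupling} with a single intermediate $l''=l$. The identities then become $6j$-symbol identities in $l$, checked in closed form. For the $T$-linear ones ($[T\otimes J^{(2)}]+[J^{(2)}\otimes T]$), the sum over the two orderings lets the pieces combine. I will compute on each block $l\to l'$ with $l'\in\{l,l\pm1,l\pm2\}$, where $J^{(2)}$ is diagonal in $l$, so the product is $T_{l'l}$ times a factor of the form $6j\cdot j_2(l)+6j\cdot j_2(l')$. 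I then match this against $[J\otimes T]^{(3)}$, $\J\cdot T$, $\{\J^2,T\}$ and $[J\otimes(\J\cdot T)]^{(2)}$, block by block.

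The contraction identities at the end follow from (i) and direct $\epsilon\epsilon$ manipulations. I expect the main obstacle to be the rank-2 identity \eqref{eq:shiftid2}: on the blocks $l'=l\pm1$ the right side must be supplied entirely by $[J\otimes(\J\cdot T)]^{(2)}$, and the normalisations must be tracked carefully there.
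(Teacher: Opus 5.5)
\paragraph{Gap: the Wigner--Eckart check does not give an algebraic identity.} Your part (i), the first two formulas of (ii) and the scalar contractions at the end of (iii) are argued essentially as in the paper. For \eqref{eq:HRgen} and for the coupled identities of (iii), however, you replace the highest-component computation announced in your first paragraph by a Wigner--Eckart check on the blocks $D_l\to D_{l'}$. That does not prove the lemma in the generality in which it is stated and used.

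Reduced matrix elements exist only in a representation on which $\J$ acts completely reducibly with finite-dimensional multiplets. Your check therefore shows only that the identities hold in every such representation. The lemma is an identity in any associative algebra containing an $\so(3)$ triple and tensor operators for it, and the paper uses it in that form:
\begin{itemize}
\item in $\Alg(H,q)$, in the proof of Proposition~\ref{prop:status}(ii); part (i) of that proposition leaves open whether the concrete realisation is faithful, so you cannot argue through the physical representation;
\item in Theorem~\ref{thm:reorder}.
\end{itemize}
The paper never leaves the algebra. Two tensor operators of the same rank coincide once their highest components do, since the other components follow by commutation with $J_-$. So \eqref{eq:HRgen} is a single computation on the component $\mu=2$, using the explicit forms of $[\J^2,T_2]$, $[\J^2,T_1]$ and $(\J\cdot T)_+$ and moving all $J$'s to the right of the $T$'s. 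Likewise, each identity of (iii) is one commutator of highest components, e.g.\ $[T\otimes J^{(2)}]^{(3)}_3+[J^{(2)}\otimes T]^{(3)}_3=\tfrac{1}{\sqrt2}\big([T_2,J^{(2)}_1]+[J^{(2)}_2,T_1]\big)$.

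\paragraph{How to close it.} You have two options.
\begin{itemize}
\item Make the highest-component computation the proof, or the Cartesian iteration you sketch for \eqref{eq:HRgen}. Keep the reduced elements as an independent check, which is the role such checks play in Section~\ref{sec:proofcov}.
\item Add a faithfulness argument. In the algebra generated by $\J$ and $T$ subject only to the tensor-operator relations, a $T$-linear element can be written $\sum_\mu u_\mu T_\mu$ with $u_\mu\in U(\so(3))$. Realise $T_\mu$ as $w\mapsto e_\mu\otimes w$ from $W=\bigoplus_l D_l$ into $V_2\otimes W$, with $(e_\mu)$ a basis of the spin-2 module $V_2$. The map $u\otimes e\mapsto\sum u_{(1)}e\otimes u_{(2)}$ is a bijection $U(\so(3))\otimes V_2\to V_2\otimes U(\so(3))$, and $U(\so(3))$ acts faithfully on $W$. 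Together these show that such an element acting as zero in this representation is zero. The pure-$\J$ identities need only the faithfulness.
\end{itemize}
If you take the second option, note that this universal $T$ has blocks $l\to l\pm1$ and multiplicities. Your block-by-block check must then include them, with $6j$ symbols beyond those listed in the Appendix.

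\paragraph{A minor slip.} The relation $[J_\pm,T_\mu]=\sqrt{(2\mp\mu)(3\pm\mu)}\,T_{\mu\pm1}$ gives $[J_-,T_2]=2T_1$, so $T_1=+\tfrac12[J_-,T_2]$, not $-\tfrac12[J_-,T_2]$.
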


\begin{proof}
(i) The three contractions are covariant and bilinear, with $A$ to the left of $B$, and each rank occurs once in the product of two rank-2 tensors, so the constants are read off one component, e.g.\ $(A_{il}B_{lk}-A_{kl}B_{li})$ for $(i,k)=(x,y)$ against $\langle2\mu\,2\nu|1\,0\rangle$. (ii) For a tensor operator of rank $k$, $\sum_j[J_j,[J_j,X]]=k(k+1)X$, so that $[\J^2,X]=\sum_j(J_j[J_j,X]+[J_j,X]J_j)=2\sum_jJ_j[J_j,X]-k(k+1)X$. With $[J_j,V_i]=i\epsilon_{jik}V_k$ this is the first formula, and the second follows from $[J_a,T_{ij}]=i\epsilon_{aik}T_{kj}+i\epsilon_{ajk}T_{ik}$. Both sides of \eqref{eq:HRgen} are rank-2 tensors, and on the component $\mu=2$ one has $[\J^2,T_2]=2T_1J_++4T_2J_z+6T_2$, $[\J^2,T_1]=2J_zT_1+\sqrt6J_+T_0+2J_-T_2-6T_1$, and $(\J\cdot T)_+=-\tfrac{1}{\sqrt6}J_+T_0+J_-T_2-J_zT_1$, so that both sides become, once the $J$'s are moved to the right of the $T$'s and $J_\pm J_\mp$ is expressed through $\J^2$ and $J_z$, $12T_1J_+J_z+28T_1J_++2\sqrt6\,T_0J_+^2+4T_2\J^2+12T_2J_z^2+44T_2J_z+36T_2$. (iii) Each identity relates tensors of the same rank and is checked on the highest component in the same way, with $J^{(2)}_1=-\tfrac12\{J_+,J_z\}$, $J^{(2)}_0=\tfrac{1}{\sqrt6}(3J_z^2-\J^2)$ and the coupling coefficients of the Appendix. For instance $[T\otimes J^{(2)}]^{(3)}_3+[J^{(2)}\otimes T]^{(3)}_3=\tfrac{1}{\sqrt2}\big([T_2,J^{(2)}_1]+[J^{(2)}_2,T_1]\big)=\tfrac{1}{\sqrt2}(2J_+T_2+2J_+T_2)=-4[J\otimes T]^{(3)}_3$, and $[J^{(2)}\otimes J^{(2)}]^{(3)}_3=\tfrac{1}{\sqrt2}[J^{(2)}_2,J^{(2)}_1]=J_+^3/\sqrt2=-2J^{(3)}_3$, which are \eqref{eq:shiftid}. The scalar identities are contractions: $T_{ij}J_{(ij)}=T_{ij}J_iJ_j$ since $T$ is symmetric and traceless, and $J_iJ_{(ij)}=(\tfrac23\J^2-\tfrac12)J_j$ by $J_iJ_jJ_i=\J^2J_j-J_j$.
\end{proof}

\paragraph{Antisymmetric part and traces.}
The traces in \eqref{eq:PQred} are \eqref{eq:dyn4} and its adjoint: $P_{kk}=b^\dagger\!\cdot b$, and $Q_{kk}=(b\cdot b)^\dagger B_+=2B_-\Aa B_+=2\Aa\,\Cc$. For the antisymmetric part, $(b^\dagger\times b)_j=\epsilon_{jki}P_{ki}=-iJ_iP_{ji}$ by the covariance computation of Section~\ref{sec:algebra}, and $J_iP_{ji}=J_i(b_ib^\dagger_j-M_{ij})=-J_iM_{ij}$ by $\J\cdot b=0$. From \eqref{eq:dyn3}, $J_iM_{ij}=2(2\J^2+1-q^2)J_j-J_i\{J_i,J_j\}-2i\epsilon_{ijk}J_iJ_kH$, with $J_i\{J_i,J_j\}=2\J^2J_j-J_j$ and $\epsilon_{ijk}J_iJ_k=-iJ_j$ by $\J\times\J=i\J$. Hence $J_iM_{ij}=\Fi J_j$ and $b^\dagger\times b=i\J\Fi$, which is the first relation of \eqref{eq:PQred}. We shall use it in the forms
\begin{equation}\label{eq:R1}
\epsilon_{ijm}P_{ij}=iJ_m\Fi,\qquad J_kP_{mk}=P_{km}J_k=-J_m\Fi,\qquad P_{z+}=P_{+z}+J_+\Fi ,
\end{equation}
the second following from $\J\cdot b=b^\dagger\!\cdot\J=0$ and the vector character of $b$, $b^\dagger$.

\paragraph{Proposition~\ref{prop:S}.}
Since $[\J^2,T_2]=2T_1J_++4T_2J_z+6T_2$ and $P_{z+}=P_{+z}+J_+\Fi$, the components $\mu=2$ of the two relations of Proposition~\ref{prop:S} read
\begin{equation}\label{eq:Spp}
B_-b_+^2=(H-\tfrac72)\,T_2-T_2J_z+P_{z+}J_+ ,\qquad b_+^{\dagger2}B_+=(H-\tfrac12)\,T_2+T_2J_z-P_{+z}J_+ ,
\end{equation}
and it suffices to prove these. For the first, compute $b^\dagger_+\Cc\,b_+$ in two ways: by $B_-B_+=\Cc$, $[B_-,b^\dagger_+]=0$ and $[B_+,b^\dagger_+]=2b_+$,
\[
b^\dagger_+\Cc\,b_+=b^\dagger_+B_-B_+b_+=B_-(B_+b^\dagger_+-2b_+)b_+=\Cc\,b^\dagger_+b_+-2B_-b_+^2,\qquad\text{i.e.}\qquad 2B_-b_+^2=[\Cc,b^\dagger_+]\,b_+ .
\]
Now $[H,b^\dagger_+]=b^\dagger_+$ gives $[(H-\tfrac12)(H-\tfrac32),b^\dagger_+]=(2H-3)b^\dagger_+$, and Lemma~\ref{lem:so3}(ii) gives $[\J^2,b^\dagger_+]=-2b^\dagger_zJ_++2b^\dagger_+(J_z+1)$, so that $[\Cc,b^\dagger_+]=2(H-\tfrac52)b^\dagger_++2b^\dagger_zJ_+-2b^\dagger_+J_z$. Moving $J_+$ and $J_z$ to the right of $b_+$ gives the first relation of \eqref{eq:Spp}. For the second, let $\tilde\Cc=B_+B_-=\Cc+4H=(H+\tfrac12)(H+\tfrac32)-\J^2$, and by $[B_-,b_+]=-2b^\dagger_+$ and $[B_+,b_+]=0$,
\[
b_+\tilde\Cc\,b^\dagger_+=b_+B_+B_-b^\dagger_+=B_+(B_-b_++2b^\dagger_+)b^\dagger_+=\tilde\Cc\,b_+b^\dagger_++2B_+b^{\dagger2}_+,\qquad\text{i.e.}\qquad 2B_+b^{\dagger2}_+=[b_+,\tilde\Cc]\,b^\dagger_+ ,
\]
with $[b_+,\tilde\Cc]=(2H+5)b_+-2b_zJ_++2b_+J_z$. Then $b^{\dagger2}_+B_+=B_+b^{\dagger2}_+-2(b_+b^\dagger_++b^\dagger_+b_+)$, and with $b_+b^\dagger_+=T_2-2J_+^2$, $b_zb^\dagger_+=P_{+z}+M_{z+}$ and the $J$'s moved to the right, the terms in $J_+^2$ cancel and the second relation of \eqref{eq:Spp} follows. This proves Proposition~\ref{prop:S}. Adding the two relations \eqref{eq:Spp} gives $S_2+S^\dagger_2=2(H-2)T_2+\Fi J_+^2$, which is used below.

\paragraph{The rank-3 relation.}
The highest component of \eqref{eq:TT3} is $[T_2,T_1]=-8(H-1)J_+T_2-4\Fi J_+^3$. Write $-[T_2,T_1]=[P_{++},P_{+z}]+[P_{++},P_{z+}]$ and expand with $[XY,ZW]=X[Y,Z]W+XZ[Y,W]+[X,Z]WY+Z[X,W]Y$, which is \eqref{eq:PPcomm} in components:
\begin{align*}
[P_{++},P_{+z}]&=b^\dagger_+M_{++}b_z+b^{\dagger2}_+[b_+,b_z]-b^\dagger_+M_{z+}b_+=-2b^\dagger_+J_+^2b_z+2b^{\dagger2}_+J_+B_+-b^\dagger_+M_{z+}b_+,\\
[P_{++},P_{z+}]&=b^\dagger_+M_{+z}b_++[b^\dagger_+,b^\dagger_z]b_+^2-b^\dagger_zM_{++}b_+=b^\dagger_+M_{+z}b_++2B_-J_+b_+^2+2b^\dagger_zJ_+^2b_+ .
\end{align*}
Since $M_{+z}-M_{z+}=4HJ_+$ and $b^\dagger_+H=(H-1)b^\dagger_+$, the terms in $M$ add up to $4(H-1)J_+T_2$. Moving the $J$'s to the left, $b^\dagger_+J_+^2=J_+^2b^\dagger_+$ and $b^\dagger_zJ_+^2=J_+^2b^\dagger_z+2J_+b^\dagger_+$, the terms in $J_+^2$ give $2J_+^2(P_{z+}-P_{+z})+4J_+T_2=2\Fi J_+^3+4J_+T_2$, and the terms in $B_\pm$ give $2J_+(S_2+S^\dagger_2)=4(H-2)J_+T_2+2\Fi J_+^3$. Altogether $-[T_2,T_1]=8(H-1)J_+T_2+4\Fi J_+^3$, as claimed. This is the mechanism announced after \eqref{eq:PPcomm}: the term $8(H-1)[J\otimes T]^{(3)}$ collects $4(H-1)$ from the $M$'s, $4$ from the reordering of $J_+^2$ and $4(H-2)$ from Proposition~\ref{prop:S}, and $8\Fi J^{(3)}$ comes from the antisymmetric part \eqref{eq:R1} and from Proposition~\ref{prop:S} in equal parts.

\paragraph{The relations of rank 1, 2 and 0.}
These are contractions, and we derive them in Cartesian form. Let $F=P_{kk}=b^\dagger\!\cdot b$ and $G=b\cdot b^\dagger=F+M_{kk}=\Aa(2H+3)+2\J^2$, and by \eqref{eq:dyn4}, $F=(2H-5)\J^2+(2H-3)(1-q^2)=2\Cc+(H-\tfrac32)\Fi$. By \eqref{eq:R1},
\begin{equation}\label{eq:TP}
T_{ij}=2P_{ij}-i\epsilon_{ijk}J_k\Fi-\tfrac23\delta_{ij}F .
\end{equation}
The quartic terms reduce through $G$: $P_{il}P_{lk}=b^\dagger_i\,b_lb^\dagger_l\,b_k=b^\dagger_iGb_k$, and, since $b^\dagger_iH=(H-1)b^\dagger_i$ and $b^\dagger_i\J^2=\J^2b^\dagger_i+2i(\J\times b^\dagger)_i+2b^\dagger_i$ by Lemma~\ref{lem:so3}(ii),
\begin{equation}\label{eq:quartic}
P_{il}P_{lk}=(2H+3)\big[(\J^2+2)P_{ik}+2i(\J\times P)_{ik}\big]+(2H+1)(1-q^2)P_{ik},\qquad (\J\times P)_{ik}=\epsilon_{iab}J_aP_{bk} .
\end{equation}
The cross terms of \eqref{eq:TP} reduce through \eqref{eq:R1} and $\J\times\J=i\J$.

\emph{Trace.} Expanding $T_{ij}T_{ji}$ with \eqref{eq:TP}, the cross terms are $-2i\epsilon_{jim}P_{ij}J_m\Fi=-2\J^2\Fi^2$, then $-2i\epsilon_{ijk}J_k\Fi P_{ji}=-2\J^2\Fi^2$, then $-\epsilon_{ijk}\epsilon_{jim}J_kJ_m\Fi^2=2\J^2\Fi^2$, and finally $-\tfrac83F^2+\tfrac43F^2$, so that
\[
\Tr T^2=4P_{ij}P_{ji}-2\J^2\Fi^2-\tfrac43F^2 .
\] By \eqref{eq:quartic} with $i=k$ summed, using $\epsilon_{iab}J_aP_{bi}=i\J^2\Fi$,
\[
P_{ij}P_{ji}=(2H+3)\big[(\J^2+2)F-2\J^2\Fi\big]+(2H+1)(1-q^2)F ,
\]
and substituting $F$ and $\J^2=\tfrac12\Fi+H-\tfrac32+q^2$ gives \eqref{eq:trT2}.

\emph{Rank 1.} Let $V_m=\epsilon_{mik}T_{il}T_{lk}=i\sqrt{5/2}\,[T\otimes T]^{(1)}_m$. Expanding with \eqref{eq:TP} and contracting the $\epsilon$'s, the eight cross terms are $-2iJ_m\Fi(F+\Fi)$ twice, $i\Fi^2J_m$, and four terms $\pm\tfrac43 iJ_m\Fi F$ of which two cancel two, so that $V_m=4\epsilon_{mik}P_{il}P_{lk}-iJ_m\Fi(4F+3\Fi)$. By \eqref{eq:quartic}, $\epsilon_{mik}P_{ik}=iJ_m\Fi$ and $\epsilon_{mik}(\J\times P)_{ik}=J_kP_{mk}-J_mF=-J_m(\Fi+F)$,
\[
\epsilon_{mik}P_{il}P_{lk}=iJ_m\big[(2H+3)(\J^2\Fi-2F)+(2H+1)(1-q^2)\Fi\big],
\]
hence $V_m=iJ_m\big[4(2H+3)(\J^2\Fi-2F)+4(2H+1)(1-q^2)\Fi-4\Fi F-3\Fi^2\big]$. After substitution of $F$ and $\J^2$ the bracket is $3\chi$, with $\chi$ as in \eqref{eq:TT1}, and $[T\otimes T]^{(1)}=-i\sqrt{2/5}\,V=3\sqrt{2/5}\,\chi\,\J$, which is \eqref{eq:TT1}.

\emph{Rank 2.} Let $U_{ik}=T_{il}T_{lk}+T_{kl}T_{li}$, whose traceless part is $-\sqrt{7/3}\,[T\otimes T]^{(2)}$. The expansion of $T_{il}T_{lk}$ with \eqref{eq:TP} contains, besides $4P_{il}P_{lk}$, the terms $-2i\epsilon_{lkb}P_{il}J_b\Fi$ and $-2i\epsilon_{ila}J_a\Fi P_{lk}$, which are $-2i(\J\times P^{t})_{ki}\Fi-4P_{ik}\Fi-2P_{ki}\Fi+2\delta_{ik}F\Fi$ and $2i\Fi(\J\times P)_{ik}$ with $(\J\times P^t)_{ki}=\epsilon_{kab}J_aP_{ib}$, obtained by moving $J_b$ through $b_l$ and $b^\dagger_i$. It also contains the term $-\epsilon_{ila}\epsilon_{lkb}J_aJ_b\Fi^2=-\Fi^2(J_kJ_i-\delta_{ik}\J^2)$, the terms $-\tfrac43\{P_{ik},F\}+\tfrac49\delta_{ik}F^2$, and terms antisymmetric in $(i,k)$. With $(\J\times P^t)_{ik}=(\J\times P)_{ik}-i\Fi(\delta_{ik}\J^2-J_kJ_i)$, from \eqref{eq:R1}, and $W_{ik}=(\J\times P)_{ik}+(\J\times P)_{ki}$, the symmetrisation gives
\[
U_{ik}=4(P_{il}P_{lk}+P_{kl}P_{li})+2i[\Fi,W_{ik}]+\Fi^2\big(\{J_i,J_k\}-2\delta_{ik}\J^2\big)-6T_{ik}\Fi-\tfrac43\{T_{ik},F\}-\tfrac89\delta_{ik}F^2 .
\]
Now $P$ is $\tfrac12T$ plus its antisymmetric part $\tfrac{i}{2}\epsilon J\Fi$ and its trace, so that by Lemma~\ref{lem:so3}(ii)
\[
W_{ik}=\tfrac{i}{4}\big([\J^2,T_{ik}]+6T_{ik}\big)+\tfrac{i}{2}\Fi\big(2\delta_{ik}\J^2-\{J_i,J_k\}\big),\qquad 2i[\Fi,W_{ik}]=-[\J^2,[\J^2,T_{ik}]]-6[\J^2,T_{ik}] ,
\]
and \eqref{eq:quartic} gives $4(P_{il}P_{lk}+P_{kl}P_{li})=4(2H+3)\big[(\J^2+2)(P_{ik}+P_{ki})+2iW_{ik}\big]+4(2H+1)(1-q^2)(P_{ik}+P_{ki})$ with $P_{ik}+P_{ki}=T_{ik}+\tfrac23\delta_{ik}F$. Taking the traceless part, writing $4(2H+3)\J^2T=2(2H+3)(\{\J^2,T\}+[\J^2,T])$, $-6T\Fi=-3\{T,\Fi\}+6[\J^2,T]$, and using \eqref{eq:HRgen} with \eqref{eq:transv}, $[\J^2,[\J^2,T]]=8\{\J^2,T\}-12T+16(2\Cc+H\Fi)J^{(2)}$, all commutators cancel and
\[
U^{\rm traceless}=\tfrac43(H-1)\{\J^2,T\}+\big(\tfrac{20}{3}H-6-\tfrac83q^2H\big)T+\big(2\Fi^2+24\Fi-32\Cc\big)J^{(2)} ,
\]
where $\{T,\Fi\}=2\{\J^2,T\}+2(3-2H-2q^2)T$ and $\{T,F\}=(2H-5)\{\J^2,T\}+2(2H-3)(1-q^2)T$ were used. This is $-\tfrac43$ times the bracket of \eqref{eq:TT2}, as it should be since $\sqrt{7/3}\cdot4/\sqrt{21}=\tfrac43$.

\paragraph{Theorem~\ref{thm:Tp}.}
With $T'=T+\beta J^{(2)}$, $\beta=2(H-1)$, Lemma~\ref{lem:so3}(iii) turns each relation of Theorem~\ref{thm:T} into the corresponding relation of Theorem~\ref{thm:Tp}. The transversality \eqref{eq:transvp} is \eqref{eq:transv} plus $\beta\,\J\cdot J^{(2)}$, and the rank-3 relation is the computation of the proof of Proposition~\ref{prop:shift}. For the others, with $\J\cdot T=-\tfrac23(2\Cc+H\Fi)\J$ and $[J\otimes(\J\cdot T)]^{(2)}=-\tfrac23(2\Cc+H\Fi)J^{(2)}$,
\begin{align*}
[T'\otimes T']^{(1)}&=[T\otimes T]^{(1)}-2\sqrt{\tfrac25}\,\beta\,(2\Cc+H\Fi)\,\J+\sqrt{\tfrac25}\,\beta^2(\J^2-\tfrac34)\,\J,\\
[T'\otimes T']^{(2)}&=[T\otimes T]^{(2)}+\tfrac{\beta}{\sqrt{21}}\Big(2\{\J^2,T\}+9T+8(2\Cc+H\Fi)J^{(2)}\Big)-\tfrac{2\beta^2}{\sqrt{21}}(\J^2-\tfrac{15}{4})J^{(2)},\\
\Tr T'^2&=\Tr T^2-\tfrac43\,\beta\,(2\Cc+H\Fi)\,\J^2+\beta^2(\tfrac23\J^4-\tfrac12\J^2),
\end{align*}
in which $T=T'-\beta J^{(2)}$ and $\{\J^2,J^{(2)}\}=2\J^2J^{(2)}$ are substituted, and the first is $3f_1=3\chi-4(H-1)(2\Cc+H\Fi)+4(H-1)^2(\J^2-\tfrac34)$. Finally \eqref{eq:HR1} is \eqref{eq:HRgen} with $\J\cdot T'=\tfrac13(4q^2-1)H\J$. In each case the right sides of Theorem~\ref{thm:Tp} are recovered by direct expansion. This completes the proof of Theorem~\ref{thm:reorder}.

\subsection{The covariant relations on the reduced matrix elements}\label{sec:proofcov}
Each relation of Theorems~\ref{thm:T} and \ref{thm:Tp} is an equality between rank-$L$ tensor operators commuting with $H$. By the Wigner--Eckart theorem it is equivalent to the equality of the reduced elements of its two sides on every block $(l',l)$ of every level, and only the blocks $l'=l$ and $l'=l\pm2$ can be nonzero. This form of the relations is what the classification of the representations in Section~\ref{sec:proofreps} uses, and, since the reduced elements of $T$ are known in closed form, it also provides a check of Section~\ref{sec:reorder} independent of the reordering. The reduced elements of the left sides are given by \eqref{eq:recoupling} in terms of \eqref{eq:alphaP}, and with the $6j$ symbols of Appendix~\ref{app:6j} one finds:

\begin{lemma}\label{lem:red}
Let $A$ be as in \eqref{eq:alphaP}, $\alpha_l=\alpha_l(A)$, $\Pc_l=\Pc_l(A)$. The reduced elements of $[A\otimes A]^{(L)}$ on the block $(l,l)$, divided by $j_L(l)$ ($j_0=1$), are
\begin{align}
L=0:&\quad \frac{1}{\sqrt5}\Big[\frac{\Pc_l+\Pc_{l-2}}{2l+1}+\frac{\lambda_l(2l-1)(2l+3)}{6}\,\alpha_l^2\Big],\label{eq:red0}\\
L=1:&\quad \sqrt{\tfrac{2}{5}}\Big[-\frac{\Pc_l}{(l+1)(2l+1)}+\frac{\Pc_{l-2}}{l(2l+1)}+\frac{(2l-1)(2l+3)}{4}\,\alpha_l^2\Big],\label{eq:red1}\\
L=2:&\quad \sqrt{\tfrac{3}{7}}\Big[\frac{2\,\Pc_l}{(l+1)(2l+1)(2l+3)}+\frac{2\,\Pc_{l-2}}{l(2l-1)(2l+1)}-\frac{(2l-3)(2l+5)}{6}\,\alpha_l^2\Big],\label{eq:red2}\\
L=3:&\quad -\frac{\Pc_l}{(l+1)(l+2)(2l+1)(2l+3)}+\frac{\Pc_{l-2}}{l(l-1)(2l-1)(2l+1)}-2\,\alpha_l^2 ,\label{eq:red3}
\end{align}
and on the block $(l+2,l)$, divided by $A_{l+2,l}$,
\begin{equation}\label{eq:redoff}
L=2:\ \ \frac{1}{\sqrt{21}}\big[l(2l-1)\,\alpha_l+(l+3)(2l+7)\,\alpha_{l+2}\big],\qquad
L=3:\ \ \sqrt{\frac{l(l+3)}{12}}\,\big[(2l-1)\,\alpha_l-(2l+7)\,\alpha_{l+2}\big].
\end{equation}
The reduced elements of $\J\cdot A=-\sqrt{5/3}\,[J\otimes A]^{(1)}$ and of $[J\otimes A]^{(3)}$ are, on $(l,l)$ and divided by $j_1(l)$, resp.\ $j_3(l)$, $\tfrac16(2l-1)(2l+3)\,\alpha_l$ and $\alpha_l$. On $(l+2,l)$, $\J\cdot A$ vanishes and $[J\otimes A]^{(3)}$ is $\sqrt{l(l+3)/3}\;A_{l+2,l}$. Finally $\{\J^2,A\}$ has reduced elements $(\lambda_l+\lambda_{l'})A_{l'l}$ and $J^{(2)}$ has $j_2(l)\delta_{l'l}$.
\end{lemma}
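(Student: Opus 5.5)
The plan is to apply the recoupling identity \eqref{eq:recoupling} to each product $[A\otimes A]^{(L)}$, $[J\otimes A]^{(L)}$, with the intermediate angular momentum $l''$ running over the blocks allowed by the selection rule $l''\in\{l,l\pm2\}$. On a diagonal block $(l,l)$, the reduced element of $[A\otimes A]^{(L)}$ is
\[
(-1)^{L}\sqrt{2L+1}\sum_{l''}\sqrt{2l''+1}\,\sixj{l}{2}{l''}{2}{l}{L}A_{l,l''}A_{l'',l}.
\]
The term $l''=l$ contributes $A_{ll}^2=j_2(l)^2\alpha_l^2$. The terms $l''=l+2$ and $l''=l-2$ contribute the phase-independent products $A_{l,l+2}A_{l+2,l}=\Pc_l/\sqrt{(2l+1)(2l+5)}$ and $A_{l,l-2}A_{l-2,l}=\Pc_{l-2}/\sqrt{(2l-3)(2l+1)}$. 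The factor $\sqrt{2l''+1}$ then cancels the square roots in the definition \eqref{eq:alphaP}, and this is why $\Pc_l$ was normalised that way.

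For each $L=0,1,2,3$ I need the three symbols $\sixj{l}{2}{l''}{2}{l}{L}$, $l''=l,l\pm2$, in closed form (Appendix~\ref{app:6j}), together with the normalisation $j_L(l)$. I take $j_L(l)$ to be the reduced element of $J^{(L)}$ on $(l,l)$, fixed by $J^{(1)}=\J$ and the coupling definitions; its explicit form is a product of $\sqrt{\lambda_l}$-type factors and ratios of $(2l+1\pm k)$. Dividing by $j_L(l)$ makes every square root cancel, leaving the rational coefficients claimed. The diagonal coefficient can be cross-checked at once: for $A=J^{(2)}$ one has $\alpha_l=1$ and $\Pc=0$, so each formula must reproduce the rank-$L$ part of $[J^{(2)}\otimes J^{(2)}]$ from Lemma~\ref{lem:so3}(iii). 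For instance, \eqref{eq:red3} gives $-2$, in agreement with $[J^{(2)}\otimes J^{(2)}]^{(3)}=-2J^{(3)}$. The $L=2$ check gives $-\tfrac{2}{\sqrt{21}}(\lambda_l-\tfrac{15}{4})j_2$, matching $-\sqrt{3/7}(2l-3)(2l+5)/6$ times the ratio $j_2^2/j_2$. These checks fix the normalisation of the $\alpha_l^2$ terms independently of sign conventions.

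For the off-diagonal block $(l+2,l)$ there are only two intermediate paths, $l\to l\to l+2$ and $l\to l+2\to l+2$. The reduced element is therefore $A_{l+2,l}$ times a combination of $\alpha_l j_2(l)$ and $\alpha_{l+2}j_2(l+2)$, weighted by $\sixj{l}{2}{l}{2}{l+2}{L}$ and $\sixj{l}{2}{l+2}{2}{l+2}{L}$. Evaluating these two symbols for $L=2,3$ and simplifying gives \eqref{eq:redoff}. The mixed products $[J\otimes A]^{(L)}$ are handled the same way with one rank-1 factor, whose only nonzero block is diagonal with element $j_1(l)$. Thus there is a single path on each block, and one obtains $\tfrac16(2l-1)(2l+3)\alpha_l$ for $\J\cdot A$ and $\alpha_l$ for $[J\otimes A]^{(3)}$ on $(l,l)$. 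On $(l+2,l)$, the rank-1 coupling of $1\otimes2$ cannot change $l$ by $2$, since the $6j$ symbol $\sixj{l}{2}{l''}{1}{l+2}{1}$ violates the triangle condition; hence $\J\cdot A$ vanishes there. The rank-3 element on that block is $\sqrt{l(l+3)/3}\,A_{l+2,l}$. The last two statements are immediate: $\J^2$ is diagonal with eigenvalue $\lambda$ on each side, and $J^{(2)}$ has reduced element $j_2(l)\delta_{l'l}$ by definition of $j_2$.

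The only real obstacle is bookkeeping. This means keeping the phase conventions of \eqref{eq:recoupling} ($B$ acting first, the factor $(-1)^{L+l+l'}$) consistent with the paper's normalisation of $j_L$, so that the signs of the $\Pc_l$ versus $\Pc_{l-2}$ terms in \eqref{eq:red1} and \eqref{eq:red3} come out correctly. I would fix these signs with two further independent checks. The first is $A=T$ on a physical level, where \eqref{eq:transv}, \eqref{eq:TT1} and the closed forms \eqref{eq:Pcl}, \eqref{eq:alphal} must be consistent. The second is the observation that the $L=1,3$ parts of $[A\otimes A]$ are commutator parts, so at $\Pc_l=\Pc_{l-2}=0$ they must reduce to the $J^{(2)}\otimes J^{(2)}$ values already checked. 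The remaining algebra is rational simplification, best verified by computer algebra as the paper states.
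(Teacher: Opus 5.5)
Your proposal follows the paper's own argument. You evaluate the recoupling identity \eqref{eq:recoupling} over the intermediate blocks $l''\in\{l,l\pm2\}$, with two paths on $(l+2,l)$ and a single path for $[J\otimes A]^{(L)}$, using the closed-form $6j$ symbols of Appendix~\ref{app:6j}, and divide by $j_L(l)$. The signs of the $\Pc_l$ and $\Pc_{l-2}$ terms come straight out of the signed symbols, so your extra sign checks are unnecessary; the check at $\Pc_l=\Pc_{l-2}=0$ could not fix those signs anyway.

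The one point you leave implicit is half-integer $l$. There the paper notes that both the symbols and the phase $(-1)^{l+l'}$ you dropped change sign, so your formula with the integer-$l$ closed forms still holds.
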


The closed forms of Appendix~\ref{app:6j} are those of integer $l$, while for half-integer $l$ every symbol used here changes sign, as does the phase $(-1)^{l+l'}$ of \eqref{eq:recoupling}, so that the lemma holds for all $l$. In terms of $s=l+\tfrac32$ and of the notation
\begin{equation}\label{eq:Ps}
\Pc(s):=\Pc_l=\frac{8\,(s^2-(q-\tfrac12)^2)(s^2-(q+\tfrac12)^2)(H^2-s^2)(s^2-\tfrac14)}{s},\qquad \alpha'(s):=\alpha'_l=\frac{(4q^2-1)H}{2s(s-2)},
\end{equation}
so that $\Pc_{l-2}=\Pc(s-2)$, $\lambda_l=s(s-2)+\tfrac34$, $\Aa_l=s(s-2)+\tfrac74-q^2$, the relations of Theorem~\ref{thm:Tp} become the following identities of rational functions of $(s,H,q)$, which can be checked by direct expansion:
\begin{align}
&\text{\eqref{eq:transvp}:}\quad \tfrac16(2l-1)(2l+3)\,\alpha'_l=\tfrac13(4q^2-1)H ;\notag\\
&\text{\eqref{eq:TT3p}, }(l+2,l):\quad (2l-1)\,\alpha'_l=(2l+7)\,\alpha'_{l+2};\notag\\
&\text{\eqref{eq:TT2p}, }(l+2,l):\quad l(2l-1)\,\alpha'_l+(l+3)(2l+7)\,\alpha'_{l+2}=2(4q^2-1)H;\notag\\
&\text{\eqref{eq:TT3p}, }(l,l):\quad -\frac{\Pc(s)}{s(s-1)(2s-1)(2s+1)}+\frac{\Pc(s-2)}{(s-1)(s-2)(2s-3)(2s-5)}-2\alpha'(s)^2\notag\\
&\hspace{7cm}=8\big(2\Aa_l-H^2\big);\label{eq:id3}\\
&\text{\eqref{eq:TT2p}, }(l,l):\quad 3\Big[\frac{\Pc(s)}{s(s-1)(2s-1)}+\frac{\Pc(s-2)}{(s-1)(s-2)(2s-3)}-\frac{2(s-3)(s+1)}{3}\,\alpha'(s)^2\Big]\notag\\
&\hspace{7cm}=2(4q^2-1)H\,\alpha'(s)+4p_2;\label{eq:id2}\\
&\text{\eqref{eq:TT1p}:}\quad -\frac{\Pc(s)}{(s-1)(2s-1)}+\frac{\Pc(s-2)}{(s-1)(2s-3)}+s(s-2)\,\alpha'(s)^2=3f_1;\label{eq:id1}\\
&\text{\eqref{eq:trTp2}:}\quad \frac{\Pc(s)+\Pc(s-2)}{2(s-1)}+\tfrac23\,\lambda_l\,s(s-2)\,\alpha'(s)^2=\operatorname{tr}T'^2 ,\label{eq:id0}
\end{align}
where $f_1$, $p_2$ and $\operatorname{tr}T'^2$ are the right sides of \eqref{eq:TT1p}, \eqref{eq:TT2p}, \eqref{eq:trTp2} with $\J^2=\lambda_l$. The first three are immediate from \eqref{eq:alphal}. In \eqref{eq:id3} both sides are polynomials of degree $2$ in $H$ once the left side is brought to a common denominator, and similarly for the others, and in each case the two sides agree term by term. Theorem~\ref{thm:T} is obtained in the same way with $\alpha_l=\alpha'_l-2(H-1)$ in place of $\alpha'_l$, the same $\Pc$, and the right sides of Theorem~\ref{thm:T}: the blocks $(l+2,l)$ read $(2l-1)\alpha_l-(2l+7)\alpha_{l+2}=16(H-1)$ and $l(2l-1)\alpha_l+(l+3)(2l+7)\alpha_{l+2}=4\big[(1-H)(\lambda_l+\lambda_{l+2})+\tfrac92-(5-2q^2)H\big]$, the transversality is $\tfrac16(2l-1)(2l+3)\alpha_l=-\tfrac23(2\Cc_l+H\Fi_l)$, and the diagonal blocks are \eqref{eq:id3}--\eqref{eq:id0} with $\alpha'$ replaced by $\alpha$ and the right sides replaced by $8(H-1)\alpha_l+8\Fi_l$, $4\big[2(1-H)\lambda_l\alpha_l+(\tfrac92-(5-2q^2)H)\alpha_l+24\Cc_l-\tfrac32\Fi_l(\Fi_l+12)\big]$, $3\chi$ and $\operatorname{tr}T^2$. All are identities in $(s,H,q)$, as they must be by Theorem~\ref{thm:reorder}. The $6j$ symbols and the reduced elements depend on $q$ only through $q^2$ and polynomially, and on $l$ through rational functions and the compensating signs mentioned above. At $q^2=\tfrac14$, $T'$ is transverse, $\J\cdot T'=0$, and $T'_{zz}$ has no diagonal, $a_5=a_8=0$.

\subsection{Sectors: the Heun operators and the Heun--Racah relations}\label{sec:proofsect}
In the coupled basis, with $\varkappa=k_1+k_2+j$ and $M=k_1+k_2+N$, the Hahn partner $Y=K_0^{(2)}$ has diagonal and off-diagonal products \cite{VdJ}
\begin{gather}\label{eq:hahn}
Y_{jj}=M\,\frac{\varkappa(\varkappa-1)-k_1(k_1-1)+k_2(k_2-1)}{2\varkappa(\varkappa-1)},\\
Y_{j,j+1}Y_{j+1,j}=\frac{(j+1)(j+2k_1+2k_2-1)(j+2k_1)(j+2k_2)(N-j)(N+j+2k_1+2k_2)}{(2\varkappa)^2(2\varkappa-1)(2\varkappa+1)} ,\nonumber
\end{gather}
the first by the projection theorem applied to $K_0^{(1)}=K_0-K_0^{(2)}$. At $j=0$ the first formula reduces to $Y_{00}=Mk_2/(k_1+k_2)$, which is the value to be used when $k_1+k_2=1$. With the identifications of Proposition~\ref{prop:heunhahn}, $2\varkappa=s=l+\tfrac32$, $2M=H$, and the six factors of the products are $(s^2-(l_0-\tfrac12)^2)(s^2-4(k_1-k_2)^2)(H^2-s^2)/64$ over $s^2(s-1)(s+1)$, while \eqref{eq:5pairs} is the product of the five factors $s^2-c_i^2$ over the same denominator. Hence the ratio of the off-diagonal products of $T'_{zz}$ to those of $Y$ is $64(s^2-a^2)(s^2-b^2)$. For a bilinear operator $\tau_1\J^2Y+\tau_2Y\J^2+\tau_3\J^2+\tau_4Y+\tau_0$ this ratio is $(\tau_1\lambda_l+\tau_2\lambda_{l+2}+\tau_4)(\tau_1\lambda_{l+2}+\tau_2\lambda_l+\tau_4)=[(\tau_1+\tau_2)(s^2+\tfrac34)+\tau_4]^2-4(\tau_2-\tau_1)^2s^2$, and the two expressions agree exactly for the values of $\tau_1+\tau_2$, $\tau_2-\tau_1$, $\tau_4$ given in \eqref{eq:heunhahn}. For the diagonal, write $c=(l_0-\tfrac12)\,2|k_1-k_2|$, so that $Y_{ll}=\tfrac H4\big[1+\eta\,c/s(s-2)\big]$ with $\eta=\varepsilon$, by \eqref{eq:hahn}, and $\lambda_l=s(s-2)+\tfrac34$, $(T'_{zz})_{ll}=-(4q^2-1)H\big(s(s-2)+\tfrac34-3m^2\big)/6s(s-2)$ by \eqref{eq:dl}. The difference $(T'_{zz})_{ll}-8\lambda_lY_{ll}+2H\lambda_l-\tau_4Y_{ll}$ is therefore
\[
-\frac{H}{s(s-2)}\Big[\frac{(4q^2-1)\big(s(s-2)+\tfrac34-3m^2\big)}{6}+\eta c\,\big(2s(s-2)+\tfrac32+\tfrac{\tau_4}{4}\big)\Big]-\frac{\tau_4H}{4},
\]
and it is independent of $l$ if and only if the terms of degree zero in $s(s-2)$ inside the bracket cancel, i.e.\ $(4q^2-1)(4m^2-1)/8=\eta c\,(\tfrac32+\tfrac{\tau_4}4)=2\eta\varepsilon\,c\,ab$, using $\tau_4=-6+8\varepsilon ab$. Since the product of the four numbers $l_0-\tfrac12$, $2|k_1-k_2|$, $a$, $b$ is $(4m^2-1)(4q^2-1)/16$, this holds exactly when $\eta=\varepsilon$, and then the constant is $\tau_0=-\tfrac H4\big[\tau_4+\tfrac23(4q^2-1)+8\eta c\big]$. A diagonal $D$ with $D_l/D_{l+2}=(T'_{zz})_{l,l+2}/[(\tau_1\lambda_l+\tau_2\lambda_{l+2}+\tau_4)Y_{l,l+2}]$ completes the proof of Proposition~\ref{prop:heunhahn}.

Theorem~\ref{thm:HR} has been derived from Theorem~\ref{thm:Tp} in the proof of Proposition~\ref{prop:status}. In the sector it can also be read directly on the matrices, which is how the coefficients were first obtained. Let $A=\J^2$, $B=T'_{zz}$ in the sector, with $B_{ll}=d_l$ given by \eqref{eq:dl}, $x_l=B_{l,l+2}B_{l+2,l}$ given by \eqref{eq:5pairs}, and $C=[A,B]$, so that $C_{ll'}=(\lambda_l-\lambda_{l'})B_{ll'}$. Relation \eqref{eq:AC} has entries $(\lambda_l-\lambda_{l'})^2B_{ll'}$ on the left: on the blocks $l'=l\pm2$ it is the identity $(\lambda_{l+2}-\lambda_l)^2=8(\lambda_l+\lambda_{l+2})-12$ of the quadratic grid, and on the diagonal it reads $(16\lambda_l-12)d_l+a_5\lambda_l+a_8=0$, which is \eqref{eq:dl}. Relation \eqref{eq:BC} has entries $\sum_{l''}B_{ll''}B_{l''l'}(2\lambda_{l''}-\lambda_l-\lambda_{l'})$ on the left: on the blocks $l'=l\pm4$ it is $2\lambda_{l+2}-\lambda_l-\lambda_{l+4}=-8$, which fixes the coefficient of $B^2$. On the blocks $l'=l\pm2$ it is $(\lambda_{l+2}-\lambda_l)(d_{l+2}-d_l)=-8(d_l+d_{l+2})-a_5$, an identity for \eqref{eq:dl}. On the diagonal it is the first-order recurrence
\begin{equation}\label{eq:rec}
8(s-3)\,x(s-2)-8(s+1)\,x(s)-8d(s)^2-a_5d(s)=128\lambda^3-a_3\lambda^2-a_6\lambda-a_9,\qquad \lambda=s(s-2)+\tfrac34,
\end{equation}
where $x(s)=x_l$, $d(s)=d_l$ with $s=l+\tfrac32$, and \eqref{eq:5pairs} reads $x(s)=(H^2-s^2)\prod_{c}(s^2-c^2)/s^2(s^2-1)$ over $c\in\{M_\pm\pm\tfrac12\}$. The left side of \eqref{eq:rec} is a rational function of $s$ whose denominator $s^2(s-2)^2$ cancels, and it is a polynomial of degree $3$ in $\lambda$ with leading coefficient $128$, while its three remaining coefficients are the values $a_3$, $a_6$, $a_9$ of Theorem~\ref{thm:HR}, in agreement with the reduction of Proposition~\ref{prop:status}.

\subsection{Representations}\label{sec:proofreps}
For Theorem~\ref{thm:reps}, the transversality \eqref{eq:transvp} and Lemma~\ref{lem:red} give $\alpha_l=\alpha'_l$, i.e.\ $R_{ll}$ as stated, for every $l\ge1$. For $l\in\{0,\tfrac12\}$ the block $(l,l)$ of a rank-2 tensor vanishes. The diagonal blocks of \eqref{eq:TT3p} and \eqref{eq:TT2p} are \eqref{eq:id3} and \eqref{eq:id2}, linear in $\Pc_l$ and $\Pc_{l-2}$ with the coefficients displayed in \eqref{eq:red3} and \eqref{eq:red2}. Their determinant,
\begin{align*}
&-\frac{1}{(l+1)(l+2)(2l+1)(2l+3)}\,\frac{2}{l(2l-1)(2l+1)}\\
&\quad-\frac{1}{l(l-1)(2l-1)(2l+1)}\,\frac{2}{(l+1)(2l+1)(2l+3)}\\
&=-\frac{2}{l(l-1)(l+1)(l+2)(2l-1)(2l+1)(2l+3)}\,,
\end{align*}
does not vanish for $l\ge2$. Hence, going down a chain from its top $l_1$, where $\Pc_{l_1}=0$ since there is no state above, the two relations determine $\Pc_{l_1-2}$, then $\Pc_{l_1-4}$, and so on: $\Pc_l=\Pc(l)$ for every $l$ in the chain, since \eqref{eq:Pcl} is the unique solution, and $\Pc(l_1)=0$ is required. By \eqref{eq:Pcl}, $\Pc(l_1)=0$ if and only if $l_1\in\{k,\,|q|-1,\,|q|-2\}$ (the factors $k-l$, $(l+1)^2-q^2$, $(l+2)^2-q^2$). At the bottom $l_0$ of the chain the relations on the block $(l_0,l_0)$ are the identities \eqref{eq:id3}--\eqref{eq:id0} without the terms in $\Pc(l_0-2)$, since there is no state below. As the identities hold with these terms, the relations hold if and only if the terms vanish. For $l_0\ge\tfrac32$ the coefficients of $\Pc(l_0-2)$ in \eqref{eq:red0}--\eqref{eq:red3} are finite and nonzero, and
\[
\Pc(l_0-2)=\frac{16\,l_0(l_0-1)\,(l_0^2-q^2)\,((l_0-1)^2-q^2)\,(k-l_0+2)(k+l_0+1)}{2l_0-1}
\]
vanishes if and only if $l_0\in\{|q|,\,|q|+1,\,k+2\}$ or $l_0\in\{0,1\}$, so in particular $l_0=\tfrac32$ requires $|q|\in\{\tfrac12,\tfrac32\}$, i.e.\ a physical chain. For $l_0\in\{0,1\}$ some blocks are absent, those with $L\ge1$ at $l_0=0$ and the rank-3 block at $l_0=1$ ($j_3(1)=0$), and the remaining relations are checked directly below. For $l_0=\tfrac12$, finally, $T'$ has no block $(\tfrac12,\tfrac12)$, so that $\J\cdot T'$ vanishes there while $\tfrac13(4q^2-1)H\J$ does not unless $q^2=\tfrac14$: the transversality excludes $l_0=\tfrac12$ except for the physical chain of $|q|=\tfrac12$. For completeness, the scalar relation on the singlet $\J=T'=0$ reads $0=\tfrac83(q^2-1)(q^2-4)(4H^2-9)$. Thus $|q|=1,2$ admit the small singlet, while $q=0$ has the physical singlet at $H=\tfrac32$. On a single top-spin $l=1$ block the residual scalar and rank-1/rank-2 equations contain the factor $(4H^2-25)(q^2-4)(q^2-9)$. The top-spin $l=\tfrac12$ check yields $q^2=\tfrac14$, $H=2$, and the rank equations likewise cover top spin $l=\tfrac32$ without a singular division. These direct checks complete the argument at the lowest angular momenta.\par

The remaining relations, on the blocks $(l+2,l)$ and the diagonal blocks with $L=0,1$, are then identities for the admissible chains. Collecting: the finite chains $\{l_0,l_0+2,\dots,l_1\}$ carrying a representation are those with $l_0\in\{|q|,|q|+1\}$ or $l_0\in\{0,1\}$ and $l_1\in\{k,|q|-1,|q|-2\}$, subject to $l_0\le l_1$ and $l_1-l_0$ even, and the only infinite chain is $\{k+2,k+4,\dots\}$, whose bottom $l_0=k+2$ is the remaining zero of $\Pc(l_0-2)$ and which has no top. With the condition $P_l>0$ along the chain for the irreducible hermitian ones, this gives the physical levels for every $q$, the chains $\{0\text{ or }1,\dots,|q|-2\text{ or }|q|-1\}$ for integer $|q|\ge2$, the singlet $\{0\}$ at $|q|=1$, none of the second kind for half-integer $q$, and the non-unitary chain above $k$, where $\Pc(l)<0$. As in Theorem~\ref{thm:reps}, the classification is made at the energy of the level $k$ and for angular momenta of its parity.

\section{Concluding remarks}\label{sec:conclusion}

The charged oscillator in the field of a monopole has, as symmetry algebra, the $\so(3)$-covariant extension of a Heun--Racah algebra. The algebra is generated by the angular momentum and by the rank-2 tensor $T'$, its relations are polynomial in $\J$ with coefficients in $H$ and $q^2$, and it is not quadratic. In each level and each sector, $T'_{zz}$ is an algebraic Heun operator of a Hahn pair whose first member is $\J^2$, the pair $(\J^2,T'_{zz})$ generates the cubic algebra $\mathcal A_{\rm gen}$ with $a_0=0$, and at $q=0$ the tensor is the algebraic Heun operator of the pair (Casimir of $\so(3)$, quadrupole of $\su(3)$). The monopole charge enters only through the parameters, and symmetrically with the magnetic quantum number after the additive scalar shift $T'_{zz}\mapsto T'_{zz}+\tfrac23Hq^2\mathbf1$, a symmetry realised locally on the wave functions. The representations of the algebra reproduce the levels of the system and reveal, for integer charges, the representations carried by the states that the monopole removes, which are the kernel of a power of the Newman--Penrose operator. In the sectors, the integrals obtained by the recurrence approach in \cite{HMZ2018} and their deformed-oscillator algebra are recovered, up to normalisation, as the parts of $T'_{zz}$ that raise and lower the angular momentum, and the algebra is identified. The cubic Casimir of the sector algebra is also obtained. Everything rests on the dynamical algebra of the ladder operators, which is established by operator identities, the mechanism being that the centrifugal term of McIntosh and Cisneros makes the anomalous commutators proportional to $\J$. The action of the ladder operators and the covariant relations of the symmetry algebra follow from that algebra alone, the latter by reordering, and the Heun--Racah relations of the sectors follow from the covariant relations.

The questions left open are of two kinds. On the algebraic side, the presentation of $\Alg(H,q)$ is known to be compatible with the representations exhibited here but not to be complete, and the existence of a Poincar\'e--Birkhoff--Witt basis, as well as the representations with multiplicities or of mixed parity, has not been examined. The derivation of Section~\ref{sec:reorder} shows that the relations of Theorem~\ref{thm:T} are consequences of Proposition~\ref{prop:dyn} alone, and one may ask whether they generate all the relations of the symmetry algebra that follow from it. On the side of the model, the classical limit, where the Poisson brackets of the corresponding functions should give a Poisson analogue of the Heun--Racah structure, and the Cartesian form of the rank-3 relation \eqref{eq:TT3} remain to be worked out. The occurrence, in a given sector, of several Hahn pairs of which $T'_{zz}$ is simultaneously a Heun operator, related by rational gauges, deserves to be understood. The system admits a finite version in which the radial variable lives on a quadratic lattice, with the same symmetry algebra, which will be presented separately.

\section*{Data availability}
No data were generated. The computer algebra scripts with which the identities of Section~\ref{sec:proofs} were checked are available from the corresponding author upon request.

\section*{Conflict of interest}
The authors have no conflicts of interest to disclose.

\section*{Acknowledgements}
The research of L.V. is supported in part by a Discovery Grant from the Natural Sciences and Engineering Research Council (NSERC) of Canada. IM was supported by the Australian Research Council Future Fellowship FT180100099. It is a pleasure for L.V. to thank Simon Labelle and Michel Mayrand, co-authors of \cite{LMV}, where the question settled here was first posed. An AI assistant (Claude, Anthropic) was used in the derivation and verification of the identities of Section~\ref{sec:proofs} and in the drafting of the text; the authors have verified all results and take full responsibility for the content.

\appendix
\section{Conventions and closed forms}\label{app:6j}
Clebsch--Gordan coefficients are those of Condon and Shortley. For a vector operator, $\langle l\,m\,1\,0|l+1\,m\rangle=\sqrt{\dfrac{(l+1-m)(l+1+m)}{(2l+1)(l+1)}}$ and $\langle l\,m\,1\,0|l-1\,m\rangle=-\sqrt{\dfrac{(l-m)(l+m)}{l(2l+1)}}$. For a rank-2 tensor,
\begin{gather*}
\langle l\,m\,2\,0|l+2\,m\rangle^2=\frac{3(l+1-m)(l+2-m)(l+1+m)(l+2+m)}{2(2l+1)(l+1)(2l+3)(l+2)},\\
\langle l+2\,m\,2\,0|l\,m\rangle=\sqrt{\frac{2l+1}{2l+5}}\,\langle l\,m\,2\,0|l+2\,m\rangle ,\qquad
\langle l\,m\,2\,0|l\,m\rangle=\frac{3m^2-l(l+1)}{\sqrt{(2l-1)\,l(l+1)\,(2l+3)}} .
\end{gather*}
The reduced matrix elements of $J^{(L)}$ in the convention $\langle l'm'|X^{(L)}_M|lm\rangle=\langle l\,m\,L\,M|l'm'\rangle X_{l'l}$ are
\begin{gather*}
j_1(l)=\sqrt{l(l+1)},\qquad j_2(l)=\sqrt{\frac{(2l-1)\,l(l+1)\,(2l+3)}{6}},\\
j_3(l)=(-1)^{2l+3}\sqrt{7(2l+1)}\,\sixj{l}{2}{l}{1}{l}{3}\,j_1(l)\,j_2(l)=\sqrt{\frac{(l-1)\,l(l+1)\,(l+2)\,(2l-1)(2l+3)}{10}} .
\end{gather*}
The $6j$ symbols used in Section~\ref{sec:proofs} all have an entry equal to $1$ or $2$ and are given by the general formulas of \cite[Table 5]{Edmonds}. For integer $l$ they read, with $\lambda=l(l+1)$,
\begin{small}
\begin{gather*}
\sixj{l}{1}{l+1}{1}{l+2}{2}=\frac{1}{\sqrt{5(2l+3)}},\quad \sixj{l}{1}{l+1}{1}{l}{2}=\sqrt{\frac{l(2l-1)}{30(l+1)(2l+1)(2l+3)}},\\
\sixj{l}{1}{l-1}{1}{l}{2}=\sqrt{\frac{(l+1)(2l+3)}{30\,l(2l-1)(2l+1)}},\quad \sixj{l}{1}{l+1}{1}{l}{1}=\sqrt{\frac{l}{6(l+1)(2l+1)}},\\
\sixj{l}{1}{l-1}{1}{l}{1}=-\sqrt{\frac{l+1}{6\,l(2l+1)}},\quad \sixj{l}{2}{l}{1}{l}{1}=\sqrt{\frac{(2l-1)(2l+3)}{30\,\lambda(2l+1)}},\\
\sixj{l}{2}{l}{1}{l}{3}=-\sqrt{\frac{3(\lambda-2)}{35\,\lambda(2l+1)}},\quad \sixj{l}{2}{l+2}{1}{l+2}{3}=-\sqrt{\frac{l}{21(l+2)(2l+5)}},\\
\sixj{l}{2}{l''}{2}{l}{0}=\frac{1}{\sqrt{5(2l+1)}}\ \ (l''=l,l\pm2),\quad \sixj{l}{2}{l+2}{2}{l}{1}=\sqrt{\frac{2l}{15(l+1)(2l+1)}},\\
\sixj{l}{2}{l}{2}{l}{1}=-\sqrt{\frac{3}{10\,\lambda(2l+1)}},\quad \sixj{l}{2}{l-2}{2}{l}{1}=-\sqrt{\frac{2(l+1)}{15\,l(2l+1)}},\\
\sixj{l}{2}{l+2}{2}{l}{2}=\sqrt{\frac{2l(2l-1)}{35(l+1)(2l+1)(2l+3)}},\quad \sixj{l}{2}{l}{2}{l}{2}=-\frac{(2l-3)(2l+5)}{\sqrt{70\,\lambda(2l-1)(2l+1)(2l+3)}},\\
\sixj{l}{2}{l-2}{2}{l}{2}=\sqrt{\frac{2(l+1)(2l+3)}{35\,l(2l-1)(2l+1)}},\quad \sixj{l}{2}{l+2}{2}{l}{3}=\sqrt{\frac{(l-1)l(2l-1)}{70(l+1)(l+2)(2l+1)(2l+3)}},\\
\sixj{l}{2}{l}{2}{l}{3}=6\sqrt{\frac{2(\lambda-2)}{35\,\lambda(2l-1)(2l+1)(2l+3)}},\\
\sixj{l}{2}{l-2}{2}{l}{3}=-\sqrt{\frac{(l+1)(l+2)(2l+3)}{70\,l(l-1)(2l-1)(2l+1)}},\quad
\sixj{l}{2}{l+2}{2}{l+2}{3}=\sqrt{\frac{l(2l+7)}{14(l+2)(2l+3)(2l+5)}},\\
\sixj{l}{2}{l+2}{2}{l+2}{2}=\sqrt{\frac{2(l+3)(2l+7)}{35(l+2)(2l+3)(2l+5)}},\qquad \sixj{l}{2}{l}{2}{l+2}{3}=-\sqrt{\frac{(l+3)(2l-1)}{14(l+1)(2l+1)(2l+3)}}\,,
\end{gather*}
\end{small}
and $\sixj{l}{2}{l}{2}{l+2}{2}=\sixj{l}{2}{l+2}{2}{l}{2}$. For half-integer $l$ each of these symbols is multiplied by $-1$. All closed forms are used only for admissible triangles. A forbidden tensor block is set to zero before evaluating a formula with a vanishing denominator.
The components $b^\dagger_{\pm1}=\mp(b^\dagger_x\pm ib^\dagger_y)/\sqrt2$ act by $\langle n',l',m\pm1|b^\dagger_{\pm1}|n,l,m\rangle=\langle l\,m\,1\,\pm1|l'\,m\pm1\rangle\,\rho_{(n',l'),(n,l)}$ with $\rho$ as in \eqref{eq:rho}, which reproduces eqs.~(4.10)--(4.11) of \cite{LMV}.

\end{document}